\newtheorem{theo}{Theorem}
\newtheorem{pro}{Proposition}[section]
\newtheorem{lem}[pro]{Lemma}
\newtheorem{coro}[pro]{Corollary}
\newtheorem{remark}[pro]{Remark}
\newtheorem{defi}[pro]{Definition}
\def\Xint#1{\mathchoice
   {\XXint\displaystyle\textstyle{#1}}%
   {\XXint\textstyle\scriptstyle{#1}}%
   {\XXint\scriptstyle\scriptscriptstyle{#1}}%
   {\XXint\scriptscriptstyle\scriptscriptstyle{#1}}%
   \!\int}
\def\XXint#1#2#3{{\setbox0=\hbox{$#1{#2#3}{\int}$}
     \vcenter{\hbox{$#2#3$}}\kern-.5\wd0}}
\def\dashint{\Xint-}
\newcommand{\ed}[1]{{\color{black}{#1}}}
\DeclareMathOperator{\supp}{Supp}
\DeclareMathOperator{\dist}{dist}
\DeclareMathOperator{\diam}{diam}
\def\lb{\langle}
\def\rb{\rangle}
\def\tP{{\widetilde P}}
\def\({\left(}
\def\){\right)}
\def\1{\mathbf{1}}
\def\a{{\alpha}}
\def\bm{{m_0}}
\def\bnun{{\bar \nu_n}}
\def\bgn{{\bar g_n}}
\def\bjn{{\bar \j_n}}
\def\D{\displaystyle}
\def\B{{\mathcal{B}}}
\def\curl{{\rm curl\,}}
\def\diam{\mathrm{diam}\ }
\def\div{\mathrm{div} \ }
\def\dt0{{{\frac{d}{dt}}_{|t=0}}}
\def\D{\displaystyle}
\def\E{\Sigma}
\def\ep{\varepsilon}
\def\f{{\mathbf f}}
\def\F{{F_n}}
\def\gF{{\mathbf F}}
\def\hal{\frac{1}{2}}
\def\I{{I}}
\def\indic{\mathbf{1}}
\def\indic{\mathbf{1}}
\def\j{E}
\def\jnint{{\j_n^{\rm int}}}
\def\K{K_n^\beta}
\def\cK{{\mathcal K}}
\def\LK{{\mathcal L_K}}
\def\LKk{{\mathcal L_{K,i}}}
\def\LKo{{\mathcal L_{K,0}}}
\def\Lj{{\Gamma(\j)}}
\def\Lji{{\Gamma(J_i)}}
\def\lep{{|\mathrm{log }\ \ep|}}
\def\loc{{\text{\rm loc}}}
\def\Lp{{L^p_\loc(\mr^2,\mr^2)}}
\def\l|{\left|}
\def\wb{{\overline{m}}}
\def\bw{{\underline{m}}}
\def\mc{\mathbb{C}}
\def\mn{\mathbb{N}}
\def\mn{\mathbb{N}}
\def\mr{\mathbb{R}}
\def\mo{{\mu_0}}
\def\mz{\mathbb{Z}}
\def\nab{\nabla}
\def\np{\nab^{\perp}}
\def\om{\Omega}
\def\P{{\mathscr P}}
\def\p{\partial}
\def\Q{{\mathbb{P}_n^\beta}}
\def\wQ{{\widetilde{\mathbb{P}}_n^\beta}}
\def\radon{\mathcal{M}}
\def\ro{\rho}
\def\r|{\right|}
\def\sd{\bigtriangleup}
\def\sm{\setminus}
\def\supp{\text{Supp}}
\def\Rb{{\overline R}}
\def\tq{{\frac34}}
\def\T{{\mathbb{T}}}
\def\UR{{\mathbf U_R}}
\def\vp{\varphi}
\def\vu{{\vec u}}
\def\vv{{\vec v}}
\def\w{{w_n}}
\def\W{\mathbb{W}}
\def\x{{\zeta}}
\def\z{\zeta}
\def\Z{Z_n^\beta}
 \numberwithin{equation}{section}
\title{2D Coulomb Gases and the Renormalized Energy}
\author{Etienne Sandier and Sylvia Serfaty}
\date{Second version, March 2013}
\begin{document}

\maketitle

\begin{abstract}
We  study the statistical mechanics of classical two-dimensional  ``Coulomb gases" with general potential and arbitrary $\beta$, the inverse of the temperature. Such ensembles also correspond to  random matrix models in some particular cases. The formal limit case $\beta=\infty$ corresponds to ``weighted Fekete sets" and also falls within our analysis.

 It is known that in such a system points should  be asymptotically distributed according to a macroscopic ``equilibrium measure," and that a large deviations principle holds for this, as proven by Ben Arous and Zeitouni  \cite{bz}.

 By a  suitable splitting of the Hamiltonian, we connect the problem to the ``renormalized energy" $W$, a Coulombian interaction for points in the plane   introduced in \cite{ss1}, which is expected to  be a good way of measuring the disorder of an infinite configuration of points in the plane.  By so doing, we are able to  examine the situation at the microscopic scale, and obtain several new results: a next order asymptotic expansion of the partition function, estimates on the probability  of fluctuation from the equilibrium measure  at microscale, and a large deviations type result, which states that configurations above a certain threshhold of  $W$ have exponentially small probability. When $\beta\to \infty$, the estimate becomes sharp,  showing that the system has to  ``crystallize" to a minimizer of $W$. In the case of weighted Fekete sets, this corresponds to saying that these sets should microscopically look almost everywhere like minimizers of $W$, which are conjectured to be ``Abrikosov" triangular lattices.

\end{abstract}

\noindent
{\bf keywords: } Coulomb gas, one-component plasma, random matrices, Ginibre ensemble, Fekete sets, Abrikosov lattice, triangular lattice, renormalized energy, large deviations, crystallization.
\\
{\bf MSC classification:} 82B05,  82D10, 82D99, 15B52
\section{Introduction}

We are interested in studying the probability law
\begin{equation}
\label{loi}
d\Q(x_1, \dots, x_n)=   \frac{1}{\Z} e^{-\frac{\beta}{2} \w (x_1, \dots, x_n)}dx_1\, \dots dx_n\end{equation} where $\Z$ is the associated  partition function (the normalizing factor such that $\Q$ is a probability measure) and
\begin{equation}
\label{wn}
\w (x_1, \dots, x_n)= -  \sum_{i \neq j} \log |x_i-x_j| +n  \sum_{i=1}^n V(x_i),\end{equation} is the Hamiltonian.
Here   the $x_i$'s belong to  $\mr^2$ (identified with  the complex plane $\mc$), $\beta>0$ is a parameter corresponding to (the inverse of) the temperature
and $V$ is a potential  satisfying some growth and regularity assumptions, which we will detail below.

The probability law $\Q$ is the Gibbs measure   of what is  called either a classical  ``two-dimensional Coulomb system"  or ``Coulomb gas"  or   ``two-dimensional one-component plasma", or sometimes ``Gaussian $\beta$-ensemble" or Dyson gas.   It was first pointed out by Wigner
\cite{wigner} and later exploited by  Dyson \cite{dyson},
 that Coulomb gases are naturally related to random matrices. This is somehow due to the
 fact that  $\exp( \sum_{i\neq j} \log |x_i-x_j|)$ is the square of the Vandermonde
 determinant $\prod_{i<j} |x_i-x_j|$ and thus the law $\Q$,  in the particular case
  when $V(x)=|x|^2$  and $\beta=2$   corresponds   to the law  of eigenvalues for   the {\it  Ginibre ensemble} (as shown in \cite{ginibre}, see also \cite{mehta}, Chap. 15), which is the set of matrices with independent normal (complex) Gaussian entries.
 For the general background and references to the literature, we refer to the book by Forrester \cite{forrester}.
These Coulomb gases  and the Ginibre ensemble have been much  studied, particularly from the random matrix point of view, but also from the statistical mechanics point of view, particularly relevant references in the physics literature are \cite{aj,sm,jlm}.

  The Gibbs measure $\Q$ can also be studied for $x_i$'s belonging to the real line (which we call the one-dimensional case).
  In the context of statistical mechanics (general $\beta$), this corresponds to ``log gases," and
in the context of random matrices ($\beta=1,2,4$), to Hermitian or symmetric random matrices (whose eigenvalues are always real). Even when $\beta \notin\{1,2,4\}$ there is a corresponding random matrix model \cite{de}, although it is more complicated. This one-dimensional case has been significantly more studied than the two-dimensional situation, and more can be achieved as well, in particular local statistics and ``universal" spacings of eigenvalues have been established \cite{vv,bey1,bey2}. This was only very recently extended to the two-dimensional case \cite{byy}.
\\
We also  apply our methods and extend them  to the one-dimensional setting, this is the object of  our companion paper \cite{ss2}.
   Let us finally also point out that studying $\Q$ with the $x_i$'s restricted to the unit circle and  with  $\beta=1,2,4$  also has a random matrix interpretation:  it corresponds to the so-called circular ensembles, e.g. in the $\beta=2$ case,  eigenvalues of the unitary matrices distributed according to the Haar measure, also a well-studied model. We also plan on examining this case in the future.

The current research on the random matrix aspect in the complex case focuses on studying the more general case of random matrices with entries that are not necessarily Gaussian and showing  that the average behavior is the same as for the Ginibre ensemble, see  \cite{bai,tv,tv2}.
We are instead  limited to exact Vandermonde factors but we emphasize that our results are
valid for all $\beta$, hence they are not limited to random matrices or the determinantal case and thus for the proof we cannot rely on any explicit random matrix model. Our results  have some universality feature in the sense that they are valid for a large class of potentials  $V$ (only a few growth and regularity \medskip assumptions are made).

The function $\w$ can also be  studied for its own sake: it can seen as  the interaction energy between similarly charged particles confined by the potential $V$. The case where $V(x)$ is  quadratic  arises for instance as the  interaction energy for superconducting vortices in the  Ginzburg-Landau theory, in the regime where their number is fixed, bounded (see \cite{livre}, Chapter 11).
In the case (not treated here) without potential $V$ but where the points are constrained to be on a manifold (such as the sphere) or a compact set, the minimizers of $\w$, or maximizers of $\prod_{i<j} |x_i-x_j|$,  are known as Fekete points or Fekete sets, cf. the book of Saff and Totik \cite{safftotik} for general reference. These are interesting in their own right -- they arise mainly in polynomial interpolation -- and the literature on the question of their distribution in various situations is vast.
 When instead $V$ is a general smooth enough function  (the situation we treat here), the  minimizers of $\w$ are   called  {\it weighted Fekete points} or weighted Fekete sets, and are also of interest, cf. again  \cite{safftotik}.

 We will pursue the analysis of these weighted Fekete sets, which  can be seen as the formal limit  $\beta\to +\infty$  of \eqref{loi}, in parallel with the analysis of \eqref{loi} for general $\beta$, and obtain new results in both cases.

\medskip

In the case of the Ginibre ensemble, i.e. when $V(x)=|x|^2$ and $\beta=1$, it is known that the ``spectral measure"  $\nu_n:= \frac{1}{n} \sum_{i=1}^n \delta_{x_i} $ converges  to the uniform  measure on the unit disc $\frac{1}{\pi} \indic_{B_1}\, dx$. More precisely $\Q$, seen as a probability on the space of probability measures on $\mc$ (the spectral measures) converges to a Dirac mass
 at $\mo=\frac1\pi \indic_{B_1}dx$.   This is the celebrated ``circular law", attributed in this case to Ginibre,  Mehta, an unpublished paper of Silverstein in 1984, and then Girko  \cite{girko1}.
 The large deviations from this  law  was established by Ben Arous and Zeitouni \cite{bz} (see Theorem \ref{thebz} below): they showed that a large deviations  principle holds with speed $n^{2}$ and rate function \begin{equation}
  \label{imubz}
  I(\mu)= \int_{\mr^2\times \mr^2} - \log |x-y|\, d\mu(x) \, d\mu(y) + \int_{\mr^2} |x|^2\, d\mu(x)\end{equation} whose unique minimizer among probabilities is of course the ``circle law" distribution $\frac{1}{\pi}\indic_{B_1}\, dx$.

For the case of a general $V$ and a general $\beta$,   the same large deviations principle holds   with  the rate function, analogue of \eqref{imubz}, being
\begin{equation}\label{Ib}
\I (\mu)= \int_{\mr^2\times \mr^2} - \log |x-y|\, d\mu(x) \, d\mu(y) + \int_{\mr^2} V(x)\, d\mu(x). \end{equation}
This can be  readapted from the proof of \cite{bz}, otherwise it is proven in possibly higher complex dimensions in \cite{berman}.
Again  the spectral measure $\nu_n= \frac{1}{n} \sum_{i=1}^n \delta_{x_i} $ converges to the minimizer among probability measures of $\I$, called the  {\it equilibrium measure}, which we will denote $\mo$.
 In the case of  weighted Fekete sets, the analogue to the circular law has been known to be  true for a much longer time: it was proved by  Fekete,  Polya and Sz\"ego that, for minimizers of
\eqref{wn},  $  \frac{1}{n} \sum_{i=1}^n \delta_{x_i} $ converges to the same equilibrium measure minimizing $I$ (then also referred to as the electrostatic interaction energy),
whose description  goes back to Gauss, and was carried out with modern mathematical rigor by Frostman \cite{frostman}.

 Fekete sets on the sphere are probably the most studied among Fekete sets (cf. \cite{sk}).
By stereographic projection of the sphere onto the plane, they can be treated as  \eqref{wn} but with a weakly confining potential  $V(x)= \log (1+|x|^2)$, which barely fails to be in the class treated in this paper. The corresponding   large deviations result (for the case with temperature) for such potentials was proven in  \cite{hardy}.

\smallskip

We are interested in examining the ``next order" behavior, or that of fluctuations around the limiting distribution $\mo$. Let us mention that such  questions have already been addressed, often with the point of view of deriving explicit scaling limits (e.g. \cite{bors,ginibre}) or laws for certain statistics  of fluctuations. One can see  for example  \cite{ahm,ahm2} where the authors essentially prove that the law of  the linear statistics of the fluctuations is a Gaussian with specific variance and mean, or also \cite{rider} for related results.  These results are however valid only for the ``determinantal case" $\beta=2$.  Our approach and results are in some sense orthogonal, and again valid for all $\beta$ and general $V$'s.

Recalling  $\I$ and $\mo$ are  found through the  large deviations at speed $n^{2}$, we look into the speed $n$ and, while  we do not prove a complete  large deviations principle  at this speed, we show  there is still a sort of rate function for which  a ``threshhold phenomenon"  holds. This analysis is based on an expansion, through a crucial but simple ``splitting formula" (which we present in Section \ref{subsecsplit} below)  of $\w(x_1,\dots,x_n)$, as  equal to $n^2 \I(\mo)-\frac{n}{2}\log n$ plus a term of order $n$,  whose prefactor tends as $n\to +\infty$ to the ``renormalized energy" $W$, a Coulombian interaction of points in the plane with a uniform neutralizing background, that we introduced in \cite{ss1} and whose definition we will recall below in Section \ref{sec1.2}. To be more precise  the limit term is the average value of $W$ on the set of blow-up limits of the configuration of points $x_1,\dots,x_n$ at the scale $1/\sqrt n$.   It is this average that partially plays the role of a rate function at speed $n$. For a precise statement, see Theorem \ref{th4}.

Another way of saying this is in the language of $\Gamma$-convergence (for a definition we refer to \cite{braides,dalmaso}, suffice it to know that this is the right notion of convergence to ensure that minimizers of $\w$ converge --  via their  empirical measures -- to minimizers of $\I$, i.e. to $\mo$):
it is not very difficult  to show (for a short proof, see \cite[Prop. 11.1]{livre}  -- the proof there is for $V$ quadratic but works with no change for general $V$) that $\frac{\w}{n^2}$ $\Gamma$-converges as $n \to \infty$ to $\I$, defined in \eqref{Ib}.
Here we examine the next order in the ``$\Gamma$-expansion" of $\w$, i.e. we study  the  $\Gamma$-convergence of $\frac{1}{n}(\w- n^2 \I(\mo)+\frac{n}{2}\log n)$, and  show that the $\Gamma$-limit is (the average of)  $W$. Consequently,  after  blow-ups at scale $\sqrt{n}$,  minimizers of $\w$ (i.e. weighted  Fekete sets) should  minimize the (average of the) renormalized energy $W$. For a precise statement, see Theorem \ref{th2}.

Before yet giving a precise definition, let us mention that we introduced the renormalized energy $W$ in \cite{ss1} for the study of the interaction of vortices in the context of the Ginzburg-Landau energy of superconductivity (for general reference on the topic, cf. \cite{livre}). Configurations that minimize the Ginzburg-Landau energy with applied magnetic field, exhibit in certain regimes ``point vortices" that are  densely packed (there are $n\gg 1 $ of them)  and are expected to  arrange themselves in perfect triangular lattices (i.e. with 60$^\circ$ angles), named {\it Abrikosov lattices} after the physicist who predicted them  \cite{abri}. The  Abrikosov lattices are indeed observed in experiments on superconductors\footnote{For photos one can see {\tt http://www.fys.uio.no/super/vortex/}}. In \cite{ss1} we made this partly rigorous by showing that minimizers of  the Ginzburg-Landau energy have vortices that minimize the renormalized energy $W$  after blow-up at the scale $\sqrt{n}$. The conjecture made in \cite{ss1},  also supported by some mathematical evidence (see Section \ref{sec1.2}),  is then that the minimal value  of $W$ is achieved by the triangular lattice ; if proven true this would completely justify why vortices form these patterns. Combining this conjecture   with the above conclusion that weighted  Fekete sets should (after blow-up) minimize $W$, we thus obtain the conjecture that they also  should locally form Abrikosov (triangular) lattices.

Does the same hold with finite temperature $\beta$, i.e. for Coulomb gases? Let us phrase the question more precisely. The law $\Q$ induces a probability measure on the family of blow-ups of $(x_1,\dots,x_n)$ around a given origin point in $\E:= \supp(\mu_0)$ --- the parameter of the family --- at the scale $\sqrt{n}$, a blow-up scale after which the resulting  points are typically separated by order $1$ distances. In the limit $n\to \infty$  this yields a probability measure on the set of configurations of points in the plane and we may ask if, almost surely, the  blow-up configurations minimize $W$.  Our results indicate  that this is not the case,  however we are able to prove that  there is a treshhold phenomenon, in the sense  that except with exponentially small probability, the average of $W$ is below  a certain constant, itself converging to the minimum of $W$ as $\beta\to \infty$, which indicates {\it crystallisation}, i.e. if the above conjecture is true, we should see Abrikosov lattices as $\beta\to\infty$.

To our knowledge, this is the first time Coulomb gases or Fekete sets are  rigorously connected to  triangular lattices, in agreement with  predictions in the physics literature (see  \cite{aj} and references therein).

\medskip

A  corollary of our  way of expanding $\w$  is that we obtain a next order estimate of the partition function $\Z$, a result we can already state:

\begin{theo} \label{valz}Let $V$ satisfy assumptions \eqref{assumpV1} -- \eqref{assumpV4} below. There exist functions $f_1, f_2$ depending only on $V$, such that for any $\beta_0>0$ and any $\beta \ge\beta_0$, and for $n$ larger than some $n_0$ depending on $\beta_0$, we have
\begin{equation}\label{1.5}
n\beta f_1(\beta) \le \log \Z -\(- \frac{\beta}{2} n^2 \I(\mo) + \frac{\beta n}{4} \log n\)\le  n \beta f_2(\beta),\end{equation}with $f_1,f_2$ bounded in $[\beta_0,+\infty)$ and such that
\begin{equation}
\lim_{\beta\to\infty} f_1(\beta)=\lim_{\beta\to\infty}f_2(\beta)= - \frac{\a}{2}\end{equation}
where $\a$ is some constant related to $W$, and explicited in \eqref{defa} below.
\end{theo}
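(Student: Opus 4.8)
The plan is to obtain \eqref{1.5} from the splitting formula of Section~\ref{subsecsplit} combined with the matching asymptotic upper and lower bounds on the renormalized energy $W$ proved in the paper; what remains is then an elementary, if careful, Laplace-type estimate. Recall that the splitting formula rewrites, for every $(x_1,\dots,x_n)\in(\mr^2)^n$,
\[
\w(x_1,\dots,x_n)=n^2\I(\mo)-\tfrac n2\log n+\tfrac1\pi\,\mathbb{W}_n(x_1,\dots,x_n)+2n\sum_{i=1}^n\zeta(x_i),
\]
where $\zeta\ge 0$ is the effective potential attached to the variational problem defining $\mo$ (so $\zeta\equiv 0$ on $\supp\mo\subseteq\{\zeta=0\}$, $|\{\zeta=0\}|<\infty$, and $\zeta\to+\infty$ at infinity by the growth assumption on $V$), and $\mathbb{W}_n$ is the renormalized Coulomb energy of $\sum_i\delta_{x_i}$ against the neutralizing background $n\mo$, normalized so that $\tfrac1{\pi n}\mathbb{W}_n$ equals, up to $o(1)$, the average of $W$ over the blow-ups of the configuration at scale $\sqrt n$. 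Inserting this into $\Z=\int_{(\mr^2)^n}e^{-\frac\beta2\w}\,d\mathbf x$ factors out the two macroscopic terms, so \eqref{1.5} becomes equivalent to
\[
n\beta f_1(\beta)\le\log K_n\le n\beta f_2(\beta),\qquad K_n:=\int_{(\mr^2)^n}e^{-\frac\beta2 G_n(\mathbf x)}\,d\mathbf x,\quad G_n:=\tfrac1\pi\mathbb{W}_n+2n\sum_{i=1}^n\zeta(x_i).
\]

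For the upper bound on $\log\Z$ one only needs a lower bound on $G_n$. The non-asymptotic bound $\mathbb{W}_n\ge -C_V n$ furnished by the ball construction of \cite{ss1}, together with $\zeta\ge 0$ and the elementary fact that $\int_{\mr^2}e^{-t\zeta}\to|\{\zeta=0\}|$ as $t\to+\infty$, already gives $K_n\le e^{C_V\beta n/(2\pi)}(|\{\zeta=0\}|+1)^n$ for $\beta n$ large, whence $f_2(\beta)\le \tfrac{C_V}{2\pi}+\tfrac1\beta\log(|\{\zeta=0\}|+1)$, which is bounded on $[\beta_0,\infty)$. To upgrade the limit to $-\a/2$ one uses the optimal lower bound $G_n\ge (\a-o_n(1))\,n$, valid uniformly over configurations --- this is the $\Gamma$-$\liminf$ part of the analysis of $W$ in the paper, made uniform by the standard contradiction argument (any configuration with $G_n<(\a-\tau)n$ has $\tfrac1\pi\mathbb{W}_n$ bounded below and $\tfrac1n\sum\zeta(x_i)$ bounded above, hence is precompact in the relevant topology and falls under the $\Gamma$-$\liminf$ inequality). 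Writing $G_n=(1-\eta)G_n+\eta G_n$ and bounding the first summand by $(\a-o_n(1))n$ and the second using $\tfrac1\pi\mathbb{W}_n\ge -C_V n$ retains an $\eta\beta n\sum\zeta(x_i)$ term ensuring integrability, and then letting $\eta=\eta(\beta)\to 0$ gives $\limsup_{\beta\to\infty}f_2(\beta)\le -\a/2$.

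The lower bound on $\log\Z$ is the substantive direction. Its engine is the explicit construction (the screening / recovery-sequence construction of the paper) of a configuration $\{a_i\}_{i=1}^n\subset\supp\mo$ that is separated at the microscale $n^{-1/2}$, whose empirical measure is close to $\mo$, and for which $\tfrac1{\pi n}\mathbb{W}_n(\{a_i\})\le \a+o_n(1)$, hence $G_n(\{a_i\})\le (\a+o_n(1))n$. Fix a small $\delta>0$, put $r=\delta\,n^{-1/2}$ so that the balls $B(a_i,r)$ are pairwise disjoint, and for $\sigma$ a permutation let $A_\sigma:=\{\mathbf x:\ x_{\sigma(i)}\in B(a_i,r)\ \forall i\}$. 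The $n!$ sets $A_\sigma$ are pairwise disjoint and carry equal $K_n$-mass by symmetry of $G_n$, and on each of them a perturbation estimate for $\mathbb{W}_n$ under displacements small relative to the separation, together with the quadratic vanishing of $\zeta$ on $\partial(\supp\mo)$, yields $G_n\le (\a+C\delta+o_n(1))\,n$; therefore
\[
K_n\ge\sum_{\sigma}\int_{A_\sigma}e^{-\frac\beta2 G_n}\ge n!\,(\pi r^2)^n\,e^{-\frac\beta2 n(\a+C\delta+o_n(1))}.
\]
Since $n!\,(\pi r^2)^n=n!\,(\pi\delta^2/n)^n=e^{\,n\log(\pi\delta^2)-n+o(n)}$ by Stirling --- the $n\log n$ cancels, as it must --- this gives $\tfrac1{n\beta}\log K_n\ge \tfrac1\beta(\log(\pi\delta^2)-1)-\tfrac12(\a+C\delta)-o_n(1)$. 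For fixed $\delta$ the right-hand side defines an admissible $f_1$ bounded on $[\beta_0,\infty)$, and choosing $\delta=\delta(\beta)\downarrow 0$ slowly as $\beta\to\infty$ (e.g.\ $\delta(\beta)\sim 1/\log\beta$) sends $f_1(\beta)\to -\a/2$.

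The main obstacle is the passage, in the lower bound, from the recovery sequence for $W$ --- a statement about minimizing \emph{sequences} of infinite point patterns --- to a genuine \emph{positive-volume} estimate in $(\mr^2)^n$: one must produce an honest finite configuration that simultaneously has empirical measure close to $\mo$, is separated at scale $n^{-1/2}$, has $\mathbb{W}_n$ within $o(n)$ of $\pi\a n$, and is stable under small displacements with a quantitative control on the resulting change of $\mathbb{W}_n$. A pervasive secondary difficulty is uniformity: the error terms $o_n(1)$, the threshold $n\ge n_0(\beta_0)$, and the rate $\delta(\beta)\to 0$ must be arranged so that the final $f_1,f_2$ depend on $V$ alone and are bounded on every $[\beta_0,\infty)$.
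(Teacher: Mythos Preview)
Your proposal is correct and follows essentially the same strategy as the paper: the lower bound on $\log\Z$ comes from the explicit screening construction (Proposition~\ref{construct} and Corollary~\ref{coro43}) producing a set $A_n$ of volume $\ge n!(\pi\delta^2/n)^n$ on which $\F\le\alpha+\eta$, and the upper bound comes from the $\Gamma$-$\liminf$ of Theorem~\ref{th2}A combined with Lemma~\ref{lemintxi}. The only notable difference is that the paper obtains the upper bound more directly by writing $\F=\widehat{\F}+2\int\zeta\,d\nu_n$ and bounding $\widehat{\F}$ below by its infimum over all configurations (so that the full $\zeta$-term is retained for integrability), rather than via your $(1-\eta)/\eta$ convex combination of two lower bounds on $G_n$; both variants give the same conclusion.
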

This improves on the known results, which only gave the expansion $\log \Z \sim \frac{ \beta}{2} n^2 I(\mo)$. It also seems to contradict the result of the calculations of \cite{zw}.
Let us recall that an exact value for $\Z$ is only known for the Ginibre ensemble case of $\beta =2$ and $V(x)=|x|^2$: it is
$Z_n^2= n^{-\hal n(n+1)} \pi^n \prod_{k=1}^n k!$ (see \cite[Chap. 15]{mehta}). Known asymptotics allow to deduce (cf. \cite[eq. (4.184)]{forrester}):
\begin{equation}\label{vallogz}
\log Z_n^2 =- \frac{3n^2}{4}+ \frac{n}{2} \log n  + n ( -1 +    \hal \log 2 + \frac{3}{2}\log \pi )    + O(\log n)
\quad \text{as } \ n\to \infty,\end{equation} where we note the value of $I(\mu_0)$ is indeed $\frac34$ for  this potential.
On the other hand,  no exact formula exists for general potentials \footnote{an exception is the result of \cite{dgil} for a quadrupole potential},  nor for quadratic potentials if $\beta \neq 2$. This is in contrast with the one-dimensional situation for which, at least in the case of quadratic $V$, $\Z$ has an explicit expression for every $\beta$, given by the famous Selberg integral formulas (see e.g. \cite{agz}).

In statistical mechanics language, the existence of an exact asymptotic expansion up to order $n$ for $\log \Z$
is essentially the existence of a thermodynamic limit.
This is established in \cite{ln} for a three-dimensional Coulomb system,  and in a nonrigorous way  in \cite{sm} in two dimensions. The existence of the thermodynamic limit here remains to be completed by  getting upper and lower bounds which match up to $o(n)$ in \eqref{1.5}.

\medskip

As suggested by the strong analogy between  Coulomb  gases and interacting  vortices in the Ginzburg-Landau model, we will draw heavily on methods we  introduced in \cite{ss1}, such as the splitting, blow-up, the use of the ergodic theorem, and the properties of the renormalized energy.

\medskip

The rest of the introduction is organized as follows: first we give some more notation, give the assumptions we need to make on $V$  and state the splitting formula, then we present the definition of the renormalized energy and the main results from \cite{ss1} that we will use, and finally we  state our main results and comment on them.

\subsection{The equilibrium measure and the splitting formula}\label{subsecsplit}
We need to introduce some notation, and for this we need to describe the equilibrium measure $\mo$ minimizing \eqref{Ib}  among probability measures.

This description, which is now classical  in potential theory (see  \cite[Chap. 1]{safftotik}) says  that, provided $\lim_{|x|\to +\infty} \frac{V(x)}{2} - \log|x| = +\infty$ and $\log V$ is lower semicontinuous,   this equilibrium measure exists, is unique, and  is characterized by the fact that
there exists a constant $c$ such that, quasi-everywhere,
\begin{equation}\label{optcondmo}
U^\mo + \frac{V}{2}\ge c \quad \text{ and}  \quad U^\mo + \frac{V}{2}= c\quad \text{on  $\supp(\mo)$},\end{equation}
where  for any measure $\mu$, $U^\mu$ denotes the potential generated by  $\mu$, defined by
\begin{equation}\label{umu}
U^\mu(x)= - 2\pi \Delta^{-1}\mu:= - \int_{\mr^2} \log |x-y|\, d\mu(y).\end{equation}
Here and in all the paper, we denote by $\Delta^{-1}$ the operator of convolution by $\frac{1}{2\pi}\log |\cdot|$. It is such that $\Delta \circ \Delta^{-1}=Id$, where $\Delta $ is the usual Laplacian. We denote the support of $\mo$ by $\E$.

Another way to characterize  $U^\mo$ is as the solution of the following {\it obstacle problem}\footnote{The obstacle problem is a free-boundary problem and a much-studied   classical  problem in the calculus of variations,  for general reference see \cite{friedman,ks}.} : It is a superharmonic function bounded below by $c-V/2$ and harmonic outside the so-called {\em coincidence set}
\begin{equation}\label{coincidence}\E=\{U^\mo = c-V/2\}.\end{equation}
This implies in particular that $U^\mo$ is  $C^{1,1}$ if $V$ is \smallskip (see \cite{caff}).

It is now a good time to state the assumptions on $V$ that we assume are satisfied in the sequel.
\begin{equation}\label{assumpV1}  \lim_{|x|\to +\infty} \frac{V(x)}{2} - \log|x| = +\infty, \end{equation}
\begin{equation}\label{assumpV2} \text{ $V$ is $C^3$ and there exists $\bw,\wb>0$  s.t. $\bw\le \frac{\Delta V}{4\pi} \le \wb$,}\end{equation}
\begin{equation}\label{assumpV3} \text{$V$ is such that $\p \E$ is $C^{1}$,}    \end{equation}
\ed{\begin{equation}
\label{assumpV4}
\text{there exists} \ \beta_1>0\ \text{such that } \ \int_{\mr^2 \backslash B_1}
e^{-\beta_1 ( V/2(x)- \log |x|) } \, dx<+\infty.
\end{equation}}

The assumption \eqref{assumpV1} on the growth of $V$ is what is needed to apply the results from \cite{safftotik} and to guarantee that \eqref{Ib} has a minimizer. The other conditions  are technical and certainly not optimal, they are meant to ensure that $\mu_0$ and its support are regular enough and that $\mo$ never degenerates, which we will need for example when making explicit constructions.
Indeed, assumptions \eqref{assumpV2}--\eqref{assumpV3}, together with \eqref{optcondmo}, \eqref{umu}  and the regularity of $V$, ensure that
\begin{equation}\label{112}   d\mo= \bm(x)\,dx,\quad\text{where}\quad \bm(x) = \frac{\Delta V(x)}{  4\pi }\indic_{\E}(x).\end{equation} hence for the $\bw,\wb>0$  of \eqref{assumpV2} we have
\begin{equation}\label{bornmo}   \bw\le \bm\le \wb.\end{equation}
\ed{Assumption \eqref{assumpV4} is a supplementary assumption on the growth of $V$ at infinity, needed for the case with temperature, to show that the partition function is well-defined. It only requires slightly more than \eqref{assumpV1}, for example 
assuming $\lim_{|x|\to \infty} \frac{V}{2}(x) - (1+\ep) \log |x|= + \infty$ for some $\ep>0$ suffices.}

Next, we set  $\z= U^\mo + \frac{V}{2} - c$ where $c$ is the constant in \eqref{optcondmo} and \eqref{coincidence}. This function satisfies
\begin{equation}\label{eqz}
\left\{\begin{array}{rlll}
\Delta \z &=& \frac{1}{2}\Delta V\ \indic_{\mr^2 \sm \E} &
\\ \z &=& 0 & \text{quasi-everywhere in $\E$}\\
\z&>& 0 & \text{quasi-everywhere in $\mr^2\sm \E$}\end{array}\right.\end{equation}
From our  assumption \eqref{assumpV2}, \ed{ it follows from \cite[Lemma 5]{caff} that there exists $\kappa>0$ such that for every $x\in \mr^2$,
\begin{equation}\label{lemz}
\z(x) \ge \kappa\, \dist(x, \E)^2, \end{equation}
 and such a  rate is  in fact optimal \cite[Lemma 2]{caff}.}
All the quantities introduced so far: $\mu_0$, $\E$, $\zeta$, depend only on the data of $V$.

\medskip

The function $\zeta$ arises in the splitting formula for $\w$ which we now present.
As mentioned above, expanding  the probability density to the next order goes along with blowing-up the point configuration  by a factor  $\sqrt{n}$. We then denote the blown-up quantities by primes. For example the {\it blown-up coordinates} $x_i'= \sqrt{n} x_i$, $\bm'(x') = \bm(x)$, $d\mu_0'(x')= \bm'(x')dx'$ etc \ldots.

The splitting formula, proven in Section \ref{secproofsplit}, is the observation that, for any $x_1, \dots, x_n\in \mr^2$, we have
\begin{equation}\label{idwnbu}
\w(x_1, \dots, x_n)= n^2 \I (\mo)- \frac{n}{2} \log n + \frac{1}{\pi} W(\nab H'_n, \indic_{\mr^2} )  +2 n \sum_{i=1}^n  \z (x_i),\end{equation}
where
\begin{equation}\label{Hp}
H_n' := - 2\pi \Delta^{-1} \(\sum_{i=1}^n \delta_{x_i'} - \ed{ \mu_0'} \), \end{equation}
and where, in agreement with  formula \eqref{WR} below (the existence of the limit as $\eta\to 0$ will also be discussed there),
\begin{equation}
W(\nab H_n', \indic_{\mr^2}):=\lim_{\eta\to 0} \( \frac12\int_{\mr^2\sm\cup_{i=1}^n B(x_i', \eta)} |\nab H'_n|^2
+ \pi n \log \eta\).
\end{equation}
The function $H_n'$ physically corresponds to the electrostatic potential generated by the positive point charges $\sum_i \delta_{x_i'}$ and the diffuse negative charge $-\mu_0'$. Its opposite gradient, that we will denote by $E_n$ physically corresponds to the electric field generated by the charges (hence the notation).

Letting, for a measure $\nu$
\begin{equation}\label{defF}
\F(\nu):=
 \begin{cases}
\D \frac{1}{n}\( \frac{1}{\pi} W(\nab H_n', \indic_{\mr^2}) + 2n\int  \z \, d\nu\) & \ \text{if } \ \nu \ \text{is of the form}  \sum_{i=1}^n \delta_{x_i}\\
 + \infty & \ \text{otherwise,}
 \end{cases}
\end{equation}the relation \eqref{idwnbu} can be rewritten
\begin{equation}\label{lienfw}
\w(x_1, \dots, x_n)=  n^2 \I (\mo)- \frac{n}{2} \log n + n \F\(\sum_{i=1}^n \delta_{x_i}\).\end{equation}
This allows to separate orders  as announced since we will see that  $\F(\sum_{i=1}^n  \delta_{x_i} )$ is typically of order $1$.

We may next cancel out   leading order terms and  rewrite the probability law \eqref{loi} as
\begin{equation}\label{loi2}
d\Q(x_1, \dots, x_n) =\frac{1}{ \K} e^{- n\frac{\beta}{2} \F(\sum_i\delta_{x_i} )} \, dx_1 \dots dx_n\end{equation}
where
\begin{equation}\label{defK}
\K:= \Z e^{\frac{ \beta}{2}( n^2 \I (\mo)- \frac{n}{2} \log n)      }.
\end{equation}
As we will see below   $\log \K$ is of order $n\beta$,  which leads to Theorem \ref{valz}.

We will also denote
\begin{equation}
\label{fnhat}
\widehat{\F}(\nu)= \F(\nu)- 2\int  \z \, d\nu=\frac{1}{n\pi} W(\nab H_n', \indic_{\mr^2}).
\end{equation}
\ed{
In view of \eqref{idwnbu} the main task in our proof is to pass to the limit $n\to \infty$ in $W(\nab H_n', \indic_{\mr^2})$ and obtain a limiting energy, which will be (the average of) our Coulomb renormalized energy $W$. Passing to the limit in \eqref{Hp} will lead to solutions of $-\Delta H= 2\pi \(\sum_p \delta_p -cste\)$ where the sum is now infinite. The limit energy thus has to be defined on objects of this form, or equivalently (by taking $E=-\nab H$) solutions of $\div E = 2\pi \(\sum_p \delta_p -cste\)$, $\curl E=0$. The definition will be given just below. The passage to the limit is not obvious, for several reasons. The first is the lack of local charge neutrality, and the  fact that the energy density associated to $W(\nab H_n', \indic_{\mr^2})$ is not pointwise bounded below. The second is the need of the ``averaged formulation" alluded to above, this will be provided by an abstract method relying on the ergodic theorem, and inspired by Varadhan.

}

\subsection{The renormalized energy}\label{sec1.2}
We now define precisely the ``renormalized energy" $W$ introduced in \cite{ss1}, which is a way of computing the Coulomb interaction between an infinite number of point charges in the plane with a uniform neutralizing background of density $m$. We point out that, to our knowledge, each of the analogous Coulomb systems studied in the physics literature (e.g. \cite{sm,aj}) comprise a finite number of point charges, and hence implicitly extend only to a bounded domain on which there is charge neutrality.  Here we do not assume any local charge neutrality.

We denote by $B(x,R)$ or $B_R(x)$ the ball centered at $x$ with radius $R$ and let $B_R=B(0,R)$.
\ed{In all the paper, when $U$ is a measurable set, $|U|$ will denote its Lebesgue measure, and when $U$ is a finite set, $\# U$ will denote its cardinal. $\dashint $ will denote an integral average.

The point of the definition of $W$ below is that we would like to define $W(\nab H)$ for $H$ solving $-\Delta H=2\pi(\sum_p \delta_p-m)$ as $\limsup_{R\to \infty} \dashint_{B_R} |\nab H|^2$, however these integrals diverge because of the logarithmic divergence of $H$ near each point. Instead, we compute $\int |\nab H|^2$ in a ``renormalized" way or in ``finite parts", by cutting out holes around each $p$ and subtracting off the corresponding divergence, in the manner of \cite{bbh}, from which the name ``renormalized energy" is borrowed.
}

\begin{defi}Let $m$ be a nonnegative number. For any continuous  function $\chi$ and any vector-field $\j$ in $\mr^2$  such that
\begin{equation}\label{eqj}
\div  \j = 2\pi (\nu - m) , \qquad \curl \j =0\end{equation} where
 $\nu$ has the form

\begin{equation} \label{eqnnu}\nu=  \sum_{p \in\Lambda} \delta_{p}\quad \text{ for
some discrete set} \  \Lambda\subset\mr^2,\end{equation}
 we let
\begin{equation}\label{WR}W(\j, \chi) = \lim_{\eta\to 0} \(
\hal\int_{\mr^2 \backslash \cup_{p\in\Lambda} B(p,\eta) }\chi
|\j|^2 +  \pi \log \eta \sum_{p\in\Lambda} \chi (p) \).
\end{equation}\end{defi}
\ed{
To see that the limit $\eta\to 0$ exists, it suffices to  observe that in view of \eqref{eqj}--\eqref{eqnnu},  $\j$ is a gradient and near each $p\in \Lambda$ we may write  $\j =  \nab \log |\cdot - p| +\nab f(\cdot)$ where $f$  is $C^1$ by elliptic regularity.  The limit follows easily. It also follows that $\j$ belongs to $L^q_\loc$ for any $q<2$.
}

\begin{defi}\label{defA} Let $m$ be a nonnegative number.
Let  $\j$
be a vector field in $\mr^2$. We say $\j$
belongs to the admissible class
$\mathcal{A}_m $
 if \eqref{eqj}, \eqref{eqnnu}  hold  
and
\begin{equation}\label{densbornee}
\frac{    \nu (B_R )  } {|B_R|}\quad \text{ is bounded by a constant independent of $R>1$}.
\end{equation}
 \end{defi}

In the sequel $K_R$ will denote the two-dimensional squares $[-R,R]^2$.
We also   use the
notation $\chi_{K_R}$ for positive  cutoff functions satisfying, for some
constant $C$ independent of $R$,
\begin{equation}
\label{defchi} |\nab \chi_{K_R}|\le C, \quad \supp(\chi_{K_R})
\subset K_R, \quad \chi_{K_R}(x)=1 \ \text{if } d(x, {K_R}^c) \ge
1.\end{equation}

\begin{defi}The renormalized energy $W$ is defined,
for $\j \in \mathcal{A}_m$,  by
\begin{equation} \label{Wroi} W(\j)= \limsup_{R \to \infty}
\frac{W(\j, \chi_{K_R})}{|K_R|} ,
\end{equation} with $\{\chi_{K_R}\}_R$ satisfying \eqref{defchi}.
\end{defi}
\ed{We note that we have taken a slightly different definition from \cite{ss1}: first the vector-fields in \eqref{eqj} have been rotated by $\pi/2$, second
$\mathcal{A}_m $ here corresponds to $\mathcal{A}_{2\pi m}$ in \cite{ss1}. Finally, in \cite{ss1} we presented the definition with averages over general sets, here we have chosen for simplicity to introduce it only with square averages.}


In theory, many different $\j$'s could correspond to a given $\nu$ (one can always add the gradient of a harmonic function). But as it turns out, they only differ by a constant:
\begin{lem}\label{lem15} Let $m\ge 0$ and $\nu =  \sum_{p \in \Lambda}\delta_p $, where $\Lambda\subset\mr^2$ is discrete, and assume there exists $\j$ such that \begin{equation}\label{jfini}
\div  \j = 2\pi (\nu - m),\quad \curl \j =0,\quad\text{and}\quad W(\j)<+\infty.
\end{equation}
Then any other $\j'$ satisfying \eqref{jfini} is such that $\j-\j'$ is constant.

If there exists $\j$ such that \eqref{jfini} holds and such that
\begin{equation}\label{jnu} \lim_{R \to \infty} \dashint_{K_R} \j = 0,\end{equation}
then any other $\j'$ satisfying \eqref{jfini} is such that $W(\j')>W(\j)$.
\end{lem}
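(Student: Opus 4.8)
The plan is to exploit the fact that the difference of two admissible vector-fields is curl-free and divergence-free. First I would observe that if $\j$ and $\j'$ both satisfy \eqref{jfini} (resp.\ the first three conditions in \eqref{jfini}, for the first assertion), then $w := \j - \j'$ satisfies $\div w = 0$ and $\curl w = 0$ in the sense of distributions on all of $\mr^2$; since $w \in L^q_\loc$ for any $q < 2$ (from the remark following the definition of $W(\j,\chi)$, both $\j$ and $\j'$ are in $L^q_\loc$), Weyl's lemma applied componentwise shows $w$ is a smooth harmonic vector field, in fact $w = \nab h$ with $h$ harmonic on $\mr^2$. To conclude $w$ is constant, I would use the energy bound: the hypothesis $W(\j) < +\infty$ together with $W(\j') < +\infty$ (for the first part one must first note that finiteness of $W$ is preserved, which follows from the same harmonicity argument once we bound $w$) forces control on $\dashint_{K_R}|\nab H|^2$ away from the points, and hence on $\dashint_{K_R} |w|^2$ after discarding the (uniformly small, by the density bound \eqref{densbornee}) contribution of the punctured balls. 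A harmonic gradient $\nab h$ on $\mr^2$ with $\dashint_{K_R}|\nab h|^2$ bounded (or even $o(R^2)$-growing) as $R\to\infty$ must be constant, by the mean value property and a standard Liouville-type argument (each component of $\nab h$ is harmonic and has at-most-bounded $L^2$ average, hence is constant). This gives the first assertion.

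For the second assertion, write $\j' = \j + \vec c$ where $\vec c$ is the constant vector from the first part. The key computation is to expand $W(\j', \chi_{K_R})$: since the logarithmic counterterm depends only on $\nu$ (the same for $\j$ and $\j'$), we get
\begin{equation}\label{expand}
W(\j',\chi_{K_R}) = W(\j,\chi_{K_R}) + \vec c \cdot \int \chi_{K_R}\, \j + \frac{|\vec c|^2}{2}\int \chi_{K_R},
\end{equation}
where the cross term integral $\int \chi_{K_R}\,\j$ is finite because $\chi_{K_R}$ is bounded with compact support and $\j \in L^1_\loc$. Now I would divide by $|K_R|$ and take $R\to\infty$. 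The last term contributes $\frac{|\vec c|^2}{2}$ in the limit, since $\dashint_{K_R}\chi_{K_R}\to 1$. The cross term, after division by $|K_R|$, tends to $\vec c \cdot \lim_{R\to\infty}\dashint_{K_R}\j$ — and this is where the normalization hypothesis \eqref{jnu} is used: it forces this limit to be zero (one must be slightly careful that $\dashint_{K_R}\chi_{K_R}\j$ has the same limit as $\dashint_{K_R}\j$, which follows since $\chi_{K_R}$ differs from $\indic_{K_R}$ only on a boundary layer of area $O(R)$ where $\j\in L^1_\loc$ makes the difference $o(R^2)$ after a dyadic summation or a direct estimate). Hence $\limsup_{R\to\infty} W(\j',\chi_{K_R})/|K_R| = W(\j) + \frac{|\vec c|^2}{2} \ge W(\j)$, with strict inequality unless $\vec c = 0$, i.e.\ unless $\j' = \j$; this is exactly the claim $W(\j') > W(\j)$ for any genuinely different $\j'$.

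The main obstacle I anticipate is the handling of the cutoff functions $\chi_{K_R}$ versus sharp indicators, and more precisely controlling the cross term $\dashint_{K_R}\chi_{K_R}\j$ in the limit: one needs that the boundary-layer discrepancy and the punctured-ball contributions are negligible after normalization by $|K_R|$, which requires combining the $L^q_\loc$ bound on $\j$ (for some $q\in(1,2)$) with the density bound \eqref{densbornee} to estimate $\sum_{p\in\Lambda\cap K_R}\int_{B(p,\eta)}|\j|$ uniformly. A secondary technical point is making rigorous the claim that finiteness of $W(\j)$ propagates to $W(\j')$ in the first assertion; the cleanest route is to prove the harmonicity and constancy of $w=\j-\j'$ directly from $W(\j)<\infty$ alone (which bounds the renormalized $L^2$ norm of $\j$ on large squares, hence that of $\j'$ up to the harmonic correction), rather than assuming $W(\j')<\infty$ a priori.
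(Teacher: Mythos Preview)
Your proof proposal is correct and follows essentially the same strategy as the paper. Two minor remarks: first, your worry about whether $W(\j')<+\infty$ holds is unnecessary, since by hypothesis $\j'$ satisfies \eqref{jfini}, which already includes $W(\j')<+\infty$; second, for the first assertion the paper identifies $\j-\j'$ with a holomorphic function (from $\div=\curl=0$ in $\mr^2$) and uses Cauchy's integral formula together with the bound $\int_{K_R}|\j-\j'|^2\le CR^2$ to kill all nonconstant Taylor coefficients, which is exactly your harmonic-plus-Liouville argument recast in complex notation. For the second assertion your expansion \eqref{expand} and passage to the limit coincide verbatim with the paper's computation; the boundary-layer issue you flag between $\chi_{K_R}$ and $\indic_{K_R}$ is indeed glossed over in the paper but is handled by the same $O(R^2)$ $L^2$-type bounds you invoke.
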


\begin{proof}\ed{ Let  $\j$, $\j'$ be as above. We may view them as complex functions of a complex variable.  From \eqref{jfini} we have $\div (\j-\j')=\curl (\j-\j')=0$ and thus $\j-\j'$ is  holomorphic. We can write it as a power series $\sum_{n=0}^\infty a_n z^n$ with infinite radius of convergence. On the other hand, from the finiteness of $W(\j)$ and $W(\j')$ we deduce easily that there exists $C>0$ such that  \begin{equation}\label{precau}\forall R>1, \quad
\int_{K_R} |\j-\j'|^2\le CR^2.
\end{equation}  But by Cauchy's formula we have, for any $R>0$ and $t\in[R,R+1]$
$$a_n= \frac{1}{2i\pi} \int_{\p B(0,t)} \frac{(\j-\j')(z)}{z^{n+1}} \, dz=\frac{1}{2i\pi}\int_{R}^{R+1} \int_{\p B(0,t)} \frac{(\j-\j')(z)}{z^{n+1}} \, dz\, dt. $$
It follows that
$$|a_n|\le =\frac{1}{2\pi R^{n+1}} \int_{B(0,R+1)\backslash B(0,R)}|\j-\j'|
\le \frac{C}{ R^{n+1}} R^{3/2}$$
where we have used the Cauchy-Schwarz inequality and \eqref{precau}.
Letting $R\to \infty$ we find that $a_n=0$ for any $n \ge 1$ and thus
$\j-\j'$ is constant.}
For the second statement, we deduce from the first statement that $\j' = \j + \vec{C}$ for some constant vector $\vec{C}\neq 0$, and then
$$W(\j',\chi_{K_R}) = W(\j,\chi_{K_R}) + \vec{C} \cdot\int \j\chi_{K_R} + \frac{|c|^2}{2} \int \chi_{K_R}, $$
so that dividing by $|K_R|$, passing to the limit as $R\to +\infty$ and in view of \eqref{jnu}, we find $W(\j') = W(\j) +\hal |\vec{C}|^2.$ \end{proof}

Note that given  $\nu$, the above lemma shows that either for all $\j$'s satisfying \eqref{jfini} the limit $\lim_{R \to \infty} \dashint_{K_R} \j$ exists, or it exists for none of them. Both cases may occur.





The following additional facts and remarks  about $W$ are mostly from \cite{ss1}:
\begin{itemize}
\item[-] \ed{ In \cite{ss1}, we introduced $W$ as  being computed with averages over general shapes (say balls, squares etc). We showed that the minimum of $W$  over $\mathcal{A}_m$ does not depend on the shape used. Since squares are the most useful ones, we restricted to them here for the sake of simplicity.}


\item[-] It was shown in \cite[Theorem 1]{ss1} that the value  of $W$ does not depend on the choice of $\{\chi_{K_R}\}_R$
 as long as  it satisfies \eqref{defchi}.


\item[-] $W$ is bounded below and admits  a minimizer over $\mathcal{A}_1$, cf. \cite[Theorem 1]{ss1}.
\item[-] It is easy to check that if $\j$ belongs to $\mathcal{A}_m$, $m>0$,  then
$\j'=\frac{1}{\sqrt{m}} \j(\cdot /\sqrt m)$ belongs to $\mathcal{A}_1 $ and
\begin{equation}\label{minalai1}
W(\j)= m \(W(\j') - \frac{\pi}{2} \log m\).\end{equation} Consequently  if $\j$
 is a minimizer of  $W$ over $\mathcal{A}_m$, then $\j'$ minimizes  $W$ over $\mathcal{A}_1$. In particular
 \begin{equation}\label{minalai2}
\min_{\mathcal A_m} W= m \(\min_{\mathcal A_1} W - \frac{\pi}{2} \log m\).\end{equation}

\item[-] Because the number of points is  in general infinite,
 the interaction over large balls needs to be normalized
 by the volume, as in a thermodynamic limit. Thus  $W$ does not feel compact perturbations of the configuration of points. Even though the interactions are long-range, this is not difficult to justify rigorously.

\item[-]In \cite{gs} some necessary and some sufficient conditions on the configuration of points for which $W(\j)<\infty$ are given.
\ed{
\item[-] We may define $W$ as a function of the point measure $\nu$  only, by setting for every $\nu$ satisfying \eqref{eqnnu}
\begin{equation}\label{neww}
\mathbb{W}(\nu)= \inf_{\j \ \text{such that } \eqref{eqj} \ \text{holds}} W(\j),\end{equation}
and $\mathbb{W}(\nu)=+\infty$ if $\nu$ is not of the form $\sum_{p\in \Lambda} \delta_p$.
This definition is somehow ``relaxed" since $\mathbb{W}(\nu) \le W(\j)$ for any $\j$ satisfying \eqref{eqj}. The main point to check is the measurability of $\W$, which we will discuss below in Section \ref{secnewdefw}.
}
\item[-] In the case  $m=1$ and when  the set of points $\Lambda$ is  periodic with respect to some lattice $
\mz \vu+ \mz \vv$ then  it can be viewed as a set of $n$ points $a_1, \dots , a_n$ over the torus
$\T_{(\vu,\vv)}:= \mr^2/(\mz \vu +\mz\vv)$ with $  |\T_{(\vu,\vv)}|=n$.  In this case,  the infimum of $W(\j)$ among $\j$'s which satisfy \eqref{jfini}  is achieved by  $\j_{\{a_i\}} = -\nab h$, where $h$ is the periodic solution to $-\Delta h = 2\pi (\sum_i\delta_{a_i} - 1)$, and
\begin{equation}\label{Wcasper}
W(\j_{\{a_i\}})= \frac{\pi}{|\T_{(\vu,\vv)}|}  \sum_{i \neq j} G(a_i-a_j) +  \pi \lim_{x\to 0} \(G(x)+\log |x|\) \end{equation}
where  $G$ is the Green function of the torus with respect to its volume form, i.e. the solution to
$$- \Delta G(x) = 2\pi \(\delta_0 - \frac{1}{|\T_{(\vu,\vv)}|}\)    \quad \text{in } \ \T_{(\vu, \vv)}.$$
An explicit expression for $G$ can be found via  Fourier series  and this leads to an explicit expression for $W$ of the form $\sum_{i\neq j} \mathrm{E}(a_i-a_j)$ where $\mathrm{E}$ is an Eisenstein series (for more details see \cite[Lemma 1.3]{ss1} and also \cite{bs}). In this periodic setting, the expression of $W$ is thus much simpler than \eqref{Wroi} and reduces to the computation of a sum of explicit pairwise interaction.

\item[-] When the set of points $\Lambda $ is itself exactly a lattice $\mz \vu + \mz \vv$ then $W$ can be expressed explicitly through the Epstein Zeta function of the lattice. Moreover, using results from number theory, it is proved in \cite[Theorem 2]{ss1},  that  the unique minimizer of $W$ over lattice configurations of fixed volume is the triangular lattice. This supports the conjecture that the Abrikosov triangular lattice is a global minimizer of $W$, with  a slight abuse of language since $W$ is here not a function of the points, but of their associated ``electric fields" $\j_{\{a_i\}}$.
\end{itemize}
This last fact allows us to think of $W$ as a way of measuring the disorder  and lack of  homogeneity of a configuration of points in the plane (this point of view is pursued in \cite{bs} with explicit computations for random point processes). Another way to see it is to view $W$ as measuring the distance between $\sum_{p\in\Lambda}  \delta_p $ and the constant $m$ in $H^{-1}$, the dual space to the Sobolev space $H^1_0$ (with $\|f\|_{H^1_0}=\|\nab f\|_{L^2}$) which only makes sense modulo the ``renormalization" as $\eta \to 0$ and modulo normalizing by the volume.

We may now define  the constant $\a$ which appears in Theorem \ref{valz} and in Theorem~\ref{th2} below:
\begin{equation}\label{defa}
\a:= \frac1\pi \int_\E \min_{\mathcal A_{\bm(x)}} W \,dx = \frac{1}{\pi}\min_{\mathcal{A}_1} W-\hal \int_\E \bm(x)\log\bm(x)\, dx,\end{equation}
where we have used  \eqref{minalai2} and the fact that, from \eqref{112},  $\int_\E\bm=1$. Note that $\a$ only depends on $V$, via the integral term, and on the (so far) unknown constant $\min_{\mathcal{A}_1} W$.

\subsection{Statement of main results}

Our first  result identifies the $\Gamma$-limit of $\{F_n\}_n$, defined in  \eqref{defF} or \eqref{lienfw}. This in particular allows a description at the microscopic level of  the weighted Fekete sets minimizing $\w$. Below we abuse notation by writing $\nu_n= \sum_{i=1}^n \delta_{x_i}$ when it should be $\nu_n= \sum_{i=1}^n \delta_{x_{i,n}}$.
For such a $\nu_n$, we let $\nu_n' = \sum_{i=1}^n\delta_{x_i'}$
be the measure in blown-up coordinates and  $\j_{n} = -
\nab  H'_n$ be the associated electric field, where $H'_n$ is defined by \eqref{Hp} --- equivalently $\j_{n}$
 is the solution of $\curl \j_{n} = 0$, $\div \j_{n} = 2\pi(\nu_n' - \bm' dx')$
 in $\mr^2$ which tends to $0$ at infinity.
 (To avoid confusion, we emphasize here that $\nu_n$ lives at the original scale while $\j_{n}$ lives at the blown-up scale and that $m_0'$ is the blown-up density of the equilibrium measure $\mu_0$.)
 We also let
\begin{equation}\label{pnun}P_{\nu_n} = \dashint_{\E} \delta_{(x,\j_{n}(\sqrt{n}x+\cdot))}\,dx,\end{equation}
i.e. the push-forward of the normalized Lebesgue measure on $\E$ by $x \mapsto (x, \j_{n} (\sqrt{n}x+\cdot)).$
It is a probability measure on $X:=\E \times L^p_\loc(\mr^2,\mr^2)$ (couples of (blow-up centers, blown-up current around this center)).
\ed{We emphasize that $P_{\nu_n}$ is a probability measure which has nothing to do with $\Q$. Each realization or configuration $(x_1, \dots, x_n)$ gives rise in a deterministic fashion to its $P_{\nu_n}$, which encodes all the blown-up profiles of associated electric fields. We denote by $i_n$ this mapping (or embedding)
\begin{equation}
\label{in}
\begin{array}{ll}
i_n: & \mc^n \to \P(X)
\\
&
(x_1, \dots, x_n) \mapsto P_{\nu_n}\end{array}
\end{equation}
where $\P(X)$ denotes the space of probability measures on $X= \E\times \Lp$. We view $\P(X)$ as endowed with the topology of weak convergence of probabilities.
}

The limiting object as $n\to +\infty$ in the $\Gamma$-limit of $w_n$ was the limit of $\frac{\nu_n}{n}$. In taking the  $\Gamma$-limit of $F_n$, the limiting object is more complex, it is the limit  $P$  of  $P_{\nu_n}$. This is a probability measure on all blown-up electric fields obtained from a given $(x_1, \dots, x_n)$. Thus  it is like a  Young measure akin to the Young measures on micropatterns introduced in \cite{am}.

We will  here and below use the notation
\begin{equation} \label{dxr} D(x',R)= \nu_n \(B\(x, \frac R{\sqrt{n}}\) \) -n \mo \( B\(x, \frac R{\sqrt{n}} \)\), \end{equation}
where $x' = \sqrt n x$ as usual, to denote the fluctuations of the number of points in a microscopic ball of radius $R$.
Note that $\widehat{\F}$ was defined in \eqref{fnhat} and the result below is slightly stronger than the $\Gamma$-convergence of $\F$ since $\F \ge\widehat{\F}$.


\begin{theo}[Microscopic behavior of weighted Fekete sets] \label{th2}  Let the potential $V$ satisfy assumptions \eqref{assumpV1}--\eqref{assumpV3}. Let $m_0$ be the density of the equilibrium measure $\mu_0$. Fix from now on $1<p<2$ and let $X= \E \times L^p_\loc(\mr^2,\mr^2)$.\medskip

{\bf A. Lower bound.}\ Let  $\nu_n= \sum_{i=1}^n \delta_{x_i}$ be a sequence such that $\widehat{ F_n}(\nu_n)  \le C$.
Then $P_{\nu_n}$ defined by \eqref{pnun} is a probability measure on $X$ and

\begin{enumerate}
\item Any subsequence of $\{P_{\nu_n}\}_n$ has a convergent subsequence converging to some $P\in \P(X)$ as  $n \to \infty$.
\item The first marginal of $P$ is the normalized Lebesgue measure on $\E$. $P$ is invariant  by $(x,\j)\mapsto\(x,\j(\lambda(x)+\cdot)\)$, for any $\lambda(x)$ of class $C^1$  from $\E$ to $\mr^2$ (we will say {\em  $T_{\lambda(x)}$-invariant}).
\item For $P$ almost every $(x,\j)$ we have $\j\in\mathcal A_{\bm(x)}$.
\item Defining $\a$ as in \eqref{defa}, it holds that
\begin{equation}\label{thlow}\liminf_{n \to \infty} \widehat{F_n}(\nu_n) \ge \frac{|\E|}{\pi}\int W(\j) \, dP(x,\j)\ge\a .\end{equation}
\end{enumerate}

{\bf B. Upper bound construction.}\ Conversely, assume $P$ is a $T_{\lambda(x)}$-invariant probability measure on $X$ whose first marginal is $\frac{1}{|\E|}dx_{|\E}$ and such that for $P$-almost every $(x,\j)$ we have $\j\in\mathcal A_{\bm(x)}$. Then there exists a sequence $\{\nu_n= \sum_{i=1}^n \delta_{x_i}\}_n$ of empirical measures on $\E$ and a sequence $\{\j_n\}_n$ in $L^p_\loc(\mr^2,\mr^2)$ such that $\div \j_n = 2\pi(\nu_n' -\bm')$ and   such that defining $P_n$ as in \eqref{pnun}, we have $P_n\to P$ as $n \to \infty$  and
\begin{equation}\label{thup}\limsup_{n \to \infty} F_n(\nu_n)  \le \frac{|\E|}{\pi}\int W(\j) \, dP(x,\j). \end{equation}
 \medskip

{\bf C. Consequences for minimizers.}  If $(x_1,\dots, x_n)$ minimizes $w_n$ for every $n$ and  $\nu_n=\sum_{i=1}^n \delta_{x_i}$, then  the limit $P$ of $P_{\nu_n}$ as defined in \eqref{pnun} satisfies the following.
\begin{enumerate}
\item For $P$-almost every $(x,\j)$,  $\j$ minimizes $W$ over $\mathcal{A}_{\bm(x)}.$
\item We have $$ \lim_{n \to \infty} F_n(\nu_n) =\lim_{n\to \infty} \widehat{\F}(\nu_n)= \frac{|\E|}{\pi}\int W(\j) \, dP(x,\j)= \a, \qquad \lim_{n\to \infty}\sum_{i=1}^n \dist^2(x_i,\E)=0.$$
\item There exists $C>0$ such that  for every $x'\in \mr^2$,
$R>1$ and using the notation \eqref{dxr} we have
\begin{equation}\label{dxr2}
D(x', R)^2 \min \(1, \frac{|D(x',R)|}{R^2}\)\le C n.\end{equation}
\end{enumerate}
\end{theo}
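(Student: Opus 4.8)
The plan is to prove Theorem~\ref{th2} by treating the three parts in sequence, with the $\Gamma$-convergence argument (Parts A and B) forming the technical core and Part C following as a relatively soft consequence.

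\textbf{Part A (Lower bound).} First I would establish that the bound $\widehat{F_n}(\nu_n)\le C$ forces quantitative control on the local discrepancies $D(x',R)$; this is where a ``ball construction'' or mass-displacement lower bound for $W(\nab H_n',\indic_{\mr^2})$ enters, giving that the renormalized energy density cannot be too negative on average and hence that the number of points in microscopic balls cannot fluctuate wildly. Together with the fact that $W$ is bounded below (from \cite{ss1}), this gives tightness of the push-forward measures $P_{\nu_n}$ in $\P(X)$, yielding item~(1) by Prokhorov. Item~(2) on the first marginal is immediate from the definition \eqref{pnun}, and the $T_{\lambda(x)}$-invariance is proven by comparing $P_{\nu_n}$ built from recentering at $\sqrt n x$ versus $\sqrt n x+\lambda(x)$ and showing the two differ by a vanishing amount as $n\to\infty$ (a change of variables in the average over $\E$, using that $\lambda$ is $C^1$). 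Item~(3), that $\j\in\mathcal A_{\bm(x)}$ for $P$-a.e.\ $(x,\j)$, combines the local discrepancy bound (which gives \eqref{densbornee}) with passing the equation $\div\j_n=2\pi(\nu_n'-\bm')$ to the limit, noting that $\bm'(x')\to\bm(x)$ near the blow-up center by continuity. Item~(4), the energy lower bound \eqref{thlow}, is the heart of Part A: one writes $\widehat{F_n}(\nu_n)=\frac1{n\pi}W(\nab H_n',\indic_{\mr^2})$, localizes the integral of $|\j_n|^2$ over squares, uses the ergodic-theorem/Varadhan-type machinery alluded to in the introduction to identify $\liminf$ of the spatial average with an integral against the limiting Young measure $P$, and then applies the scaling relation \eqref{minalai2} together with the definition \eqref{defa} of $\a$ to get the second inequality in \eqref{thlow}. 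The lower-semicontinuity of $\j\mapsto W(\j)$ under $L^p_\loc$ convergence (proven as in \cite{ss1}) is what makes the first inequality survive the limit.

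\textbf{Part B (Upper bound construction).} Here, given a $T_{\lambda(x)}$-invariant $P$ with the stated marginal and support in $\bigcup_x\mathcal A_{\bm(x)}$, I would build the recovery sequence by a two-scale construction: partition $\E$ into small cells, and in each cell use a configuration sampled from a typical fiber of $P$, appropriately rescaled by $\sqrt{\bm(x)}$ via \eqref{minalai1}, then glue these periodic-ish patterns together while correcting the number of points to be exactly $n$ and correcting the divergence to match $\bm'$ exactly. One must check that the gluing errors and boundary-layer contributions are $o(n)$ in energy (using that $W$ ignores compact perturbations and the Lipschitz cutoffs $\chi_{K_R}$), and that the resulting $P_n$ converges to $P$; the approximation of an arbitrary admissible $P$ by ``screenable'' or finite-energy-with-good-boundary configurations is the delicate point. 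The inequality \eqref{thup} then follows by construction since $F_n=\widehat{F_n}$ when the points are placed on $\E$ (so $\z(x_i)=0$).

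\textbf{Part C (Consequences for minimizers).} If $(x_1,\dots,x_n)$ minimizes $w_n$, then by \eqref{lienfw} it minimizes $F_n$, so combining the upper bound construction applied to an optimal $P$ with the lower bound \eqref{thlow} forces $\int W(\j)\,dP=\frac{\pi\a}{|\E|}$ and $\lim_n F_n(\nu_n)=\a$; since $F_n\ge\widehat F_n$ with equality in the limit, the points must asymptotically sit on $\E$, i.e.\ $\sum_i\dist^2(x_i,\E)\to 0$ from the $2n\sum_i\z(x_i)$ term and \eqref{lemz}. Item~(1), that $\j$ minimizes $W$ over $\mathcal A_{\bm(x)}$ for $P$-a.e.\ $(x,\j)$, follows because $\int W\,dP$ equals the minimum possible value $\frac1{|\E|}\int_\E\min_{\mathcal A_{\bm(x)}}W\,dx$ (by \eqref{defa}) while pointwise $W(\j)\ge\min_{\mathcal A_{\bm(x)}}W$ for $P$-a.e.\ $(x,\j)$ by item~(3) of Part A, so equality must hold a.e. Item~(3), the discrepancy estimate \eqref{dxr2}, is a deterministic consequence of the energy bound: using the lower bound for $\frac12\int_{K_R}|\j_n|^2$ in terms of $D(x',R)^2\min(1,|D(x',R)|/R^2)$ (the same ball-construction lower bound used in Part A, now applied with the finite optimal energy $F_n(\nu_n)\le C$), summed or localized appropriately, gives the claimed bound with a uniform $C$.

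The main obstacle I expect is the lower bound \eqref{thlow}, specifically reconciling the lack of local charge neutrality and the non-pointwise-bounded-below energy density with the passage to the averaged/Young-measure formulation; this is exactly the difficulty the authors flag in Section~\ref{subsecsplit}, and it is resolved by the abstract ergodic-theorem construction inspired by Varadhan, combined with a careful ball-construction lower bound that controls the negative part of the energy density in terms of the local discrepancies $D(x',R)$.
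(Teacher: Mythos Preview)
Your proposal is essentially correct and follows the same architecture as the paper: Part~A via the abstract ergodic/Varadhan framework combined with the mass-spreading (ball-construction) result, Part~B via the screening-and-gluing construction of Proposition~\ref{construct}, and Part~C as a soft consequence plus the discrepancy estimate of Proposition~\ref{5.1}.

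One technical point deserves a caveat. You write that ``the lower-semicontinuity of $\j\mapsto W(\j)$ under $L^p_\loc$ convergence is what makes the first inequality survive the limit.'' This is not quite how the paper proceeds, and for good reason: $W(\j)$ is defined as a $\limsup_R$, so it is not obviously lower semicontinuous, and the local energy density $\frac12|\j|^2 + \pi\sum\log\eta$ is not bounded below pointwise. The paper's resolution is to \emph{enlarge the state space} to include the mass-spread measure $g_{\nu_n}$ of Proposition~\ref{wspread} as a third coordinate, and to take $\f_n(x,\nu,\j,g)=\frac1\pi\int\chi\,dg$ as the local functional in the ergodic framework. Since $g$ is bounded below by a constant and converges weakly, the $\Gamma$-liminf is then trivial (Lemma~\ref{liminf}); only at the very end, after averaging over $K_R$ and using \eqref{wg} together with the point-density bound \eqref{bgnalpha}, does one recover $W(\j)$ from $\int\chi*\indic_{K_R}\,dg$. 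Your instinct that the ball construction ``controls the negative part of the energy density'' is exactly right, but the implementation is to carry $g$ along as data rather than to invoke lower semicontinuity of $W$ directly.
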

Note that part~B of the theorem is only a partial converse to part~A because the constructed $\j_n$ need not be curl  free.
\ed{
\begin{remark}\label{rem15}
Defining $Q_{\nu_n} = \dashint_{\E} \delta_{x, \nu_n'(\sqrt{n} x+\cdot))}, dx$, or equivalently  as the push-forward of $P_{\nu_n}$ by the map $(x,\j) \mapsto  \frac{1}{2\pi} \div \j + m_0'(\sqrt{n} x+\cdot) dx'$, we can also express this limiting result in terms of the  limit $Q$ to $Q_{\nu_n}$, which is  the push-forward of $P$ by $(x,\j) \mapsto \frac{1}{2\pi} \div \j + m_0(x)$.
The limiting energy for both the upper bound and the lower bound is then  $$\frac{|\E|}{\pi} \int \mathbb{W}(\nu) \, dQ(x,\nu).$$
Of course such a statement is a bit weaker than Theorem \ref{th2} since some information is lost: namely we do not keep the information of  which $\j$ corresponded to $\nu$.
\end{remark}

}

This  theorem is the analogue of the main result
 of \cite{ss1} but for $\w$ rather than the Ginzburg-Landau energy. It is technically simpler to prove, except for the possibility of a nonconstant weight  $\bm(x)$ which was absent from  \cite{ss1}.  It can be stated as the fact that $\frac{|E|}{\pi}\int W dP$, which can be understood as the average of $W$ with respect to all possible blow-up centers in $\E$ (chosen uniformly at random), is the $\Gamma$-limit of $w_n$ at next order. Its minimum over all admissible probabilities is  $\a$.

 The estimate \eqref{dxr2} gives a control on the ``discrepancy" $D$ (between the effective   number of points  and the expected one)  at the scale $ R/\sqrt{n}$.
Note that in  a  recent paper \cite{ao}, the authors  also study the fine  behavior  of weighted Fekete sets. Using completely different methods, based on Beurling-Landau densities and techniques going back to \cite{landau},  they are able to show the very strong result that
$$ \limsup_{R\to \infty} \limsup_{n\to \infty} \frac{D(x_n,R)}{R^2}=0,$$ as long as $\dist(x_n, \p \E') \ge \log^2 n  $.
 This shows that the density of points follows $\mu_0$ at the microscopic scale  $1/\sqrt{n}$ and thus the configurations are very rigid. This still leaves however  some uncertainty about the patterns they should follow. On the contrary, our result is less precise about $D(x,R)$ since we only recover the optimal estimate when $R$ grows faster than $n^{1/4}$,  but it  connects the  pattern formed by the points to the  Abrikosov triangular lattice via the minimization of $W$.

\medskip

We now turn to Coulomb gases, i.e. to the case with temperature.  It is straightforward from the form \eqref{loi2} and the estimate (provided by Theorem \ref{valz})
$\log \K=O(n\beta)$ where $\K$ is defined in \eqref{defK}, to deduce  that $\F\le C$  except on a set of small probability, because $F_n$ controls the deviation between $\nu_n$ and $n\mu_0$ and controls $W$.
This fact allows to derive various consequences, the first being estimates on the probability of certain rare events.

\begin{theo} \label{th3} Let $V$ satisfy assumptions  \eqref{assumpV1}--\eqref{assumpV4}.

There exists a universal constant $R_0>0$ and $c,C>0$ depending only on $V$ such that: For any $\beta_0>0$, any $n$ large enough depending on $\beta_0$, and any $\beta>\beta_0$, for   any $x_1,\ldots,x_n\in \mr^2$, any $R>R_0$,  any $x_0' = \sqrt{n} x_0\in \mr^2$ and any $\eta>0$, letting  $\nu_n=\sum_{i=1}^n \delta_{x_i}$,  we have the following:
\begin{equation}\label{local} \log \Q \( \left|D(x_0',R)\right|\ge \eta R^2   \) \le
- c\beta \min(\eta^2,\eta^3) R^4 + C \beta (R^2+ n) + C  n .\end{equation}
\begin{equation}\label{controlez}
\log\Q\( \int\x \, d\nu_n \ge \eta\) \le - \hal  n \beta\eta+ C  n (\beta+1) .\end{equation}
Moreover, for any smooth bounded  $U' = \sqrt n U \subset \mr^2$,
\begin{equation}\label{fubi}
\log\Q\( \int_{U'}   \frac{ D(x',R)^2}{R^2} \min \(1,\frac{|D(x',R)|}{R^2} \) \, dx'  \ge
  \eta \) \le   n \beta ( - c\eta+ C |U| + C ) +Cn.
\end{equation}
Finally, if $q\in[1,2)$ there exists $c,C>0$ depending on $V$ and $q$ such that $\forall\eta\ge 1$, $R>0$,
\begin{equation}\label{cv}
\log\Q\(\(1+\frac{R^2}n\)^{\frac12-\frac1q} \|\nu - n \mo\|_{W^{-1,q}(B_{R/\sqrt n})}\ge
\eta \sqrt{n}\)\le -c n\beta\eta^2 + Cn(\beta+1),
\end{equation}
where $W^{-1,q}(\Omega)$ is  the dual of the Sobolev space $W^{1,p}_0(\Omega)$ with $\frac{1}{p}+\frac{1}{q}=1$; in particular $W^{-1,1}$ is the dual of Lipschitz functions.
\end{theo}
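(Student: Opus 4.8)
The plan is to derive all four estimates from one mechanism: each bound on $\Q$ is reduced to a \emph{deterministic} lower bound for the energy $\widehat{F_n}$ on the relevant bad event, which is then combined with the partition-function estimate of Theorem~\ref{valz} and with assumption \eqref{assumpV4}.

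\textbf{The general mechanism.} From \eqref{loi2}, for any measurable $A\subset(\mr^2)^n$ and $\nu_n=\sum_{i=1}^n\delta_{x_i}$,
\[
\Q(A)=\frac1\K\int_A e^{-\frac{n\beta}2 \F(\nu_n)}\,dx_1\cdots dx_n .
\]
By \eqref{defK} and the lower bound in Theorem~\ref{valz}, $\log\K\ge -Cn\beta$ for $n$ large and $\beta\ge\beta_0$, so $-\log\K\le Cn\beta$. For the numerator I use the splitting $\F=\widehat{F_n}+2\int\z\,d\nu_n$ and $\z\ge 0$ (from \eqref{eqz}): writing $2\int\z\,d\nu_n=\int\z\,d\nu_n+\int\z\,d\nu_n$, I keep one copy inside a (convergent) integral, bound the second copy by $0$ using $\z\ge0$, and bound $e^{-\frac{n\beta}2\widehat{F_n}}$ by its supremum over $A$, getting
\[
\int_A e^{-\frac{n\beta}2 \F(\nu_n)}\,dx\ \le\ e^{-\frac{n\beta}2\inf_A\widehat{F_n}}\left(\int_{\mr^2}e^{-\frac{n\beta}2\z(x)}\,dx\right)^{n}.
\]
Because $\z\ge 0$ with $\z\equiv0$ on $\E$, while $\z(x)\ge\frac12\left(\frac{V(x)}2-\log|x|\right)$ for $|x|$ large (since $U^{\mo}(x)\ge-\log|x|-C$ and $\frac V2-\log|x|\to+\infty$), assumption \eqref{assumpV4} gives, for $n$ large depending on $\beta_0$, that $\int_{\mr^2}e^{-\frac{n\beta}2\z}\,dx\le C$ with $C$ depending only on $V$; hence the last factor is $\le e^{Cn}$. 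Altogether
\[
\log\Q(A)\ \le\ Cn\beta+Cn-\frac{n\beta}2\inf_A\widehat{F_n}.
\]
So everything reduces to deterministic lower bounds for $\widehat{F_n}$ on the four bad events.

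\textbf{The deterministic lower bounds.} For \eqref{local} I would prove that, for any configuration, any $x_0'\in\mr^2$ and any $R>R_0$ ($R_0$ universal),
\[
\widehat{F_n}(\nu_n)=\frac1{n\pi}W(\nab H_n',\indic_{\mr^2})\ \ge\ \frac cn\,D(x_0',R)^2\min\!\left(1,\frac{|D(x_0',R)|}{R^2}\right)-\frac{C(R^2+1)}n-C,
\]
obtained by combining a ball-construction lower bound for $\int_{B(x_0',2R)}|\nab H_n'|^2$ in terms of the local charge imbalance $D(x_0',R)$ --- a charge excess $D$ spread over scale $R$ costs energy $\gtrsim D^2\min(1,|D|/R^2)$, minus the logarithmic renormalization of the $O(R^2)$ points inside, which produces the $-CR^2$ --- with a global lower bound $W(\nab H_n',\indic_{\mr^2})\ge -Cn$ for the complementary mass (a mass-displacement argument, needed because the energy density of $W$ is not pointwise bounded below). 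Since $t\mapsto t^2\min(1,t/R^2)$ is increasing, on $\{|D(x_0',R)|\ge\eta R^2\}$ this gives $\widehat{F_n}\ge\frac cn\min(\eta^2,\eta^3)R^4-CR^2/n-C$, and the mechanism yields \eqref{local}. For \eqref{controlez} I re-run the mechanism but, on $\{\int\z\,d\nu_n\ge\eta\}$, bound the copy of $e^{-\frac{n\beta}2\int\z\,d\nu_n}$ that was previously discarded by $e^{-\frac{n\beta}2\eta}$ rather than by $1$; together with the global bound $\widehat{F_n}\ge -C$ this gives \eqref{controlez}. Estimate \eqref{fubi} follows in the same spirit by integrating the local discrepancy bound over $x_0'\in U'$ and using the bounded overlap of the balls $B(x_0',2R)$ to dominate $\int_{U'}\frac{D(x',R)^2}{R^2}\min(1,|D(x',R)|/R^2)\,dx'$ by $W(\nab H_n',\indic_{\mr^2})$ up to an additive $O(|U'|+n)$, and then applying the mechanism. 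For \eqref{cv} I would establish, via elliptic estimates together with a truncation of $H_n'$ near the points --- legitimate because $q<2$, so $\nab H_n'\in L^q_{\loc}$ --- an inequality of the form
\[
\left(1+\frac{R^2}n\right)^{1-\frac2q}\|\nu_n-n\mo\|_{W^{-1,q}(B_{R/\sqrt n})}^2\ \le\ C\left(n\widehat{F_n}(\nu_n)+n\right),
\]
which on the bad event forces $\widehat{F_n}\ge c\eta^2-C$ and hence \eqref{cv}.

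\textbf{The main obstacle.} The analytic heart --- and the genuinely hard part --- is this family of deterministic lower bounds relating $\widehat{F_n}$ (equivalently $W(\nab H_n',\indic_{\mr^2})$) to the discrepancies $D(x',R)$ and to the $W^{-1,q}$-distance from $\nu_n$ to $n\mo$. The obstruction, already flagged in Section~\ref{subsecsplit}, is that there is no local charge neutrality (so $W$ cannot simply be localized) and the energy density of $W$ is not pointwise bounded below (so bounding a local piece from below requires first redistributing/regularizing the diffuse charge so that the complement still contributes $\ge -Cn$); this is what the ball-construction and mass-displacement machinery adapted from \cite{ss1} is for. Once these deterministic estimates are in hand, the four probabilistic statements follow immediately from the mechanism above, all the $\beta$-uniformity being supplied by Theorem~\ref{valz} and by \eqref{assumpV4} (both valid once $n$ is large depending only on $\beta_0$).
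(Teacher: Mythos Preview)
Your proposal is correct and follows essentially the same route as the paper: the probabilistic bounds are all reduced, via \eqref{loi2} together with the lower bound on $\K$ (Corollary~\ref{coro43}) and Lemma~\ref{lemintxi}, to deterministic lower bounds on $\widehat{F_n}$ (or $\F$) on each bad event, and these in turn are obtained from the mass-spreading/ball-construction machinery of Proposition~\ref{wspread} (yielding Proposition~\ref{5.1} for the discrepancy bounds) and from the Lorentz-space estimate of \cite{st} recorded as Lemma~\ref{lemnew} for the $W^{-1,q}$ bound. The only point where your sketch is vaguer than the paper is \eqref{cv}: the needed inequality \eqref{fluctu} is not a soft ``elliptic estimate plus truncation'' but the specific $L^q$ control of $\j_{\nu_n}$ in terms of $W(\j_{\nu_n},\chi)$ imported from \cite{st}.
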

These estimates can roughly be read in the following way: as soon as $\eta$ is large enough, the events in parentheses have probability decaying like $e^{-cn}$. More precisely,  we bound the probability that a  ball contains too many or too few points compared to the expected number $n\mu_0(B)$, but whereas the circular law does it for a {\em macroscopic ball}, i.e. for $R$ comparable to $\sqrt n$, the estimate \eqref{local} is effective at intermediate scales, of the order of $n^{1/4}$. This is sometimes called in this context  ``undercrowding" or ``overcrowding" of points, see \cite{jlm,nsv,krish}. In view of similar results in \cite{jlm}  and the result of \cite{ao}, we can  expect this to hold as soon as $R\gg 1$, but this seems out of reach by our method. This can also be compared with analogous estimates without error terms proven in the case of Hermitian matrices, cf. \cite{esy}.  These results, in the Hermitian case, are proven in the general setting of Wigner matrices, i.e. Hermitian matrices with random i.i.d. entries, which do not need to be Gaussian.  They only concern some fixed $\beta$ however.

The estimate \eqref{fubi} gives a global version of this result: it  expresses a control on the average  microscopic ``discrepancy" $D$. This control is in $L^2$ for large values of the discrepancy, and in $L^3$ for small values. The estimate \eqref{controlez} allows, in view of \eqref{lemz}, to control (again with some threshhold to be beaten) the probability that some points may be far from the set $\E$.
Note that since $\nu_n$ is a non-normalized empirical measure, \eqref{controlez} ensures for example that the probability that a single  point lies at a distance $\eta$ from $\E$ is exponentially small as soon as $\eta$ is larger than some constant.
All these estimates rely on controlling $D$ by $\F$.

Finally,
\eqref{cv} tells us that fluctuations around the law $n \mo $ can be globally controlled   (take for example $R=\sqrt{n}$)  by
$O(\sqrt{n})$ (except with exponentially small probability).  We believe this estimate to be optimal. Its proof  uses in a crucial manner the result of \cite{st}, which controls, via Lorentz spaces, the difference $\nu_n - n \mo$ in terms of  $W$  \medskip or $\widehat{\F}$.

Our last result mostly expresses Theorem~\ref{th2} in a ``moderate" deviations language. Before stating it, let us recall for comparison the result of \cite{bz}:
\begin{theo}[Ben Arous - Zeitouni] \label{thebz} Let $\beta=2$ and $V(x)=|x|^2$. Denote by $\wQ$ the image of the law \eqref{loi} by the map $(x_1,\dots,x_n)\mapsto \nu_n$, where $\nu_n=\frac{1}{n}\sum_{i=1}^n \delta_{x_i}$. Then for any subset  $A$ of the set of  probability measures on $\mr^2$ (endowed with the topology of weak convergence), we have
$$
-\inf_{\mu \in \mathring{A}} \widetilde I(\mu) \leq \liminf_{n\to \infty} \frac{1}{n^2} \log \wQ(A) \leq \limsup_{n\to\infty} \frac{1}{n^2} \log \wQ(A) \leq -\inf_{\mu\in \bar A} \widetilde I(\mu),
$$
where $\widetilde I = I - \min I$.
\end{theo}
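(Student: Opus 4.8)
Theorem~\ref{thebz} is the main result of \cite{bz}, and the proof I have in mind is the by-now-standard scheme for establishing a large deviations principle (LDP) for the empirical measure of a Gibbs ensemble of the form \eqref{loi}, specialized to $\beta=2$, $V(x)=|x|^2$ (it works verbatim for any $\beta>0$ and any $V$ satisfying \eqref{assumpV1}). The plan is to prove the weak LDP at speed $n^2$ — an upper bound on compact sets and a lower bound on open sets — together with exponential tightness; these combine to give the full LDP with good rate function $\widetilde I = I - \min I$, $I$ as in \eqref{imubz}. Throughout one writes $\nu_n := \frac1n\sum_i\delta_{x_i}$ and, by \eqref{wn},
\[
\w(x_1,\dots,x_n) = n^2\Big(\iint_{x\ne y}-\log|x-y|\,d\nu_n(x)\,d\nu_n(y) + \int V\,d\nu_n\Big),
\]
which is $I(\nu_n)$ but with the diagonal removed.

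The crux — and the main obstacle — is that $I(\nu_n)=+\infty$ for an atomic measure, while the interaction $-\log|x-y|$ is moreover not bounded below, so $\w$ cannot be compared directly with $I(\nu_n)$. The remedy is the usual truncation: set $f_M(x,y):=\min(-\log|x-y|,M)$ and $I_M(\mu):=\iint f_M\,d\mu\,d\mu+\int V\,d\mu$. Then $f_M$ is continuous and bounded above, $I_M$ is weakly lower semicontinuous on probability measures and is bounded below using the confinement (from $-\log|x-y|+\tfrac14 V(x)+\tfrac14 V(y)\ge -C$, valid because \eqref{assumpV1} forces $V(x)-2\log|x|\to+\infty$), it increases monotonically $I_M\nearrow I$ as $M\to\infty$, and removing the diagonal and truncating costs only a controlled error, $\w(x_1,\dots,x_n)\ge n^2 I_M(\nu_n)-nM$.

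For the upper bound, fix a closed set $A$ and split $V=\tfrac12 V+\tfrac12 V$ (more precisely interpolate $I_M(\nu_n)\ge \alpha\inf_A I_M+(1-\alpha)(-C+\tfrac12\int V\,d\nu_n)$ on $\{\nu_n\in A\}$): one half of the potential makes $\int e^{-cnV(x)}\,dx<\infty$ so that the integral over all configurations converges, the other half stays inside $I_M$. Combined with the matching leading-order estimate $\tfrac1{n^2}\log\Z\to-\tfrac\beta2\min I$ (which itself falls out of the upper and lower bound arguments) this gives, for every $M$ and every $\alpha<1$, $\limsup_n\tfrac1{n^2}\log\wQ(A)\le -\tfrac{\beta\alpha}{2}\inf_A I_M+\tfrac\beta2\min I+\tfrac{\beta(1-\alpha)}{2}C$; sending $\alpha\to1$, then $M\to\infty$ and invoking the standard lemma $\inf_A I_M\to\inf_A I$ (proved first on compact sets via the monotone convergence $I_M\nearrow I$ and lower semicontinuity, then on general closed sets using exponential tightness) yields $\limsup_n\tfrac1{n^2}\log\wQ(A)\le -\inf_{\bar A}\widetilde I$. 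For the lower bound, fix an open $A$ and $\mu\in A$ with $I(\mu)$ close to $\inf_A I$; by convolution with a mollifier and truncation to a large ball one may assume $\mu$ has a density bounded above and below on a compact set (with $I(\mu)$ unchanged in the limit), then partition its support into $n$ cells of $\mu$-mass $1/n$, force one $x_i$ into each cell, and estimate the resulting integral by a Riemann sum — the bounded density ensures the loss from the excised self-energy is only $o(n^2)$ — to obtain $\liminf_n\tfrac1{n^2}\log\wQ(A)\ge-\tfrac\beta2\big(I(\mu)-\min I\big)$, hence $\ge-\inf_A\widetilde I$ after optimizing over $\mu$.

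Finally, exponential tightness follows from the confinement once more: from $-\log|x_i-x_j|\ge-\log(1+|x_i|)-\log(1+|x_j|)-C$ one gets $\w\ge\sum_i\big(nV(x_i)-2n\log(1+|x_i|)\big)-Cn^2$, so the event $\{\int|x|\,d\nu_n\ge L\}$ has $\wQ$-probability at most $e^{-n^2\kappa(L)}$ with $\kappa(L)\to+\infty$, exhibiting the required compact exhaustion of probability measures. Assembling the weak LDP with exponential tightness gives the full LDP; $I$ is a good rate function (weakly compact sublevel sets) by the same confinement together with the lower semicontinuity of the log-energy, and its unique minimizer is $\mo=\tfrac1\pi\indic_{B_1}\,dx$ with $\min I=\tfrac34$, which is exactly the asserted statement.
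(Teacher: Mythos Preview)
The paper does not prove Theorem~\ref{thebz}: it is stated there only ``for comparison'' and attributed to \cite{bz}. So there is no proof in the paper to compare against; your sketch is essentially an outline of the original argument of Ben Arous--Zeitouni (truncate the logarithmic kernel, obtain the weak LDP upper bound via the continuous truncated functional, the lower bound by regularizing $\mu$ and a Riemann-sum construction, and upgrade to a full LDP via exponential tightness from the confinement), which is indeed the intended reference.

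Your outline is correct in spirit; a couple of places would need tightening if this were to be a full proof rather than a sketch. In the upper bound you should be explicit that the passage $\inf_A I_M\to\inf_{\bar A} I$ as $M\to\infty$ uses lower semicontinuity of $I$ together with compactness of sublevel sets (i.e.\ goodness of the rate function), not merely monotone convergence; and in the lower bound the claim that forcing one point per cell produces only $o(n^2)$ error requires that the regularized $\mu$ have bounded density so that the near-diagonal contribution $\sum_{i\neq j}$ over nearby cells is controlled, which you mention but should justify. These are standard and are handled in \cite{bz}; since the paper itself only quotes the result, your proposal is adequate as a pointer to that proof.
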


\begin{theo}\label{th4} Let $V$ satisfy assumptions  \eqref{assumpV1}--\eqref{assumpV4}. For any $ \beta>0$, the following holds. 
For any $n>0$ let $A_n\subset \mc^n$. Denote
\begin{equation}\label{ainf} A_\infty = \bigcap_{n>0}\overline{\bigcup_{m>n} i_m(A_m)},\end{equation}
where $i$ is as in \eqref{in}, and the topology is the weak convergence on $\P(X)$.
Then for any $\eta>0$ there is $C_\eta>0$ depending on $V$ and $\eta$ only such that $\a$ being as in \eqref{defa},
\begin{equation}
\label{ldr}
\limsup_{n\to \infty} \frac{ \log \Q(A_n)}{n} \le -\frac{\beta}{2}
 \Big(\frac{|\E|}\pi \inf_{P\in A_\infty}\int W(\j) dP(x,\j) - \alpha- \eta - \frac{C_\eta}\beta \Big).
\end{equation}

Conversely, let  $A\subset\P(X)$ be a set of $T_{\lambda(x)}$-invariant probability measures on $X$ and let $\mathring{A}$ be the interior of $A$. Then for any $\eta>0$, there exists a sequence of subsets  $A_n\subset \E^n$ such that
\begin{equation} \label{ldlb}
-\frac{\beta}{2} \(\frac{|\E|}\pi \inf_{P\in \mathring{A}}\int W(\j) dP(x,\j) - \alpha+ \eta+\frac{C_\eta}\beta\) \le \liminf_{n\to\infty} \frac{\log \Q(A_n)}{n},
\end{equation}
and such that for any sequence $\{\nu_n = \sum_{i=1}^n\delta_{x_i}\}_n$ such that  $(x_1,\dots,x_n)\in A_n$ for every $n$ there exists a sequence of fields $\j_n\in\Lp$ such that $\div \j_n = 2\pi(\nu_n' - \bm')$ and such that --- defining $P_n$ as in \eqref{pnun} with $\j_n$ replacing $\j_{\nu_n}$ --- we have
\begin{equation}\label{fauxi} \lim_n P_n\in \mathring A.\end{equation}
\end{theo}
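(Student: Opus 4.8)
The plan is to work throughout from the rewriting \eqref{loi2}: for any measurable $A_n\subset\mc^n$,
\[ \Q(A_n)=\frac{1}{\K}\int_{A_n}e^{-\frac{n\beta}{2}\F(\nu_n)}\,dx_1\dots dx_n,\qquad \nu_n=\textstyle\sum_{i=1}^n\delta_{x_i}, \]
and to estimate the numerator and the constant $\K$ separately. The ingredients are: the compactness and lower bound statements of part A of Theorem~\ref{th2} (in particular \eqref{thlow}); the upper bound construction of part B of Theorem~\ref{th2}, in a quantitative form; the trivial bounds $\F=\widehat{\F}+2\int\z\,d\nu_n\ge\widehat{\F}$ and $\int\z\,d\nu_n=\sum_i\z(x_i)\ge0$; and the elementary fact that, since $\z\ge0$, $\z=0$ a.e.\ on $\E$ and $\z(x)\sim V(x)/2-\log|x|$ at infinity, assumption \eqref{assumpV4} gives $\int_{\mr^2}e^{-n\beta\z}\,dx<+\infty$ once $n\beta\ge\beta_1$, with $\int_{\mr^2}e^{-n\beta\z}\,dx\to|\E|$ as $n\to\infty$. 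One also records, from the construction together with a count of nearby admissible configurations, that for every $\beta>0$ and $\eta>0$ there are $C_\eta>0$ depending only on $V$ and $\eta$, and $C_V$ depending only on $V$, such that
\[ -\tfrac{n\beta}{2}(\a+\eta)-C_\eta n+o(n)\ \le\ \log\K\ \le\ -\tfrac{n\beta}{2}(\a-\eta)+C_V n+o(n), \]
the left inequality being the lower bound underlying Theorem~\ref{valz}, the right one a consequence of the uniform estimate $\widehat{\F}(\nu_n)\ge\a-\eta$ obtained below.

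\medskip
\noindent\textbf{Upper bound \eqref{ldr}.} Set $\ell=\frac{|\E|}{\pi}\inf_{P\in A_\infty}\int W(\j)\,dP(x,\j)$ and fix $\delta>0$. I would first show that for $n$ large every $(x_1,\dots,x_n)\in A_n$ has $\widehat{\F}(\nu_n)\ge\ell-\delta$: otherwise there are $n_k\to\infty$ and configurations in $A_{n_k}$ with $\widehat{\F}(\nu_{n_k})<\ell-\delta$, which being bounded above forces, by the compactness in part A of Theorem~\ref{th2}, $P_{\nu_{n_k}}\to P$ along a subsequence; then $P\in A_\infty$ by the definition \eqref{ainf}, and \eqref{thlow} gives $\liminf_k\widehat{\F}(\nu_{n_k})\ge\frac{|\E|}{\pi}\int W\,dP\ge\ell$, a contradiction. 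Hence, since $\F\ge\widehat{\F}+2\int\z\,d\nu_n$ and $A_n\subseteq\mc^n$,
\[ \int_{A_n}e^{-\frac{n\beta}{2}\F}\,dx\ \le\ e^{-\frac{n\beta}{2}(\ell-\delta)}\Bigl(\int_{\mr^2}e^{-n\beta\z}\,dx\Bigr)^{n}; \]
dividing by $\K$, inserting the lower bound on $\log\K$ and $\int_{\mr^2}e^{-n\beta\z}\to|\E|$, passing to $\limsup$, choosing $\delta=\eta/2$ and enlarging $C_\eta$ to absorb $\log|\E|$ and $C_V$ yields \eqref{ldr}.

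\medskip
\noindent\textbf{Lower bound \eqref{ldlb}.} One may assume $\mathring A\ne\emptyset$ and $\ell':=\frac{|\E|}{\pi}\inf_{P\in\mathring A}\int W\,dP<+\infty$, otherwise \eqref{ldlb} is vacuous. Pick $P\in\mathring A$ with $\frac{|\E|}{\pi}\int W\,dP\le\ell'+\delta$ (so $\int W\,dP<\infty$, forcing $\j\in\mathcal A_{\bm(x)}$ for $P$-almost every $(x,\j)$), and apply part B of Theorem~\ref{th2} to $P$ to obtain empirical measures $\nu_n^\ast$ on $\E$, associated fields $\j_n^\ast$ with $\div\j_n^\ast=2\pi((\nu_n^\ast)'-\bm')$, with $P_{\nu_n^\ast}\to P$ and $\limsup_n\F(\nu_n^\ast)\le\ell'+\delta$. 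The crucial step is to thicken this single configuration into a family $A_n\subset\E^n$ on which $\F\le\ell'+2\delta$, whose image under $i_n$ remains in a fixed neighborhood of $P$ contained in $\mathring A$, and of Lebesgue measure $|A_n|\ge e^{-C_\delta n}$. Granting this, $\Q(A_n)\ge\K^{-1}e^{-\frac{n\beta}{2}(\ell'+2\delta)}|A_n|\ge\K^{-1}e^{-\frac{n\beta}{2}(\ell'+2\delta)-C_\delta n}$. For the matching upper bound on $\K$ one argues as above with $A_n$ replaced by $\mc^n$: if $\widehat{\F}(\nu_n)\ge\a-\delta$ failed uniformly for $n$ large, a bounded-above subsequence would, via the compactness of part A and \eqref{thlow} (which also gives $\frac{|\E|}{\pi}\int W\,dP\ge\a$ for every admissible limit), produce a contradiction; hence $\F\ge\a-\delta+2\int\z\,d\nu_n$ uniformly, whence $\log\K\le-\frac{n\beta}{2}(\a-\delta)+C_V n+o(n)$. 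Combining these, dividing by $n$, passing to $\liminf$, setting $\delta=\eta/3$ and enlarging $C_\eta$ yields \eqref{ldlb}. Property \eqref{fauxi} is built into the construction: each configuration of $A_n$ comes with its field $\j_n$, $\div\j_n=2\pi((\nu_n)'-\bm')$, and the perturbations defining $A_n$ are taken small enough in $L^p_\loc$ that every limit point of the resulting $P_n$ lies in $\mathring A$.

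\medskip
\noindent\textbf{The main obstacle.} Apart from the volume estimate $|A_n|\ge e^{-C_\delta n}$ (and the lower bound on $\log\K$, which rests on the same mechanism), everything above is bookkeeping on top of Theorems~\ref{valz} and~\ref{th2}. A Cartesian product of balls of radius $\rho/\sqrt n$ about the template $\nu_n^\ast$ will not do: it has measure $\sim(\pi\rho^2/n)^n=e^{-n\log n+O(n)}$, which, against $\K=e^{O(n)}$, forces $\frac1n\log\Q(A_n)\to-\infty$. One must instead exploit the genuine freedom in the construction of part B of Theorem~\ref{th2} --- assembling the blown-up configuration cell by cell and leaving each of its $\sim n$ points a true continuum of admissible positions within its own cell, then multiplying these per-cell freedoms --- so as to produce a set of the right exponential size $e^{-C_\delta n}$ on which the blown-up energy $W(\nab H'_n,\indic_{\mr^2})$, and hence $\F$, varies by only $O(\delta)$. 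Proving simultaneously that $\F$ stays within $O(\delta)$ of its target over this fat set, that its $i_n$-image converges into $\mathring A$, and that a compatible field can be chosen along it, is the technical heart of the argument; the remainder follows the scheme of \cite{ss1}.
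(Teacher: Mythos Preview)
Your upper-bound argument is essentially the paper's: pull out $\inf_{A_n}\widehat{F}_n$ from the exponential, bound the remaining factor by $\bigl(\int_{\mr^2} e^{-n\beta\zeta}\bigr)^n\to|\E|^n$, and control $\inf_{A_n}\widehat{F}_n$ from below via the compactness and lower bound of Theorem~\ref{th2}A (your contradiction argument is equivalent to the paper's choice of near-minimizers and passage to a subsequential limit in $A_\infty$). The upper bound on $\log\K$ is obtained identically, taking $A_n=\mc^n$ and using $\inf\frac{|\E|}{\pi}\int W\,dP=\alpha$.

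For the lower bound you correctly diagnose the obstacle, but your proposed remedy does not cure it. ``Cell-by-cell freedom''---giving each of the $n$ points room of size $O(1)$ in \emph{blown-up} coordinates---is precisely the Cartesian product of balls of radius $O(1/\sqrt n)$ in \emph{original} coordinates; the Jacobian of the blow-up contributes the fatal $n^{-n}$, and no per-point microscale wiggle room recovers it. What you are missing is the permutation symmetry of $w_n$ (and of $\Q$). The paper, through Proposition~\ref{construct}, defines
\[
A_n=\bigl\{(y_1,\dots,y_n):\ \exists\,\sigma\in S_n,\ |x_i-y_{\sigma(i)}|<\delta/\sqrt n\ \text{for all }i\bigr\},
\]
where $(x_1,\dots,x_n)$ is the template configuration. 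Since the template points are separated at scale $1/\sqrt n$, for small $\delta$ the $n$ balls are disjoint, each labelling $\sigma$ contributes a disjoint product of balls, and hence $|A_n|\ge n!\,(\pi\delta^2/n)^n$. Stirling then gives $\log|A_n|\ge 2n\log\delta-Cn$, exactly the order required. The control of $\F$ and of $\dist(P_n,P)$ under perturbations of size $\delta$ at the blown-up scale is the (comparatively routine) content of the construction; once you insert the $n!$ factor into your scheme, the remainder of your argument goes through as written.
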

Note that if $P_n$ was $P_{\nu_n}$, then \eqref{fauxi} would be equivalent to saying that $\cap_n\overline{\cup_{m>n} i_m(A_m)}\subset\mathring A$. The difference betwen $P_{\nu_n}$ and $P_n$ is that the latter is generated by a field $\j_n$ which is not necessarily a gradient.

Compared to Theorem \ref{thebz}  this result can  be seen as a next order (speed $n$ instead of $n^2$) deviations result, where the average of $W$ over blow-up centers plays  the role of a rate function, with a margin which becomes small as $\beta\to \infty$. While Theorem \ref{thebz} said that empirical measures at macroscopic scale converge to $\mo$, except for a set of exponentially decaying probability,  Theorem \ref{th4} says  that within the empirical measures which do converge to $\mo$, the ones
with large average of $W$ (computed after blow-up) also have exponentially decaying probability, but at the slower rate $e^{-n}$ instead of $e^{-n^2}$. More precisely, there is a threshhold $C/\beta$ for some $C>0$,  such that configurations  satisfying
$$\frac{|\E|}{\pi}\int  W\, dP \ge \a +\frac{C}{\beta}$$
have exponentially small probability, where we recall $\a$ is also the minimum possible value of $\frac{|\E|}{\pi}\int  W\, dP$.
Since we believe that $W$ measures the disorder of a (limit) configuration of (blown up) points in the plane, this means that most configurations have a certain order.
The threshhold, or gap, $C/\beta$ tends to $0$ as $\beta $ tends to $\infty$, hence in this limit, configurations have to be closer and closer to the minimum of the average of $W$, or have more and more order.

Modulo the conjecture that the minimum of $W$ is achieved by the perfect ``Abrikosov" triangular lattice, this constitutes a crystallisation result.  Note that to solve this conjecture, it would suffice to evaluate $\a$, which in view of Theorem \ref{valz} is equivalent to being able to compute the asymptotics of $\Z$  as $ \beta\to \infty$.

\ed{At nonzero temperature, the probabilities  $P$ are not expected to be concentrated on minimizers of $W$, indeed numerical simulations of the Ginibre ensemble\footnote{cf.  e.g. Benedek Valko's webpage {\tt http://www.math.wisc.edu/~valko/courses/833/833.html}} corresponding to $\beta=2$ show patterns of points with a certain microscopic disorder, which are  certainly not crystalline.  This is probably explained by the fact that at finite temperature, and in this order $n$, an entropy term should come to compete with the minimization of $W$.
One may wonder if at least there exists a limiting law on the probabilities $P_n$, and which it is.
The following theorem answers positively the question of  the existence:
\begin{theo}\label{additi}
For each integer $n$, and a given $\beta >0$, let $\widetilde{\mathbb{P}_n^\beta}$ denote the push-forward of $\Q$ by $i_n$ defined in \eqref{in}. It is an element of $\P(\P(X))$.
Then $\{ \widetilde{\mathbb{P}_n^\beta } \}_{n}$ is tight and converges, up to a subsequence, to a probability measure $\widetilde{\mathbb{P}^\beta}$  on $\P(X).$
\end{theo}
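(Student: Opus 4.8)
\emph{Strategy.} The space $X=\E\times\Lp$ is Polish, hence so is $\P(X)$ with its weak topology, and each $\widetilde{\mathbb{P}_n^\beta}$ is a Borel probability measure on $\P(X)$. By Prokhorov's theorem it suffices to prove that $\{\widetilde{\mathbb{P}_n^\beta}\}_n$ is tight; the existence of a subsequential limit $\widetilde{\mathbb{P}^\beta}$ then follows. I will use compact subsets of $\P(X)$ of the form $\mathcal{K}_M=\{P\in\P(X):\int_X\Phi\,dP\le M\}$, where $\Phi\colon X\to[0,+\infty]$ is a fixed lower semicontinuous function whose sublevel sets $\{\Phi\le t\}$ are precompact in $X$: such $\mathcal{K}_M$ is closed (by lower semicontinuity of $P\mapsto\int\Phi\,dP$ for the weak topology) and uniformly tight (since $P(\{\Phi>t\})\le M/t$ for $P\in\mathcal{K}_M$, while $\{\Phi\le t\}$ is precompact), hence compact. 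The two ingredients are then: (i) a $\Phi$ for which $\int_X\Phi\,dP_{\nu_n}$ is dominated by $\widehat{\F}(\nu_n)$ uniformly in $n$; and (ii) an estimate showing $\widehat{\F}(\nu_n)$ is not too large under $\Q$, except with small probability.

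\textbf{Step 1 (coercive bound on $P_{\nu_n}$).} The first marginal of $P_{\nu_n}$ is always the fixed probability measure $\frac{1}{|\E|}\,dx_{|\E}$ on the compact set $\E$, so only the $\Lp$-component requires attention. A bound on $\widehat{\F}(\nu_n)=\frac{1}{n\pi}W(\nab H_n',\indic_{\mr^2})$ controls, on average over the blow-up centers $x\in\E$, the local number of points $\nu_n'(B(\sqrt nx,R))$ and the truncated Dirichlet energy $\int_{B(\sqrt nx,R)\setminus\cup_iB(x_i',\eta)}|\j_n|^2$; being local, these estimates come with constants depending on $R,\eta$ but not on $n$. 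Together with the compactness property of electric fields --- a family of fields satisfying \eqref{eqj}--\eqref{eqnnu} with uniformly bounded truncated energies on each ball is precompact in $\Lp$ --- this is precisely the compactness argument underlying Theorem~\ref{th2}.A.(1); reading it quantitatively yields a lower semicontinuous coercive $\Phi\colon X\to[0,+\infty]$ and a constant $C>0$, depending only on $V$ and $p$, such that for every $n$ and every configuration,
\[
\int_X\Phi\,dP_{\nu_n}\ \le\ C\big(\widehat{\F}(\nu_n)+1\big),
\]
so that $\{P_{\nu_n}:n\ge1,\ \widehat{\F}(\nu_n)\le s\}\subset\mathcal{K}_{C(s+1)}$ for all $s>0$.

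\textbf{Step 2 (tail estimate for $\widehat{\F}$).} Since $\z\ge0$ by \eqref{eqz}, we have $\widehat{\F}\le\F$ by \eqref{fnhat}; hence, by \eqref{loi2},
\begin{align*}
\mathbb{E}_{\Q}\!\left[e^{\frac{\beta n}{4}\widehat{\F}(\nu_n)}\right]
&\ \le\ \frac{1}{\K}\int_{\mc^n}e^{\frac{\beta n}{4}\F(\nu_n)}\,e^{-\frac{\beta n}{2}\F(\nu_n)}\,dx_1\cdots dx_n
\ =\ \frac{1}{\K}\int_{\mc^n}e^{-\frac{\beta n}{4}\F(\nu_n)}\,dx_1\cdots dx_n\\
&\ =\ \frac{1}{\K}\,e^{\frac{\beta}{4}\left(n^2\I(\mo)-\frac n2\log n\right)}Z_n^{\beta/2}
\ =\ \frac{Z_n^{\beta/2}}{Z_n^\beta}\,e^{-\frac{\beta}{4}\left(n^2\I(\mo)-\frac n2\log n\right)},
\end{align*}
where we used $n\F(\nu_n)=\w-n^2\I(\mo)+\frac n2\log n$ from \eqref{lienfw}, $\K=\Z\,e^{\frac\beta2(n^2\I(\mo)-\frac n2\log n)}$ from \eqref{defK}, and $Z_n^\gamma$ denotes the partition function at inverse temperature $\gamma$. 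Applying Theorem~\ref{valz} at inverse temperatures $\beta$ and $\beta/2$ (taking $\beta_0=\beta/2$), the $n^2$ and $n\log n$ contributions cancel exactly, the remainder is $O(\beta n)$, and there exist $C_\beta>0$ and $n_0(\beta)$ with $\mathbb{E}_{\Q}[e^{\frac{\beta n}{4}\widehat{\F}(\nu_n)}]\le e^{C_\beta n}$ for $n\ge n_0(\beta)$. By Markov's inequality,
\[
\Q\big(\widehat{\F}(\nu_n)>s\big)\ \le\ e^{-\frac{\beta n}{4}s}\,\mathbb{E}_{\Q}\!\left[e^{\frac{\beta n}{4}\widehat{\F}(\nu_n)}\right]\ \le\ e^{n\left(C_\beta-\frac{\beta}{4}s\right)}\qquad(n\ge n_0(\beta)).
\]
Since moreover $\widehat{\F}(\nu_n)$ is $\Q$-a.s.\ finite for each fixed $n$, it follows that $\sup_{n\ge1}\Q(\widehat{\F}(\nu_n)>s)\to0$ as $s\to+\infty$.

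\textbf{Step 3 (conclusion) and the main obstacle.} Given $\delta>0$, choose $s$ with $\sup_{n\ge1}\Q(\widehat{\F}(\nu_n)>s)\le\delta$ and set $\mathcal{K}=\mathcal{K}_{C(s+1)}$, compact by Step~1. Then, for every $n$,
\[
\widetilde{\mathbb{P}_n^\beta}(\mathcal{K})\ =\ \Q\big(P_{\nu_n}\in\mathcal{K}\big)\ \ge\ \Q\big(\widehat{\F}(\nu_n)\le s\big)\ \ge\ 1-\delta,
\]
using Step~1 for the middle inequality. Hence $\{\widetilde{\mathbb{P}_n^\beta}\}_n$ is tight, and Prokhorov's theorem gives a subsequence converging weakly to some $\widetilde{\mathbb{P}^\beta}\in\P(\P(X))$. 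The delicate point is Step~1: one must check that the compactness argument behind the lower bound of Theorem~\ref{th2} is genuinely quantitative and uniform in $n$, i.e.\ that the tightness of the $\Lp$-marginal of $P_{\nu_n}$ is dominated by $\widehat{\F}(\nu_n)$ alone. This is exactly where the difficulties highlighted in the introduction intervene --- the energy density of $W(\nab H_n',\indic_{\mr^2})$ is not pointwise bounded below, and there is no local charge neutrality --- so one relies on the same local ball-construction and energy lower bounds used to prove Theorem~\ref{th2}. Step~2, by contrast, is a short computation once Theorem~\ref{valz} is available.
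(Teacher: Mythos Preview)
Your overall architecture---a tail bound on $\widehat{\F}$ combined with a compactness statement---is the same as the paper's, and Step~2 and Step~3 are correct. The issue is Step~1, which you yourself flag as the delicate point: the paper does \emph{not} produce a single coercive $\Phi:X\to[0,+\infty]$ with $\int_X\Phi\,dP_{\nu_n}\le C(\widehat{\F}(\nu_n)+1)$ uniformly in $n$, and it is not clear this can simply be ``read off'' the proof of Theorem~\ref{th2}.A. The estimates there live on the enlarged space $\E\times\radon_+\times\Lp\times\radon$ (they go through the auxiliary measure $g_{\nu_n}$ of Proposition~\ref{wspread}), and they control the translated field only on average over the blow-up center; converting this into a pointwise coercive functional on $\E\times\Lp$, with constants independent of $n$, would require additional work that you have not supplied.

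The paper sidesteps this entirely by using Lemma~\ref{lesigne} (equivalently Theorem~\ref{gamma}, part~B) in place of Prokhorov-plus-coercive-functional: on a Polish space it suffices to find, for each $\delta>0$, sets $K_n$---not assumed compact---with $\widetilde{\mathbb{P}_n^\beta}(K_n)\ge1-\delta$ and such that any sequence $P_n\in K_n$ has a convergent subsequence. Taking $K_n=i_n(\{\widehat{\F}\le M\})$, the sequential compactness is \emph{exactly} the statement of Theorem~\ref{th2}.A.(1), so nothing quantitative beyond what is already established is needed. This is what the paper means by ``we have just proven this''.

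On Step~2: your exponential-moment computation via $Z_n^{\beta/2}/Z_n^\beta$ and Theorem~\ref{valz} is correct, but more elaborate than necessary. The paper obtains $\Q(\widehat{\F}>M)\to0$ for $M$ large directly from \eqref{logq2} with $A_n=\{\widehat{\F}>M\}$, together with the lower bound on $\log\K$ (Corollary~\ref{coro43}) and Lemma~\ref{lemintxi}; only the lower-bound half of Theorem~\ref{valz} is used.
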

This shows the existence of a limiting ``electric field process".

\begin{remark}
Pushing forward by $(x,\j) \mapsto \frac{1}{2\pi} \div \j + m_0'(\sqrt{n} x+\cdot) $ as in Remark \ref{rem15}
gives the existence  of  a limiting  point process $\widetilde{\mathbb{Q}^\beta}$ i.e.  a probability on the limiting $Q$'s, which themselves encode all the $(x,\nu)$'s.
\end{remark}
}

To conclude the following open questions  naturally arise in view of our results, and are closely related  to one another:
\begin{itemize}
\item[-]  Prove that $\min_{\mathcal{A}_1} W$ is achieved by the triangular lattice.
\item[-]  Find whether a large deviations statement is true at speed $n$, and if it is, find the  rate function.
\item[-] Characterize the limiting processes $\widetilde{\mathbb{P}^\beta}$ and $\widetilde{\mathbb{Q}^\beta}$.
\end{itemize}

In \cite{ss2} we show  that all the results we have obtained here are also true in the case of points on the real line, i.e. for 1D log gases  or Hermitian random matrices. There the minimization of $W$ is solved (the minimum is the perfect lattice $\mz$) and the crystallisation result is complete.
\\

The rest of the paper is organized as follows:
Section 2 contains the proof of the ``splitting formula". In Section 3, we present the ``spreading result" from \cite{ss1} and some first corollaries. In Section 4, we present an explicit construction which yields the lower bound on $\Z$, whose proof is postponed to Section 7. In Section 5, we show how $W$ controls the overcrowding/undercrowding of points, and prove Theorem \ref{th3}.
In Section 6 we present the ergodic averaging approach (the abstract result) and apply it to conclude  the proofs of Theorem \ref{th2}, \ref{th4}  and \ref{valz}. \medskip

{\it Acknowledgements:}
We are grateful to  Alexei Borodin, as well as  G\'erard Ben Arous, Amir Dembo, Percy Deift, and  Alice Guionnet for their interest and  helpful discussions. We thank Peter Forrester for useful comments and references.
E. S. was supported by the Institut Universitaire de France and S.S. by a EURYI award.

\section{Proof of the splitting formula}\label{secproofsplit}
The connection between $w_n$ and $W$ originates in the following computation
\begin{lem}
For any $x_1, \dots, x_n$ and letting $\nu_n = \sum_{i=1}^n\delta_{x_i}$ the following holds
\begin{multline}\label{idwn}
\F(\nu) =  \frac{1}{n\pi} W(\nab  H_n', \indic_{\mr^2} )  +2 \sum_{i=1}^n  \z (x_i) \\
= \frac1n\(\w(x_1, \dots, x_n) - n^2 \I (\mo) + \frac{n}{2} \log n \) , \end{multline}
where  $\F$ is defined in \eqref{defF},  $W$ is defined in \eqref{WR},  and $H'_n$ is defined in \eqref{Hp}.
\end{lem}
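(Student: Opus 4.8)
The plan is to prove the identity \eqref{idwn} by a direct computation, essentially expanding $\w$ around the equilibrium measure $\mo$ and using the variational characterization \eqref{optcondmo} of $U^\mo$ together with the value of $\I(\mo)$. First I would write the discrete sum $-\sum_{i\neq j}\log|x_i-x_j|$ in terms of the point measure $\nu_n=\sum_i\delta_{x_i}$, and compare it with the ``continuous'' self-interaction $\ioo -\log|x-y|\,dn\mo(x)\,dn\mo(y) = n^2\I_{\mathrm{pot}}(\mo)$. The natural object is the potential $H_n' = -2\pi\Delta^{-1}(\nu_n' - \mu_0')$ of the \emph{neutral} combination ``points minus background,'' but to avoid the infinite self-energy of the Diracs one must introduce a truncation: replace each $\delta_{x_i}$ by the uniform measure on $\p B(x_i,\eta)$ (or a smooth mollification), carry out the computation at fixed $\eta$, and let $\eta\to 0$ at the end, which is exactly how $W(\nab H_n',\indic_{\mr^2})$ is defined in \eqref{WR}.

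Concretely, the key algebraic step is the expansion
\[
\frac12\io |\nab H_n'|^2 \ (\text{truncated}) \ = \ \frac12\ioo -\log|x-y|\,d(\nu_n-n\mo)(x)\,d(\nu_n-n\mo)(y) + (\text{error from truncation}),
\]
after rescaling back from the blown-up coordinates $x'=\sqrt n x$; the Jacobian and the scaling of $\log$ produce precisely the $-\frac n2\log n$ term (this is where that term comes from — $n$ points each contributing $-\frac12\log n$ to the renormalization). Expanding the square on the right gives three pieces: the pure point-point interaction $-\sum_{i\neq j}\log|x_i-x_j|$ plus the diagonal terms (absorbed by the $\eta$-renormalization), the cross term $-2n\sum_i \io -\log|x_i-y|\,d\mo(y) = 2n\sum_i U^\mo(x_i)$, and the background-background term $n^2\ioo -\log|x-y|\,d\mo(x)d\mo(y)$. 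Then I would invoke \eqref{optcondmo}: since $U^\mo(x) + \frac{V(x)}2 - c = \zeta(x)$ with $\zeta\ge 0$ and $\zeta=0$ on $\E$, we have $2n\sum_i U^\mo(x_i) = -n\sum_i V(x_i) + 2nc\,n + 2n\sum_i\zeta(x_i)$; combining the $-n\sum_i V(x_i)$ with $+n\sum_i V(x_i)$ from \eqref{wn} cancels the potential term up to the desired $2n\sum_i\zeta(x_i)$ remainder, and the constants recombine with $n^2\ioo -\log|x-y|\,d\mo\,d\mo$ to produce exactly $n^2\I(\mo)$ (using $2c = \ioo -\log|x-y|d\mo d\mo + \io V d\mo$ plus $\io\zeta d\mo = 0$, a standard consequence of integrating \eqref{optcondmo} against $\mo$).

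The main obstacle — and the only genuinely delicate point — is the truncation argument: one must check that replacing the Diracs by circles of radius $\eta$ changes $-\sum_{i\neq j}\log|x_i-x_j|$ by $o(1)$ as $\eta\to0$ (true since $\log$ is continuous and the off-diagonal distances are fixed and positive once the $x_i$ are distinct; if two points coincide both sides are $+\infty$ and there is nothing to prove), and simultaneously that the renormalized integral $\frac12\int_{\mr^2\sm\cup B(x_i',\eta)}|\nab H_n'|^2 + \pi n\log\eta$ converges to $W(\nab H_n',\indic_{\mr^2})$ — but this convergence is precisely the content of the definition recalled after \eqref{WR}, where it was already observed that near each $p$ one has $\j = \nab\log|\cdot-p| + \nab f$ with $f\in C^1$ by elliptic regularity, so the $\pi\log\eta$ per point cancels the logarithmic divergence and the limit exists. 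Hence I would first prove the identity with circle-truncated measures (everything finite, Fubini and Green's formula fully justified since $H_n'\to 0$ at infinity fast enough because $\nu_n-n\mo$ has total mass zero and compact-ish support modulo the decay of $\mo$), and then pass to the limit $\eta\to0$ on both sides, invoking \eqref{WR} on the left. One should also record that $\F(\nu_n)<+\infty$ exactly when the $x_i$ are distinct and $W(\nab H_n',\indic)<\infty$, matching the two cases in \eqref{defF}.
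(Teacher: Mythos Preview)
Your proposal is correct and follows essentially the same route as the paper: split $\nu_n = n\mu_0 + (\nu_n - n\mu_0)$, identify the cross terms via $U^{\mu_0} + V/2 = c + \zeta$ (your claimed relation $2c = \ioo -\log|x-y|\,d\mu_0 d\mu_0 + \io V\,d\mu_0$ is off---the correct fact is simply that $\int(c+\zeta)\,d(\nu_n-n\mu_0) = \int\zeta\,d\nu_n$ since $\nu_n-n\mu_0$ has mass zero and $\zeta=0$ on $\E$, which is what you actually need), then recognize the quadratic piece as $\frac1\pi W(\nab H_n,\indic_{\mr^2})$ and rescale. The only technical difference is that the paper establishes the key identity $\int_{D^c}-\log|x-y|\,d(\nu_n-n\mu_0)^{\otimes 2} = \frac1\pi W(\nab H_n,\indic_{\mr^2})$ by integrating by parts on $B_R\setminus\cup_i B(x_i,\eta)$ and using the decomposition $H_n = -\log|\cdot-x_i| + H^i$ with $H^i\in C^1$ near $x_i$, rather than by smearing the Diracs onto circles of radius $\eta$; the two computations are equivalent, and the paper's is marginally cleaner since it avoids tracking the smeared potential.
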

\begin{proof} Let $\nu_n = \sum_{i=1}^n \delta_{x_i}$, and let $H_n$ be defined by
$$H_n= - 2\pi \Delta^{-1} \( \nu_n - n \mu_0\).$$
First we note that  since $\nu_n $ and $n \mo$ have same mass and  compact support we have $H_n(x) = O(1/|x|)$ and $\nab H_n(x) = O(1/|x|^2)$ as $|x|\to +\infty$.

We  prove  that, denoting by $D$  the diagonal in $\mr^2\times\mr^2$, we have
\begin{equation}\label{enren} \int_{(\mr^2\times \mr^2 ) \sm D}\,- \log |x-y| \, d(\nu_n - n \mo)(x)\, d(\nu_n- n \mo)(y)= \frac{1}{\pi} W(\nab H_n ,\indic_{\mr^2}).\end{equation}
First, using Green's formula, we have
\begin{multline}
\int_{B_R\sm\cup_{i=1}^n B(x_i,\eta)} |\nab H_n|^2 = \int_{\p B_R} H_n\nab H_n \cdot \vec{\nu} + \sum_{i=1}^n \int_{\p B(x_i,\eta)} H_n \nab H_n \cdot \vec{\nu}  \\
+ 2\pi\int_{B_R\sm \cup_{i=1}^n B(x_i,\eta)}H_n\,d(\nu_n - n\mo).\end{multline}
 Here, and in all the paper, $\vec{\nu}$ denotes the outer unit normal vector.

Let
$H^i(x) := H_n(x)+\log|x-x_i|$. We have $H^i = -\log\ast (\nu^i - n\mo)$, with $\nu^i = \nu_n-\delta_{x_i}$, and near $x_i$, $H^i$ is $C^1$. Therefore, using \eqref{Hp} and the boundedness of $\bm$ in $L^\infty$, we have that, as $\eta\to 0$
$$\int_{\p B(x_i,\eta)} H_n \cdot \vec{\nu} = -2\pi\log\eta + 2\pi H^i(x_i) + o(1),$$
 while the integral on $\p B_R$ tends to $0$ as $R\to +\infty$ from the decay properties of $H_n$. We thus obtain, as $\eta\to 0$ and $R\to +\infty$,
$$\int_{B_R\sm\cup_i B(x_i,\eta)} |\nab H_n|^2 = -2\pi n\log\eta+ 2\pi \sum_{i=1}^n H^i(x_i) - 2\pi n\int_{\mr^2}  H_n\,d\mo + o(1),$$
and therefore, by definition of $W$,
\begin{equation}\label{intermed} W(\nab H_n,\indic_{\mr^2}) = \pi \sum_{i=1}^n  H^i(x_i) - \pi n\int H_n\,d\mo.\end{equation}

Second we note that
$$\int_{\mr^2\sm\{x_i\}} - \log|x_i-y|\,d(\nu_n - n\mo)(y) = H^i(x_i),$$
and if $x \notin \{x_i\}$ then
$$\int_{\mr^2\sm\{x\}} - \log|x-y|\,d(\nu_n - n\mo)(y) = H_n(x).$$
It follows that
$$\int_{D^c}\,- \log |x-y| \, d(\nu_n - n \mo)(x)\, d(\nu_n - n \mo)(y)= \sum_{i=1}^n  H^i(x_i) - n\int_{\mr^2} H_n(x)\,d\mo(x),$$
which together with \eqref{intermed} proves \eqref{enren}.

On the other hand, we may rewrite $\w$ as
$$\w(x_1, \dots, x_n) =\int_{D^c}\,- \log |x-y| \, d\nu_n(x)\, d\nu_n(y) + n\int V(x)\, d\nu_n(x)$$
and, splitting $\nu_n$ as $n \mo + \nu_n - n\mo$ and using the fact that $\mo\times\mu_0 (D) = 0$, we obtain
\begin{multline*}
w(x_1, \dots, x_n) = n^2 \I(\mo)+  2n \int U^{\mo}(x)\, d(\nu_n - n \mo)(x)+ n\int V(x)\, d(\nu_n - n \mo)(x)\\+
\int_{D^c}\,- \log |x-y| \, d(\nu_n - n \mo)(x)\, d(\nu_n - n \mo)(y) .
\end{multline*}
Since $U^{\mo} + \frac{V}{2}= c + \z$ and since $\nu_n $ and $n \mo$ have same mass $n$, we have
$$2n \int U^{\mo}(x)\, d(\nu_n - n \mo)(x)+ n\int V(x)\, d(\nu_n- n \mo)(x)=  2 n \int \z \,d(\nu_n  - \mo)= 2n \int \z \,d\nu_n,$$
using the fact that  $\z=0$ on  the support  of $\mo$. Therefore,   in view of \eqref{enren} we have  found
\begin{equation}\label{avantchvar}w(x_1, \dots, x_n) = n^2 \I(\mo)+  2n \int \z \,d\nu_n +\frac{1}{\pi} W(\nab H_n ,\indic_{\mr^2}).
\end{equation}
But, changing variables, we find
$$\hal\int_{\mr^2 \sm \cup_{i=1}^n B(x_i, \eta) } |\nab H_n|^2 =\hal \int_{\mr^2 \sm \cup_{i=1}^n B(x_i',\sqrt{n} \eta)}  |\nab H_n'|^2,$$
and by  adding $\pi n\log\eta$ on both sides and letting $\eta\to 0$ we deduce that $W(\nab H_n , \indic_{\mr^2})= W(\nab H'_n, \indic_{\mr^2} ) -  \frac{ \pi}{2} n \log n$. Together with \eqref{avantchvar} this proves \eqref{idwn}.
\end{proof}

\section{A first lower bound on $\F$ and upper bound on $\Z$}
The crucial fact that we now  wish to exploit  is that, even though $W(\j, \chi)$ or rather its associated energy density  does not have a sign,  there are good lower bounds for $\F$. This follows from the analysis of \cite{ss1}, more specifically from the following
 ``mass spreading result", adapted  from  \cite{ss1},  Proposition~4.9 and Remark~4.10 (with  slightly different notation), which itself is based on the so-called ``ball construction method", a crucial tool in the analysis of Ginzburg-Landau equations.  This result, that we will use here as a black box, says that even though the energy density  associated to $W(\j, \chi)$ is not positive (or even bounded below), it can be replaced by an energy-density $g$ which is uniformly bounded below, at the expense of a negligible error.

 For any set $\om$, $\widehat{\om}$ denotes  its 1-tubular neighborhood, i.e.
 $\{x\in \mr^2, \dist (x, \om) < 1\}$.
\begin{pro}\label{wspread} Assume $\om\subset\mr^2$ is open
 and $(\nu,\j)$ are such that $\nu = 2\pi \sum_{p\in\Lambda}
  \delta_p$ for some finite subset $\Lambda$ of $\widehat \om$
   and  $\div \j =2\pi ( \nu -\ed{a(x)}dx) $, $\curl \j = 0$ in $\widehat \om$,
    where $a\in L^\infty(\widehat \om)$. Then, given any $\ro>0$ there exists a measure  $g$ supported  on $\widehat \om$ and such that
\begin{itemize}
\item[-]
there exists a family $\mathcal{B}_\ro$ of disjoint closed balls  covering $\supp(\nu)$, with the  sum of  the radii  of the balls in $\mathcal{B}_\ro$ intersecting with any ball of radius $1$ bounded by $\ro$, and  such that
\begin{equation}\label{lbg}
g\ge - C (\|a\|_{L^\infty}+1) + \frac14 |\j|^2\indic_{\om\sm \mathcal{B}_\ro} \quad \text{in} \ \widehat\om,\end{equation}
  where $C$ depends only on $\ro$.
 \item[-]\begin{equation} \label{gcj} \text{$\D g = \hal |\j|^2$ outside $\cup_{p\in\Lambda} B(p,\lambda)$}\end{equation} where $\lambda$ depends only on $\ro$.
\item[-]  For any function $\chi$ compactly supported in $\om$ we have
\begin{equation}\label{wg}\left|W(\j,\chi) - \int \chi\,dg\right|
\le C N (\log N + \|a\|_{L^\infty})\|\nab\chi\|_\infty,\end{equation}
where  $N=\#\{p\in\Lambda: B(p,\lambda)\cap\supp(\nabla\chi)\neq\varnothing\}$ for some   $\lambda$  and $C$ depending only on $\ro$.
\item[-] For any  $U\subset \om$,
\begin{equation}\label{bgnalpha}\#(\Lambda\cap U) \le C\(1+\|a\|_{L^\infty}^2|\widehat U| + g(\widehat U)\).\end{equation}

\end{itemize}
\end{pro}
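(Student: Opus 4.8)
The plan is to derive Proposition~\ref{wspread} from the ``ball construction plus mass displacement'' method of \cite{ss1} (compare Proposition~4.9 and Remark~4.10 there); I indicate the structure of the argument. Since $\curl\j=0$ on $\widehat\om$, one may write $\j=\nab h$ with $\Delta h = 2\pi(\nu-a)$, so that near each $p\in\Lambda$ the function $h$ is a fixed constant multiple of $\log|\cdot-p|$ plus a $C^1$ remainder (by elliptic regularity, $a$ being bounded); in particular $\j$ is smooth away from $\Lambda$ and $|\j|^2$ blows up like $|\cdot-p|^{-2}$ at each $p$. Thus the unsigned density $\hal|\j|^2$ carries a logarithmically divergent mass at each point, and this divergence is precisely what the renormalization in \eqref{WR} removes. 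The measure $g$ will be obtained by reallocating that divergent mass onto a neighbourhood of each point in such a way that $g$ becomes bounded below, while $\int\chi\,dg$ stays close to $W(\j,\chi)$ for every compactly supported $\chi$.

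First I would run the ball-growth construction of Jerrard and Sandier: starting from the points of $\Lambda$, one grows a one-parameter family $\{\B(s)\}_s$ of finite collections of pairwise disjoint closed balls, each collection covering $\Lambda$, with radii increasing in $s$ and balls merging upon contact. On an annulus $B(p,t_2)\setminus B(p,t_1)$ with $B(p,t_2)$ meeting $\Lambda$ only at $p$, the Cauchy--Schwarz inequality applied on each circle, together with the flux identity $\int_{\p B(p,t)}\j\cdot\vec\nu = 2\pi\nu(B(p,t))-2\pi\int_{B(p,t)}a$ (divergence theorem applied to $\div \j = 2\pi(\nu-a\,dx)$), gives $\hal\int_{B(p,t_2)\setminus B(p,t_1)}|\j|^2\ge c\log\tfrac{t_2}{t_1}-C\|a\|_{L^\infty}(t_2^2-t_1^2)$ with $c>0$; summing over the collection and over scales yields the usual ball-construction lower bound for $\hal\int_{\B(s)}|\j|^2$, with the number of enclosed points as a factor in front of the logarithmic term. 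I would stop the growth at the first scale at which the total radius of the balls meeting any ball of radius $1$ reaches $\ro$, call that collection $\B_\ro$ (so each of its balls has radius at most $\ro$), and let $\lambda=\lambda(\ro)>0$ be a fixed scale below which the point cores still sit in pairwise disjoint balls.

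Next I would define $g$. Outside $\bigcup_{p\in\Lambda}B(p,\lambda)$, set $g=\hal|\j|^2$; this gives \eqref{gcj} immediately. Inside $\bigcup_p B(p,\lambda)$, replace $\hal|\j|^2$ by the measure obtained by distributing, over each annular shell swept out during the growth of $\B(\cdot)$, the amount of mass that the lower bound of the previous step attributes to that shell (uniformly on the shell). Since each such contribution is nonnegative up to an additive $O(\|a\|_{L^\infty})$ coming from the flux term, and since outside $\B_\ro$ one still has $g=\hal|\j|^2\ge\tfrac14|\j|^2$, the resulting $g$ satisfies \eqref{lbg} with a constant depending only on $\ro$. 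On the set where $g$ and $\hal|\j|^2$ differ --- contained in $\bigcup_p B(p,\lambda)$ --- the two have the same mass up to $O((\log N+\|a\|_{L^\infty})N)$ in any neighbourhood of $N$ points (the $N\log N$ being the worst-case cost of the ball-construction bookkeeping on a cluster of $N$ points); since a fixed $\chi$ oscillates on $\bigcup_p B(p,\lambda)$ by at most $C(\ro)\|\nab\chi\|_\infty$, and since $W(\j,\chi)$ and $\int\chi\,dg$ are produced from $\hal|\j|^2$ and $g$ by one and the same $\eta\to0$ renormalization, this yields \eqref{wg}. For the last estimate \eqref{bgnalpha}, given $U\subset\om$ I would group the points of $\Lambda\cap U$ and exhibit around each group an annulus $B(q,c)\setminus B(q,\lambda)\subset\widehat U$ which, for $\ro$ small, is disjoint from the $\B_\ro$-balls of the other groups; on each such annulus the ball-construction bound forces $\hal\int|\j|^2$ to be at least $c_0$ times the size of the group, so using $g\ge\tfrac14|\j|^2$ there and $g\ge -C(\|a\|_{L^\infty}+1)$ on the remainder of $\widehat U$ to absorb overlaps and the boundary layer, $g(\widehat U)\ge c_0\#(\Lambda\cap U)-C(1+\|a\|_{L^\infty}^2|\widehat U|)$, which is \eqref{bgnalpha}.

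The hard part is the construction of $g$ in the third paragraph: one needs a \emph{single} measure that is at once bounded below, equal to $\hal|\j|^2$ away from the point cores, and mass-correct up to the error in \eqref{wg}, with all constants uniform in $\|a\|_{L^\infty}$ and depending on $\ro$ alone. A naive mollification of $\hal|\j|^2$ on the balls would neither follow the logarithmic divergences finely enough nor stay bounded below when the points cluster; the reallocation really has to be tied shell by shell to the ball-construction lower bounds, and making the three requirements hold simultaneously is the delicate bookkeeping carried out in \cite[Proposition~4.9 and Remark~4.10]{ss1}, which we invoke as a black box.
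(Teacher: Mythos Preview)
The paper does not prove this proposition: it states it as an adaptation of \cite[Proposition~4.9 and Remark~4.10]{ss1}, notes only that the proof there can be readapted to make $\rho$ arbitrarily small, and explicitly treats the result as a black box. Your proposal does the same --- deferring to \cite{ss1} at the end --- but additionally supplies a correct sketch of the ball-construction and mass-displacement argument, which is more than the paper itself provides.
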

Note that the result in \cite{ss1} is not stated for any $\rho$ but a careful inspection of the proof there allows to show that it can be readapted to make $\rho$ arbitrarily small. \ed{From now on, we take some  $\ro<1/8$.}

\begin{defi}\label{defG} Assume  $\nu_n = \sum_{i=1}^n\delta_{x_i}$. Letting $\nu_n' = \sum_{i=1}^n\delta_{x_i'}$ be the measure in blown-up coordinates and  $\j_{\nu_n} = -\nab  H_n'$, where $H_n'$ is defined by \eqref{Hp}, we denote by  $g_{\nu_n}$ the result of applying the previous proposition to $(\nu_n',\j_{\nu_n})$ in $\mr^2$.
\end{defi}
Even though we will not use the following result in the sequel,  we state it to show how we can quickly derive a first upper bound on $\Z$ from what precedes.
\begin{pro}\label{lowZ} We have
\begin{equation}\label{ubk}
\log \K \le    C n \beta + n (\log |\E|+o(1))
\end{equation}
where we recall $\E=\supp(\mu_0)$,
and
\begin{equation}\label{ubz}\log  \Z\le - \frac{\beta}{2} n^2 \I(\mo) + \frac{\beta n}{4} \log n + C n \beta + n (\log |\E|+o(1)) \end{equation}
where $o(1)\to 0$ as $n \to \infty$ uniformly with respect to $\beta>\beta_0$, for any $\beta_0>0$,   and  $C$ depends only on $V$.
\end{pro}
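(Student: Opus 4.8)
Proof proposal for Proposition~\ref{lowZ}.

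\textbf{Strategy.} The plan is to bound $\K=\int_{\mr^{2n}} e^{-\frac{n\beta}{2}\F(\sum_i\delta_{x_i})}\,dx_1\cdots dx_n$ from above by splitting the domain of integration according to whether the points $x_i$ lie near $\E$ or not, using the lower bound on $\F$ coming from $\z$ and from the renormalized energy density $g_{\nu_n}$. The starting observation is that, by the splitting formula \eqref{fnhat}--\eqref{defF}, we have $\F(\nu_n)=\widehat\F(\nu_n)+2\sum_i\z(x_i)=\frac{1}{n\pi}W(\nab H_n',\indic_{\mr^2})+2\sum_i\z(x_i)$, and that $\z\ge 0$ everywhere with $\z(x)\ge\kappa\,\dist(x,\E)^2$ by \eqref{lemz}. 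So the term $2n\sum_i\z(x_i)$ in $n\F$ gives an integrable Gaussian-type decay away from $\E$ which, thanks to assumption \eqref{assumpV4}, controls the contribution of points far from $\E$ to the partition function integral. The main point is therefore to show that $\widehat\F(\nu_n)$, i.e.\ $\frac{1}{n\pi}W(\nab H_n',\indic_{\mr^2})$, is bounded below by $-C$ uniformly, so that $e^{-\frac{n\beta}{2}\widehat\F(\nu_n)}\le e^{Cn\beta}$, and then the remaining integral is essentially $\int_{(\text{nbhd of }\E)^n}dx\lesssim (|\E|+o(1))^n$.

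\textbf{Key steps.} First I would establish the uniform lower bound $W(\nab H_n',\indic_{\mr^2})\ge -Cn$ (equivalently $\widehat\F\ge -C$). This is where Proposition~\ref{wspread} enters: applying it to $(\nu_n',\j_{\nu_n})$ on $\mr^2$ with $a=\bm'$ (which satisfies $\bw\le\bm'\le\wb$ by \eqref{bornmo}), one gets the measure $g=g_{\nu_n}$ with $g\ge -C(\|\bm'\|_\infty+1)$ pointwise in the relevant region by \eqref{lbg}, and with $|W(\nab H_n',\chi_{K_R})-\int\chi_{K_R}\,dg|$ controlled by \eqref{wg}; integrating \eqref{lbg} over a large square containing all the blown-up points and the support of $\mu_0'$ (whose area is $\sim n/\bw$) yields $\int dg\ge -Cn$, hence $W(\nab H_n',\indic_{\mr^2})\ge -Cn-(\text{error})$. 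One has to be a little careful that the error term in \eqref{wg} is $o(n)$ or absorbable; since $N\le n$ and the cutoff has bounded gradient, the error is $O(n\log n)$, which when divided by the leading $\frac{\beta n}{2}$ factor in the exponent is $O(\log n)$ — but actually one wants this inside $e^{\cdots}$, so I would instead phrase the lower bound directly as $W(\nab H_n',\indic_{\mr^2})\ge -Cn$ using that the cutoff error can be made negligible by choosing $R$ growing slowly, or absorbed into the constant $C$ since we are only after the $Cn\beta$-precision stated in \eqref{ubk}. Second, with $\widehat\F\ge -C$ in hand, write
\begin{equation}
\K=\int e^{-\frac{n\beta}{2}(\widehat\F(\nu_n)+2\sum_i\z(x_i))}\,dx\le e^{\frac{Cn\beta}{2}}\int \prod_{i=1}^n e^{-n\beta\,\z(x_i)}\,dx_i=e^{\frac{Cn\beta}{2}}\(\int_{\mr^2}e^{-n\beta\,\z(x)}\,dx\)^n.
\end{equation}
Third, estimate $\int_{\mr^2}e^{-n\beta\z}\,dx$. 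On a fixed neighborhood of $\E$ this is at most $|\{\z\le\delta\}|\le|\E|+o(1)$ as the neighborhood shrinks; away from $\E$, since $\z(x)\ge\kappa\dist(x,\E)^2\to\infty$ and moreover $\z\sim V/2-\log|x|$ at infinity, assumption \eqref{assumpV4} guarantees $\int_{\dist(x,\E)\ge\delta}e^{-n\beta\z}\le\int e^{-\beta_1(V/2-\log|x|)}<\infty$ for $n\beta\ge\beta_1$, and this tail is $o(1)$ as $n\to\infty$. Hence $\int_{\mr^2}e^{-n\beta\z}\,dx\le|\E|+o(1)$ uniformly in $\beta\ge\beta_0$, giving \eqref{ubk}. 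Finally \eqref{ubz} follows immediately by substituting \eqref{ubk} into the definition \eqref{defK} of $\K$, i.e.\ $\log\Z=\log\K-\frac{\beta}{2}(n^2\I(\mo)-\frac{n}{2}\log n)$.

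\textbf{Main obstacle.} The delicate point is the uniform lower bound $\widehat\F\ge -C$ via Proposition~\ref{wspread}: one must check that the ball construction applies globally on $\mr^2$ (not just on a bounded domain) to the specific field $\j_{\nu_n}=-\nab H_n'$ with its non-compactly-supported neutralizing background $\bm'$, that the error term \eqref{wg} is genuinely lower-order compared to the $Cn\beta$ budget, and that the negative lower bound in \eqref{lbg}, when integrated, only costs $O(n)$ because the effective region has area $O(n)$. The treatment of the far-away points via \eqref{assumpV4} is the other ingredient but is more routine once the growth/coercivity of $\z$ is recorded. Everything else is bookkeeping with the splitting formula.
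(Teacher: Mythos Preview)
Your approach is essentially the paper's, but two steps can be sharpened and one of them, as you wrote it, has a gap.

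First, the worry about the error term in \eqref{wg} is unnecessary. For any fixed $n$, once $R$ is large enough that $\supp(\nabla\chi_{K_R})$ lies outside all the balls $B(x_i',\lambda)$, we have $N=0$ in \eqref{wg}, hence $W(\j_{\nu_n},\chi_{K_R})=\int\chi_{K_R}\,dg_{\nu_n}$ exactly; letting $R\to\infty$ gives the identity $W(\j_{\nu_n},\indic_{\mr^2})=\int dg_{\nu_n}$ (this is precisely \eqref{fg} in the paper). No $O(n\log n)$ residue appears.

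Second, and more importantly, your lower bound $\int dg_{\nu_n}\ge -Cn$ via ``integrating \eqref{lbg} over a large square containing all the blown-up points and $\E'$, whose area is $\sim n/\bw$'' is not justified: the points $x_i'$ need not lie in or near $\E'$, so such a square can have area much larger than $O(n)$. The correct argument uses \eqref{gcj} rather than \eqref{lbg} alone: $g_{\nu_n}=\tfrac12|\j_{\nu_n}|^2\ge 0$ outside $\cup_{i} B(x_i',\lambda)$, while $g_{\nu_n}\ge -C$ on those $n$ balls of total area at most $\pi\lambda^2 n$. This gives $\int dg_{\nu_n}\ge -Cn$ regardless of where the points sit, hence $\F(\nu_n)\ge -C+2\int\z\,d\nu_n$ as needed. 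The rest of your argument --- factoring the integral and evaluating $\int_{\mr^2} e^{-n\beta\z}\to |\E|$ by dominated convergence using \eqref{assumpV4} --- is exactly Lemma~\ref{lemintxi} and matches the paper.
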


The  proof uses two lemmas.

\begin{lem} For any $\nu_n=\sum_{i=1}^n \delta_{x_i}$, we have
\begin{equation}\label{fg}\F(\nu_n) = \frac{1}{n\pi} \int_{\mr^2} dg_{\nu_n} + 2\int \z\,d\nu_n,\end{equation} where $\F$ is as in \eqref{defF}.
\end{lem}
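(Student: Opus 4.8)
The plan is to relate the full energy $F_n$, which involves $W(\nabla H_n',\mathbf{1}_{\mathbb{R}^2})$, to the integral of the ``good'' energy-density measure $g_{\nu_n}$ produced by Proposition~\ref{wspread}. Recall that by definition $\F(\nu_n)=\widehat\F(\nu_n)+2\int\z\,d\nu_n$ and $\widehat\F(\nu_n)=\frac{1}{n\pi}W(\nab H_n',\mathbf 1_{\mathbb R^2})$, so the content of \eqref{fg} is the single identity
\begin{equation}\label{planeq}
W(\nab H_n',\mathbf 1_{\mathbb R^2}) = \int_{\mr^2} dg_{\nu_n}.
\end{equation}
First I would observe that $(\nu_n',\j_{\nu_n})$ satisfies the hypotheses of Proposition~\ref{wspread} globally, with $\om = \widehat\om = \mathbb R^2$ and $a(x)=\bm'(x')$, which is bounded by $\overline m$ thanks to \eqref{bornmo}; this makes $g_{\nu_n}$ well-defined, and all the conclusions \eqref{lbg}--\eqref{bgnalpha} are available.

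The natural route to \eqref{planeq} is to apply the estimate \eqref{wg} with a sequence of cutoff functions $\chi_{K_R}$ satisfying \eqref{defchi} and let $R\to\infty$. For each such $\chi_{K_R}$ we get
\[
\Bigl| W(\j_{\nu_n},\chi_{K_R}) - \int \chi_{K_R}\,dg_{\nu_n}\Bigr| \le C N_R (\log N_R + \overline m)\|\nab\chi_{K_R}\|_\infty,
\]
where $N_R$ is the number of points $x_i'$ whose $\lambda$-neighbourhood meets $\supp(\nab\chi_{K_R})$, i.e.\ roughly the points lying in an annulus of width $O(1)$ near $\p K_R$. Since the total number of points is the fixed finite number $n$, both sides make sense, and the key point is that as $R\to\infty$ the annulus near $\p K_R$ eventually contains \emph{no} points at all (all $n$ points are contained in a fixed compact set, namely a neighbourhood of $\sqrt n\,\E$ — here I would use that $\nu_n$ is supported in $\mr^2$ but for the lower bound applications one restricts attention to configurations with points near $\E$; more robustly, for fixed $n$ the points are a fixed finite set so lie in $K_{R_0}$ for $R_0$ large, whence $N_R=0$ for $R\ge R_0+2$). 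Thus for $R$ large the right-hand side vanishes, giving $W(\j_{\nu_n},\chi_{K_R}) = \int\chi_{K_R}\,dg_{\nu_n}$ exactly.

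It then remains to pass to the limit $R\to\infty$ on each side. On the left, $W(\j_{\nu_n},\chi_{K_R})\to W(\j_{\nu_n},\mathbf 1_{\mathbb R^2})$: indeed $\chi_{K_R}\to 1$ pointwise and $\chi_{K_R}=1$ on $K_{R-1}$, and since all the points lie in $K_{R_0}$ the $\eta\to0$ renormalization in \eqref{WR} is unaffected, so the difference $W(\j_{\nu_n},\mathbf 1)-W(\j_{\nu_n},\chi_{K_R})=\frac12\int_{\mr^2}(1-\chi_{K_R})|\j_{\nu_n}|^2$ for $R\ge R_0+2$, and this tends to $0$ because $|\j_{\nu_n}|^2=|\nab H_n'|^2$ is integrable at infinity (from $\nab H_n(x)=O(1/|x|^2)$, hence $\nab H_n'(x')=O(1/|x'|^2)$, proven in the splitting lemma, $|\nab H_n'|^2$ is integrable near infinity). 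On the right, $\int\chi_{K_R}\,dg_{\nu_n}\to\int_{\mr^2}dg_{\nu_n}$: by \eqref{gcj}, $g_{\nu_n}=\tfrac12|\j_{\nu_n}|^2\,dx$ outside a fixed compact set (once $R_0$ is large enough to contain all $B(x_i',\lambda)$), so $g_{\nu_n}$ is a finite measure there and dominated convergence applies. Combining gives \eqref{planeq}, hence \eqref{fg} after dividing by $n\pi$ and adding $2\int\z\,d\nu_n$.

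The main obstacle is the bookkeeping at infinity: one must be careful that $g_{\nu_n}$ is a genuine (signed, but locally finite) measure on all of $\mr^2$ with $g_{\nu_n}=\tfrac12|\j_{\nu_n}|^2$ near infinity, so that $\int_{\mr^2}dg_{\nu_n}$ converges absolutely, and that the cutoff errors in \eqref{wg} really do vanish rather than merely stay bounded. Both follow from the fact that for a \emph{fixed} configuration $\nu_n$ the points form a fixed finite set and $\nab H_n'$ decays like $1/|x'|^2$; this is exactly what was established in the proof of the splitting lemma in Section~\ref{secproofsplit}, so no new analytic input is needed beyond careful limiting.
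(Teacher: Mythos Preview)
Your proof is correct and follows essentially the same approach as the paper: apply \eqref{wg} to the cutoffs $\chi_{K_R}$, observe that $N_R=0$ for $R$ large since the $n$ points lie in a fixed compact set, and let $R\to\infty$. Your write-up is in fact more careful than the paper's in justifying the limit on each side via the decay $\nab H_n'(x')=O(1/|x'|^2)$ and the equality $g_{\nu_n}=\tfrac12|\j_{\nu_n}|^2$ near infinity, but the underlying argument is identical.
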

\begin{proof} This follows from \eqref{wg} applied to $\chi_{K_R}$, where $\chi_{B_R}$ is as in \eqref{defchi}. If $R$ is large enough then, $\lambda$ being the constant of Proposition \ref{wspread},  $\#\{p\in\supp(\nu_n): B(p,\lambda)\cap\supp(\nabla\chi_{K_R})\neq\varnothing\} = 0$ and therefore \eqref{wg} reads
$$W(\j_{\nu_n},\chi_{K_R}) = \int \chi_{K_R}\,dg_{\nu_n}.$$
Letting $R\to +\infty$ yields $W(\j_{\nu_n},\indic_{\mr^2}) = \int\,dg_{\nu_n}$ and the result, in view of \eqref{defF}.
\end{proof}

\begin{lem}\label{lemintxi}
Letting $\nu_n$ stand for $\sum_{i=1}^n \delta_{x_i}$ we have, for any constant $\gamma>0$ and uniformly w.r.t. $\beta$ greater than any arbitrary positive constant $\beta_0$, we have 
 \begin{equation}\label{lex}
\lim_{n \to \infty} \( \int_{\mc^n} e^{-\gamma\beta n \int \zeta \,d\nu_n }  \, dx_1 \dots dx_n \)^\frac{1}{n} = |\E|.\end{equation}\end{lem}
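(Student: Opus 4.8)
The plan is to split the integral $\int_{\mc^n} e^{-\gamma\beta n \int \zeta\, d\nu_n}\, dx_1\cdots dx_n$ into the contribution from a bounded neighborhood of $\E$ and the contribution from its complement, and to show the first one is $(|\E|+o(1))^n$ while the second is negligible. Since $\int \zeta\, d\nu_n = \sum_{i=1}^n \zeta(x_i)$, the integral factorizes completely:
\begin{equation}
\int_{\mc^n} e^{-\gamma\beta n \sum_i \zeta(x_i)}\, dx_1\cdots dx_n = \(\int_{\mr^2} e^{-\gamma \beta n \zeta(x)}\, dx\)^n.
\end{equation}
So the whole statement reduces to showing that $\int_{\mr^2} e^{-\gamma\beta n \zeta(x)}\, dx \to |\E|$ as $n\to\infty$, uniformly for $\beta\ge\beta_0$.

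For the lower bound, since $\zeta \ge 0$ everywhere and $\zeta = 0$ on $\E$, we have $e^{-\gamma\beta n\zeta}\ge \indic_{\E}$, hence $\int_{\mr^2} e^{-\gamma\beta n \zeta}\ge |\E|$, which already gives $\liminf \ge |\E|$ with no effort. For the upper bound, write $\int_{\mr^2} e^{-\gamma\beta n\zeta} = |\E| + \int_{\mr^2\sm\E} e^{-\gamma\beta n\zeta}$, and I need the error term to go to $0$. Split $\mr^2\sm\E$ into a bounded part, say $(\{x : \dist(x,\E)\le 1\})\sm\E$, and the far part $\{x:\dist(x,\E)>1\}$. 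On the bounded part I use the quantitative growth estimate \eqref{lemz}, namely $\zeta(x)\ge \kappa\, \dist(x,\E)^2$; since $\E$ has $C^1$ boundary (assumption \eqref{assumpV3}), the coarea/layer-cake estimate gives $\int_{\dist(x,\E)\le 1} e^{-\gamma\beta n \kappa \dist(x,\E)^2}\, dx \le C\int_0^1 e^{-\gamma\beta_0 n \kappa t^2}\, dt \le C/\sqrt{n}$ (the boundary length of $\E$ and its tubular neighborhoods being uniformly bounded), which tends to $0$ uniformly in $\beta\ge\beta_0$.

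For the far part $\{\dist(x,\E)>1\}$, I invoke assumption \eqref{assumpV4}: since $\zeta = U^{\mu_0} + V/2 - c$ and $U^{\mu_0}(x) = -\log|x| + o(\log|x|)$ at infinity (as $\mu_0$ is a compactly supported probability measure), we have $\zeta(x)\ge \tfrac12 V(x) - \log|x| - C'$ for $|x|$ large, so $e^{-\gamma\beta n \zeta(x)}\le e^{\gamma\beta n C'} e^{-\gamma\beta n (V/2(x)-\log|x|)}$. For $n$ large enough that $\gamma\beta_0 n \ge \beta_1$ (the constant in \eqref{assumpV4}), and also using that $V/2 - \log|x|\to+\infty$, one bounds $e^{-\gamma\beta n(V/2-\log|x|)}\le e^{-(\gamma\beta n - \beta_1)\,\mathrm{const}}\, e^{-\beta_1(V/2-\log|x|)}$ on $\mr^2\sm B_1$, so the far integral is at most $e^{\gamma\beta n C'} e^{-c(\gamma\beta n-\beta_1)}\int_{\mr^2\sm B_1} e^{-\beta_1(V/2-\log|x|)}\,dx$, which is finite by \eqref{assumpV4} and tends to $0$ as $n\to\infty$ provided the constant in the exponent dominates $C'$ — here one should be a little careful, and it is cleanest to first absorb the bounded region $\{1<\dist(x,\E)\}\cap B_{R_0}$ (for a fixed large $R_0$) into the quadratic-decay estimate via $\zeta\ge \kappa$ there, getting a factor $e^{-\gamma\beta_0 n\kappa}|B_{R_0}|$, and only use \eqref{assumpV4} on $\mr^2\sm B_{R_0}$ where the linear growth of $V/2-\log|x|$ is genuinely large. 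Combining the three pieces gives $\int_{\mr^2}e^{-\gamma\beta n\zeta}\le |\E| + o(1)$, hence $\limsup\le |\E|$, and raising to the power $1/n$ and taking $n\to\infty$ finishes the proof, with all estimates uniform for $\beta\ge\beta_0$ since only $\beta_0$ enters the bounds.

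The main obstacle is purely bookkeeping at infinity: making sure that the exponential blow-up $e^{\gamma\beta n C'}$ coming from the crude lower bound $\zeta \ge V/2 - \log|x| - C'$ is beaten by the decay, which forces one to isolate a genuinely large-$|x|$ region where $V/2 - \log|x|$ exceeds $2C'$ (say), and to use that there the full factor $\gamma\beta n (V/2-\log|x|)$ is much larger than $\gamma\beta n C'$. Everything else — the factorization, the $\liminf$, the quadratic-decay layer-cake estimate near $\p\E$ — is routine.
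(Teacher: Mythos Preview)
Your proof is correct, but it takes a more laborious route than the paper's. The paper proceeds as follows: after the same factorization $\int_{\mc^n} e^{-\gamma\beta n\sum_i\zeta(x_i)}\,dx_1\cdots dx_n = \(\int_\mc e^{-\gamma\beta n\zeta}\)^n$, it observes (exactly as you do) that $\zeta(x)\ge V(x)/2 - \log|x| - C$ for large $|x|$, and combines this with assumption \eqref{assumpV4} to conclude that $\int_\mc e^{-\beta_2\zeta}\,dx<\infty$ for some $\beta_2>0$. Then the argument is a single stroke of dominated convergence: $e^{-\gamma\beta n\zeta}\to\indic_\E$ pointwise (since $\zeta\ge 0$ and $\{\zeta=0\}=\E$), and for $\beta\ge\beta_0$ and $n$ large enough that $\gamma\beta_0 n\ge\beta_2$, one has the uniform domination $e^{-\gamma\beta n\zeta}\le e^{-\beta_2\zeta}\in L^1$. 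That is the whole proof.

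Your version replaces this DCT argument by an explicit three-region decomposition with quantitative estimates in each piece. This is perfectly valid, and in fact yields more (e.g.\ the $C/\sqrt n$ rate near $\partial\E$ via \eqref{lemz} and the coarea formula), but none of that extra precision is needed for the lemma. In particular, your ``bookkeeping at infinity'' paragraph, where you worry about the competing factors $e^{\gamma\beta n C'}$ and the decay, is exactly the complication that the paper avoids: once integrability of a single dominating function $e^{-\beta_2\zeta}$ is secured, there is no blow-up to fight. Your approach would be the right one if a rate of convergence were required; for the bare limit statement, the paper's route is cleaner.
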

\begin{proof}
\ed{This is the place where we use the assumption \eqref{assumpV4}. 
We recall that  $\zeta= U^{\mu_0} + \frac{V}{2}- c$, and note that since $\mu_0$ is a compactly supported probability measure $U^{\mu_0}(x)= - \int \log |x-y| \, d\mu_0(y)$ behaves asymptotically like $-\log |x|$ when $|x|\to \infty$, more precisely  one can easily show that there exists $C$ such that $|U^{\mu_0}(x)+  \log |x||\le C$ for $|x|$ large enough.
It thus follows that $\zeta(x) \ge - \log |x|+ \frac{V}{2}(x) - C$ for $|x|$ large enough,  and in view of \eqref{assumpV4}, this implies that for some $\beta_2>0$,  $\int_{\mc} e^{- \beta_2\zeta(x)}\, dx$ converges.}

Next, by separation of variables, we have
$$  \int_{\mc^n} e^{-  \gamma \beta n   \sum_{i=1}^n \zeta(x_i)   } \, dx_1 \dots dx_n= \( \int_{\mc}
e^{-\gamma\beta n \zeta(x)} \, dx\)^n $$
On the other hand, we have  $\zeta\ge 0$ and  $\{\z=0\}= \E$ by \eqref{eqz}, hence we have
 $e^{-\gamma\beta n\zeta(x)}\to \indic_\E$
pointwise,  as $\beta n \to \infty$. \ed{In addition, if $\beta\ge \beta_0>0$,  for $n$ large enough depending on $\beta_0$,  $e^{-\gamma\beta n\zeta(x)} $ is dominated by $e^{-\beta_2 \zeta(x)}$ which is integrable.  Therefore, by dominated convergence theorem,} it follows that \eqref{lex} holds uniformly w.r.t. $\beta\ge\beta_0$, for any $\beta_0>0$.
\end{proof}

\begin{proof}[Proof of Proposition~\ref{lowZ}]
Let again $\nu_n$ stand for $\sum_{i=1}^n \delta_{x_i}$. From  \eqref{gcj} we have $g_{\nu_n} \ge 0$ outside $\cup_iB(x_i,\lambda)$ and from \eqref{lbg} we have $g_{\nu_n}\ge - C$ (depending only on $\|\bm\|_{L^\infty}$ hence on $V$) in $\cup_iB(x_i,\lambda)$. Inserting into  \eqref{fg} we deduce that
$$\F(\nu_n) \ge - C + 2\int \zeta\,d\nu_n ,$$ where $C$ depends only on $V$.
Inserting into \eqref{loi2} and integrating over $\mc^n$, we find
$$1\le \frac{1}{\K} e^{Cn \beta  } \int_{\mc^n} e^{- n \beta \int \zeta\,d \nu_n} \, dx_1 \dots dx_n.$$ Inserting \eqref{lex} and taking logarithms, it follows that
$$\log \K \le  C n \beta + n (\log |\E|+o(1)). $$
The relation \eqref{ubz} follows using \eqref{defK}.\end{proof}

\section{A construction and a lower bound for $\Z$ }
In this section, we construct a set of explicit configurations whose $W$ is not too large,  and show that  their probability is not too small, which will lead to a lower bound on $\Z$. This is the longest part of our proof.  The method is borrowed from \cite{ss1} but requires various adjustments  that we shall detail  in Section \ref{sec5}.
We will need \eqref{assumpV3} in order to simplify the construction and estimates near the boundary.



The following  proves  Theorem~\ref{th2}, part B and contains a bit more information useful for proving Theorem~\ref{th4}.
\begin{pro} \label{construct} Let $P$ be a $T_{\lambda(x)}$-invariant  probability measure on $X = \E\times\Lp$ with  first marginal $dx_{|\E}/|\E|$ and such that for $P$ almost every $(x,\j)$ we have $\j\in\mathcal A_{\bm(x)}$. Then, for any  $\eta>  0$, there exists $\delta >0$ and for any $n$ a subset $A_n\subset\mc^n$ such that $|A_n|\ge n!(\pi\delta^2/n)^n$ and for every sequence $\{\nu_n=  \sum_{i=1}^n \delta_{y_i}\}_n$ with $(y_1,\dots,y_n)\in A_n$ the following holds.

i) We have the upper bound
\begin{equation}\label{bsw}\limsup_{n \to \infty}\frac{1}{n} \(\w(y_1, \dots, y_n) - n^2 \I (\mo)   + \frac{n}{2}\log n \)  \le \frac{|\E|}{\pi}\int W(\j) \, dP(x,\j)+\eta.\end{equation}

ii) There exists $\{\j_n\}_n$ in $L^p_\loc(\mr^2,\mr^2)$ such that $\div \j_n = 2\pi( \nu_n' -\bm')$ and such that  the image $P_n$ of $dx_{|\E}/|\E|$ by the map $x\mapsto \(x,\j_n(\sqrt n x+\cdot)\)$ is such that
\begin{equation}\label{convpn} \limsup_{n \to \infty} \dist(P_n,P) \le \eta, \end{equation}
where $\dist$ is a distance which metrizes the topology of weak convergence on $\P(X)$.\\
\ed{Respectively,  the image $Q_n$ of  $dx_{|\E}/|\E|$ by the map $x\mapsto \(x,\nu_n'(\sqrt n x+\cdot)\)$ is such that
\begin{equation}\label{convpn} \limsup_{n \to \infty} \dist(Q_n,Q) \le \eta, \end{equation}
where $\dist$ is a distance which metrizes the topology of weak convergence, and $Q$ is the push-forward of $P$ by $(x,\j) \mapsto \frac{1}{2\pi}\div \j + m_0'(x).$}
\end{pro}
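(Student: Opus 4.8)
The plan is to construct the configurations directly by a patching procedure at the mesoscopic scale, following the strategy of \cite{ss1} but adapting it to the nonconstant weight $\bm(x)$. First I would use the ergodic-theoretic structure of $P$: since $P$ is $T_{\lambda(x)}$-invariant with first marginal $dx_{|\E}/|\E|$, I can disintegrate it over $\E$ and, after a further reduction, assume $P$ is concentrated on configurations whose $W$-value is finite and close to $\int W(\j)\,dP$. By a standard density/truncation argument one reduces to the case where, for each blow-up center $x$, the associated current $\j$ is, on a large box $K_R$, close in $L^p_\loc$ to a periodic (or near-periodic) configuration of points with density $\bm(x)$ whose periodic renormalized energy \eqref{Wcasper} is close to the infimum over such configurations — this is where the approximability of minimizers of $W$ over $\mathcal{A}_{\bm(x)}$ by periodic patterns, and the screening/regularization lemmas of \cite{ss1}, are invoked.

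The heart of the construction is then a \emph{tiling-and-screening} step. I would cover $\E$ (at the original scale) by small squares $Q_k$ of sidelength $\sim \ell$, chosen so that $\bm$ is nearly constant $\approx \bm(x_k)$ on each $Q_k$; in blown-up coordinates each becomes a square $Q_k'$ of sidelength $\ell\sqrt n$, which is large. Into each $Q_k'$ I place a configuration of $\approx n\,\bm(x_k)\,|Q_k|$ points drawn from the near-optimal pattern for $\mathcal{A}_{\bm(x_k)}$ associated with a blow-up center sampled according to the disintegrated measure $P^{x_k}$, then apply the screening procedure so that the generated field $\j_n$ has $\div\j_n = 2\pi(\nu_n'-\bm')$ with no boundary flux across $\p Q_k'$, making the fields in distinct tiles non-interacting. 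Summing the energies tile by tile, using \eqref{Wcasper}-type formulas and \eqref{minalai2}, and using that each tile contributes $\approx |Q_k|\,\frac{1}{\pi}\min_{\mathcal{A}_{\bm(x_k)}}W$ up to a screening error that is $o(1)$ per unit volume, a Riemann-sum argument gives $\frac1n(\w - n^2\I(\mo) + \frac n2\log n) \le \frac{|\E|}{\pi}\int W\,dP + \eta$, which is \eqref{bsw}. The points being placed from an $\ell$-neighborhood of the pattern ensures each has $\gtrsim \delta/\sqrt n$ freedom, giving the volume lower bound $|A_n|\ge n!(\pi\delta^2/n)^n$ (the $n!$ from relabeling, the $(\pi\delta^2/n)^n$ from the independent small balls of radius $\delta/\sqrt n$ around each point).

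For the convergence statements \eqref{convpn}, I would check that the empirical field process $P_n = \dashint_\E \delta_{(x,\j_n(\sqrt n x + \cdot))}\,dx$ converges weakly to $P$: because the tiles realize the disintegration $P^{x}$ on a set of $x$ of density tending to $1$, and the screening only modifies $\j_n$ in a boundary layer of vanishing relative volume, testing against a countable dense family of bounded continuous functions on $X = \E\times\Lp$ shows $\dist(P_n,P)\to 0$ up to the $\eta$ coming from the initial density approximation. The statement for $Q_n$ follows by pushing forward under $(x,\j)\mapsto \frac{1}{2\pi}\div\j + m_0'(x)$, which is continuous on the relevant set and maps $P_n$ to $Q_n$ and $P$ to $Q$. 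The main obstacle, and the part requiring the most care, is the screening near the tile boundaries combined with the nonconstant weight: one must correct the mismatch between the sampled pattern (with density $\bm(x_k)$) and the true background $\bm'$ inside $Q_k'$, and simultaneously kill the normal flux, all while controlling the extra points and extra energy introduced so that they are negligible both in the energy estimate and in the weak convergence of $P_n$; this is precisely the point where the adjustments to \cite{ss1} alluded to in Section \ref{sec5} are needed, and where assumption \eqref{assumpV3} on the regularity of $\p\E$ enters to handle the tiles meeting the boundary.
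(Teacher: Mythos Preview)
Your overall architecture---tile $\E'$, screen in each tile, correct the weight mismatch, sum, then perturb the points to get $A_n$---matches the paper's strategy. But there is a genuine confusion in your sketch that would make the argument fail as written.

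You repeatedly slide between ``general $P$'' and ``minimizers'': you speak of reducing to configurations ``whose periodic renormalized energy is close to the infimum,'' of placing ``the near-optimal pattern for $\mathcal A_{\bm(x_k)}$,'' and of each tile contributing $\approx |Q_k|\,\frac1\pi\min_{\mathcal A_{\bm(x_k)}}W$. That would yield the bound $\alpha$, not $\frac{|\E|}{\pi}\int W\,dP$; it proves Corollary~\ref{coro43} but not Proposition~\ref{construct}. For an arbitrary $T_{\lambda(x)}$-invariant $P$ you must reproduce the \emph{distribution} of currents under $P$, not merely a minimizer at each $x$. Sampling one current from the disintegration $P^{x_k}$ in each tile does not do this either: a single tile carries a single current, so the empirical measure $P_n$ sees only finitely many Diracs in the $\j$-variable and there is no mechanism ensuring these are distributed according to $P$ in the $\j$-direction.

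What the paper actually does is a two-scale construction designed precisely to encode the law of $\j$ under $P$. First it passes to $\tP$ on $\E\times\mathcal A_1$ via $\sigma_{1/\bm(x)}$, then (Lemma~\ref{GH}) it extracts a compact set $H_\ep$ of $\tP$-measure close to $1$ on which the convergence defining $W$ is uniform, partitions $H_\ep$ into finitely many small pieces $H_\ep^1,\dots,H_\ep^{N_\ep}$, and chooses a representative $J_i\in H_\ep^i$ with $W(J_i)$ near $\inf_{H_\ep^i}W$. Each large rectangle $K$ (on which $\bm'$ is replaced by its average $m_K$) is then subdivided into $q_\ep^2$ identical subrectangles, and a \emph{proportion} $n_{i,K}/q_\ep^2\approx \tP(\tfrac{K}{\sqrt n}\times H_\ep^i)$ of them receives a screened, rescaled copy of $J_i$. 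This is the step that makes $P_n\to P$: the proportions encode the $\j$-marginal, the rectangle centers encode the $x$-marginal, and the screening (Proposition~\ref{tronque}) makes the pieces non-interacting. The weight rectification is then a separate elliptic correction $-\nabla H_K$ solving $-\Delta H_K=2\pi(\bm'-m_K)$ with Neumann data, contributing $O(1/\sqrt n)$ per tile. Without this finite-representative-with-proportions mechanism, your sketch has no way to achieve \eqref{convpn} for general $P$, and the energy estimate collapses to the minimizer case.
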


Applying the above proposition with $\eta = 1/k$ we get a subset $A_{n,k}$ in which we choose any $n$-tuple $(y_{i,k})_{1\le i\le n}$. This yields in turn a family $\{P_{n,k}\}$  of probability measures on $X$. A standard  diagonal  extraction argument then yields
\begin{coro}[Theorem~\ref{th2}, Part B]\label{etanul} Under the same assumptions as Proposition~\ref{construct},   there exists a sequence $\{\nu_n= \sum_{i=1}^n \delta_{x_i}\}_n$ and a sequence $\{\j_n\}_n$ in $L^p_\loc(\mr^2,\mr^2)$ such that $\div \j_n = 2\pi( \nu_n' -\bm'(x') \,dx')$ and
\begin{equation}\label{bswbis}\limsup_{n \to \infty}\frac{1}{n} \(\w(x_1, \dots, x_n) - n^2 \I (\mo)   + \frac{n}{2}\log n \)  \le \frac{|\E|}{\pi}\int W(\j) \, dP(x,\j). \end{equation}
Moreover,   denoting  $P_n$ the image of $dx_{|\E}/|\E|$ by the map $x\mapsto \(x,\j_n(\sqrt n x+\cdot)\)$, we have $P_n\to P$ as $n\to +\infty$.
\end{coro}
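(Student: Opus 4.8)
The plan is to deduce Corollary~\ref{etanul} from Proposition~\ref{construct} by a diagonal extraction over the accuracy parameter $\eta$, thereby filling in the ``standard diagonal extraction argument'' alluded to just above the statement. Since the construction behind Proposition~\ref{construct} already carries out all the analytic work, nothing substantial remains, and the argument is purely one of bookkeeping.

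First I would apply Proposition~\ref{construct} for each integer $k\ge 1$ with $\eta=1/k$, obtaining $\delta_k>0$ and, for every $n$, a set $A_{n,k}\subset\mc^n$ with $|A_{n,k}|\ge n!(\pi\delta_k^2/n)^n$, in particular nonempty. I would fix, for each pair $(n,k)$, some tuple $(y_{1,n,k},\dots,y_{n,n,k})\in A_{n,k}$, put $\nu_{n,k}=\sum_{i=1}^n\delta_{y_{i,n,k}}$, and let $\j_{n,k}\in\Lp$ be the field provided by part~ii), so that $\div\j_{n,k}=2\pi(\nu_{n,k}'-\bm')$. Writing $P_{n,k}$ for the image of $dx_{|\E}/|\E|$ under $x\mapsto(x,\j_{n,k}(\sqrt n x+\cdot))$, parts~i) and~ii) give $\limsup_{n\to\infty}\dist(P_{n,k},P)\le 1/k$ together with
\begin{equation}
\limsup_{n\to\infty}\frac1n\(\w(y_{1,n,k},\dots,y_{n,n,k})-n^2\I(\mo)+\frac n2\log n\)\le\frac{|\E|}\pi\int W(\j)\,dP(x,\j)+\frac1k.
\end{equation}

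Next I would extract. For each $k$, these two $\limsup$ bounds let me pick an integer $N_k$, and I may arrange $N_1<N_2<\cdots$, such that for every $n\ge N_k$ one has simultaneously $\dist(P_{n,k},P)\le 2/k$ and $\frac1n(\w(y_{1,n,k},\dots,y_{n,n,k})-n^2\I(\mo)+\frac n2\log n)\le\frac{|\E|}\pi\int W\,dP+2/k$. For $N_k\le n<N_{k+1}$ I then set $(x_1,\dots,x_n):=(y_{1,n,k},\dots,y_{n,n,k})$, $\nu_n:=\nu_{n,k}$, $\j_n:=\j_{n,k}$, $P_n:=P_{n,k}$ (the definition for $n<N_1$ being immaterial). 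By construction $\div\j_n=2\pi(\nu_n'-\bm')$ for every $n$; and since the index $k=k(n)$ attached to $n$ by this concatenation tends to $\infty$ as $n\to\infty$, passing to the limit in the two bounds with parameter $2/k(n)$ produces \eqref{bswbis} and $P_n\to P$ in $\P(X)$ at once.

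I expect no real obstacle in this step: nonemptiness of the $A_{n,k}$ is immediate from the volume bound, the divergence constraint $\div\j_n=2\pi(\nu_n'-\bm')$ is inherited verbatim from Proposition~\ref{construct}~ii), and the sole point needing a moment's care is to take the thresholds $N_k$ strictly increasing so that $k(n)\to\infty$. The genuine difficulty sits entirely upstream, in Proposition~\ref{construct} itself --- producing, from a prescribed $T_{\lambda(x)}$-invariant $P$ with $\j\in\mathcal A_{\bm(x)}$ almost everywhere, an $n$-point configuration whose blown-up electric field approximates $P$ to within $\eta$ in the weak topology while keeping $\frac1n(\w-n^2\I(\mo)+\frac n2\log n)$ below $\frac{|\E|}\pi\int W\,dP+\eta$, and doing so with the quantitative lower bound $|A_n|\ge n!(\pi\delta^2/n)^n$ on the volume of admissible perturbations --- but that is the subject of Section~\ref{sec5} and is assumed here.
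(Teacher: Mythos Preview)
Your proposal is correct and follows precisely the approach the paper indicates: apply Proposition~\ref{construct} with $\eta=1/k$, pick a point in each nonempty $A_{n,k}$, and perform a diagonal extraction in $k$. The paper gives no further details beyond the phrase ``a standard diagonal extraction argument,'' and your write-up is exactly that argument made explicit.
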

Another  consequence of Proposition~\ref{construct} is, recalling  \eqref{defa} and \eqref{defK}:
\begin{coro}[Lower bound part of Theorem \ref{valz}]\label{coro43} For any $\eta>0$ there exists $C_\eta>0$ such that for any $\beta>0$ we have
\begin{equation}\label{lbk}
\liminf_{n\to +\infty} \frac{\log \K}n \ge - \frac{\beta}{2}  (\alpha +\eta) - C_\eta .\end{equation}
\end{coro}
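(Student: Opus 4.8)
~\textbf{Plan of proof of Corollary~\ref{coro43}.} The idea is to combine the explicit lower bound on the volume $|A_n|$ provided by Proposition~\ref{construct} with the upper bound \eqref{bsw} on the Hamiltonian on $A_n$, and to optimize over $P$ at the very end. Recall that $\K$ is given by \eqref{defK}, so that by \eqref{loi2} (or equivalently by plugging \eqref{idwnbu} into the definition of $\Z$ and using \eqref{defK}) we have
\begin{equation}\label{kintegral}
\K = \int_{\mc^n} e^{-\frac{\beta}{2}\left(\w(x_1,\dots,x_n) - n^2\I(\mo) + \frac n2\log n\right)}\,dx_1\dots dx_n.
\end{equation}
The integrand is nonnegative, so restricting the integral to the set $A_n$ furnished by Proposition~\ref{construct} only decreases it. On $A_n$ the configuration satisfies \eqref{bsw}, which says precisely that for $n$ large enough (depending on $\eta$) the exponent
$\frac n2\left(\w - n^2\I(\mo)+\frac n2\log n\right)\big/n$ is at most $\frac{|\E|}{\pi}\int W\,dP + 2\eta$, say (allowing a little room to absorb the $\limsup$ into a genuine bound for $n$ large). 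Hence on $A_n$,
$$
e^{-\frac{\beta}{2}\left(\w - n^2\I(\mo)+\frac n2\log n\right)} \ge e^{-\frac{\beta n}{2}\left(\frac{|\E|}{\pi}\int W\,dP + 2\eta\right)},
$$
and therefore $\K \ge |A_n|\, e^{-\frac{\beta n}{2}\left(\frac{|\E|}{\pi}\int W\,dP + 2\eta\right)}$.

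First I would insert the volume estimate $|A_n|\ge n!\,(\pi\delta^2/n)^n$. By Stirling's formula, $\frac1n\log\left(n!\,(\pi\delta^2/n)^n\right) = \log n! / n + \log(\pi\delta^2) - \log n \to \log(\pi\delta^2) - 1$ as $n\to\infty$, so $\frac1n\log|A_n|$ is bounded below by a constant depending only on $\delta$ (hence only on $\eta$ and $V$, since $\delta$ is the constant produced by Proposition~\ref{construct} from the data $\eta$, $V$, $P$). Call this constant $-C_\eta'$; it does not depend on $\beta$. Taking logarithms in the displayed lower bound for $\K$ and dividing by $n$ gives
$$
\frac{\log\K}{n} \ge \frac1n\log|A_n| - \frac{\beta}{2}\left(\frac{|\E|}{\pi}\int W\,dP + 2\eta\right) \ge -C_\eta' - \frac{\beta}{2}\left(\frac{|\E|}{\pi}\int W\,dP + 2\eta\right),
$$
and passing to the $\liminf$ as $n\to\infty$,
$$
\liminf_{n\to\infty}\frac{\log\K}{n} \ge -\frac{\beta}{2}\left(\frac{|\E|}{\pi}\int W\,dP + 2\eta\right) - C_\eta'.
$$

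Finally I would take the infimum of the right-hand side over all admissible $P$ — that is, all $T_{\lambda(x)}$-invariant probability measures on $X$ with first marginal $dx_{|\E}/|\E|$ and with $\j\in\mathcal A_{\bm(x)}$ for $P$-a.e.\ $(x,\j)$ — since Proposition~\ref{construct} applies to each such $P$ and the left-hand side does not depend on $P$. The infimum of $\frac{|\E|}{\pi}\int W\,dP$ over this class is exactly $\a$: indeed, for $P$-a.e.\ $x$ one can choose $\j$ minimizing $W$ over $\mathcal A_{\bm(x)}$, giving $\int W\,dP \ge \int_\E \min_{\mathcal A_{\bm(x)}} W\,\frac{dx}{|\E|} = \frac{\pi\a}{|\E|}$ by \eqref{defa}, and conversely one builds such a minimizing $P$ by disintegration (this requires a measurable selection of minimizers, which is legitimate by the measurability discussion of $\W$ alluded to in Section~\ref{secnewdefw}, or can be bypassed by a near-minimizing selection up to $\eta$). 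Thus, after relabeling $2\eta$ as $\eta$ and $C_\eta'$ as $C_\eta$ (harmless since $\eta$ is arbitrary), we obtain \eqref{lbk}. The only mildly delicate point is the attainment of $\a$ as an infimum over $P$: one must check that a measurable family $x\mapsto \j_x$ of (near-)minimizers of $W$ over $\mathcal A_{\bm(x)}$ exists and generates an admissible $P$, but since $\bm$ takes values in the compact interval $[\bw,\wb]$ and $W$ on $\mathcal A_m$ scales explicitly in $m$ via \eqref{minalai1}, this reduces to a single measurable selection for $\mathcal A_1$, which is harmless. Everything else is a direct chaining of Proposition~\ref{construct}, Stirling, and \eqref{defa}.
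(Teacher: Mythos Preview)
Your argument is correct and follows essentially the same route as the paper's: restrict the integral defining $\K$ to the set $A_n$ from Proposition~\ref{construct}, bound the exponent via \eqref{bsw}, and use Stirling on $|A_n|\ge n!(\pi\delta^2/n)^n$. The only difference is one of presentation: the paper fixes at the outset the specific $P$ given by the pushforward of $dx_{|\E}/|\E|$ under $x\mapsto(x,\sigma_{\bm(x)}\j_0)$ with $\j_0$ a minimizer of $W$ over $\mathcal A_1$, which by \eqref{minalai1} gives $\frac{|\E|}{\pi}\int W\,dP=\a$ directly and avoids your final ``infimum over admissible $P$'' step---a step where one must be slightly careful, since $C_\eta'$ depends on $P$ through $\delta$, so one cannot literally optimize the whole right-hand side over $P$ but must (as you in effect do at the end) simply pick one $P$ achieving $\a$.
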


\begin{proof}[Proof of the corollary]  Choose $\j_0\in\mathcal A_1$ to be a minimizer for $\min_{\mathcal{A}_1} W$, which exists by \cite[Theorem 1]{ss1}, and let $P$ be the image of the normalized Lebesgue measure on $\E$ by the map $x\mapsto (x, \sigma_{\bm(x)} \j_0)$, where
\begin{equation}\label{sigma} \sigma_{m} \j (y) := \sqrt m\,\j(\sqrt m y).\end{equation}
Then by construction $P$-almost every $(x,\j)$ satisfies $\j\in \mathcal A_{\bm(x)}$ and the first marginal of $P$ is $dx_{|\E}/|\E|$.

Given $\eta>0$, applying Proposition~\ref{construct} and using the notation there,  we have
$|A_n|\ge n! (\delta^2/n)^n$ and from \eqref{loi2} we have
\begin{equation}\label{igesf}1\ge \int_{A_n} \frac{1}{\K} e^{- n \frac{\beta}{2} \F(\nu_n)} \, dy_1 \dots \, dy_n,\end{equation}
where $\nu_n =\sum_{i=1}^n \delta_{y_i}$. From \eqref{lienfw} and \eqref{bsw}, when $(y_1,\dots,y_n)\in A_n$ we have
$$\F(\nu_n)\le \eta + \frac{|\E|}{\pi}\int W(\j) \, dP(x,\j) =\eta +  \frac{1}{\pi}\int_\E W(\sigma_{\bm(x)}\j_0)\,dx.$$
From  \eqref{minalai1}, and since  $\int_\E\bm=1$, we obtain
$$ \frac{1}{\pi}\int_\E W(\sigma_{\bm(x)}\j_0)\,dx= \frac1\pi W(\j_0) - \frac1{2}\int_\E \bm(x)\log\bm(x)\,dx=\a,$$
by definition  \eqref{defa}. We deduce
$$\F(\nu_n)\le \alpha+\eta.$$
Together with \eqref{igesf} we find $1\ge \frac{|A_n|}{\K} e^{- n \frac{\beta}{2} (\eta+\alpha)}.$
Taking logarithms, we are led to
$$\log \K \ge \log n! + n \log \delta^2 - n \log n -\hal  n\beta(\eta +\alpha).$$
From Stirling's formula, $\log n!\ge  n \log n -C n$ and   we deduce  \eqref{lbk}, with $C_\eta = - \log \delta^2 +C$. Note that the dependence on $\eta$ comes from $\delta$. \end{proof}

\section{Consequences for fluctuations: proof of Theorem \ref{th3}}

In this section, we prove Theorem \ref{th3}. The first step is to find, via the method first introduced in \cite{gl7} and tools from \cite{ss1,st}  how $\F$ and $W$  control the
discrepancy between $\nu_n$ and the  measure $n \mo$,
\ed{as can be seen in the following
\begin{lem}\label{lemnew}
Let  $\nu_n=\sum_{i=1}^n \delta_{x_i}$ and  $\j_{\nu_n}=-\nab H_n'$ be associated  through \eqref{Hp}. Let $B_R$ be any ball of radius $R$ (not necessarily centered at $0$).
Assume $\chi$ is a smooth nonnegative function  compactly supported in $U$.  Then for any $1<q<2,$ we have
\begin{equation}\label{firstitem}
\|\sqrt{\chi} \, \j_{\nu_n}\|_{L^q(U)} \le C_q |U|^{\frac1q-\hal}   \(W(\j_{\nu_n},\chi)+ \nu_n'(\widehat{U} ) (\|\chi\|_{L^\infty}+ \|\nab \chi
\|_{L^\infty} ) + N\log N  \)^{\frac 12},
\end{equation}
where $N= \#\{x_1, \dots, x_n|  0< \chi(x_i) \le \hal \chi \} $.
Thus
\begin{equation}\label{jpp}
\int_{B_R} |\j_{\nu_n}|^q
\le C_q (n+R^2)^{1-\frac{q}2} n^{\frac{q}{2}}\( \widehat{\F}(\nu_n) +1\)^{\frac{q}{2}}.
\end{equation}
and
\begin{equation}\label{fluctu}
\|\nu_n-n\mu_0\|_{W^{-1, q}(B_{R}) } \le   C_q  (1+R^2)^{\frac{1}{q}-\frac{1}{2}} n^{\hal} \( \widehat{\F}(\nu_n) +1\)^{\frac{1}{2}}.
\end{equation}

\end{lem}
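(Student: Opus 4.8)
The lemma has three inequalities, and the natural strategy is to prove \eqref{firstitem} first and then deduce \eqref{jpp} and \eqref{fluctu} as corollaries. For \eqref{firstitem}, the idea is to invoke Proposition~\ref{wspread} to replace $W(\j_{\nu_n},\chi)$ by $\int\chi\,dg_{\nu_n}$ up to the error $CN(\log N + \|\bm\|_{L^\infty})\|\nab\chi\|_\infty$ given by \eqref{wg}, and then exploit the lower bound \eqref{lbg}, namely $g_{\nu_n}\ge -C(\|\bm\|_{L^\infty}+1)+\frac14|\j_{\nu_n}|^2\indic_{U\sm\mathcal B_\ro}$. Writing $\int\chi|\j_{\nu_n}|^2 = \int_{U\cap\mathcal B_\ro}\chi|\j_{\nu_n}|^2 + \int_{U\sm\mathcal B_\ro}\chi|\j_{\nu_n}|^2$, the second piece is controlled by $4\int\chi\,dg_{\nu_n}+C(\|\bm\|_{L^\infty}+1)|U|$, while the first — the integral over the bad balls — must be estimated by hand using the explicit local structure of $\j_{\nu_n}$ (it behaves like $\nab\log|\cdot-p|$ near each point $p$, so $\int_{B(p,\eta)}|\j_{\nu_n}|^q$ is finite for $q<2$ even though the $L^2$ integral diverges). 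Summing over the balls of $\mathcal B_\ro$, whose radii sum to at most $\ro$ per unit ball, and using Hölder to pass from $\int\chi|\j_{\nu_n}|^q$ to an $L^2$-type quantity via the factor $|U|^{\frac1q-\frac12}$, gives \eqref{firstitem}. The role of $N=\#\{x_i:0<\chi(x_i)\le\frac12\|\chi\|_\infty\}$ (or the version with $\frac12\chi$ in the statement) is exactly to count the points where $\sqrt\chi$ has nonnegligible gradient, matching the $N\log N$ error term.

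**From \eqref{firstitem} to \eqref{jpp}.** Here I would take $U = \widehat{B_R}$ (or a slightly enlarged ball) and $\chi = \chi_{\widehat{B_R}}$ a cutoff as in \eqref{defchi}, so that $\|\chi\|_{L^\infty}$ and $\|\nab\chi\|_{L^\infty}$ are bounded by absolute constants and $|U|\le C(1+R^2)$. The term $\nu_n'(\widehat U)$ counts points of the blown-up configuration in a ball of radius $\sim R$ at the blow-up scale; this should be bounded using \eqref{bgnalpha} from Proposition~\ref{wspread}, $\#(\Lambda\cap U)\le C(1+\|\bm\|_{L^\infty}^2|\widehat U| + g_{\nu_n}(\widehat U))$, together with $\int dg_{\nu_n} = n\pi\,\widehat{\F}(\nu_n)$ (this is \eqref{fg} minus the $\zeta$ term, i.e. \eqref{fnhat}). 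Similarly $N\le\nu_n'(\widehat U)$ and $N\log N$ is absorbed. One also needs $W(\j_{\nu_n},\chi_{\widehat{B_R}})\le C(W(\nab H_n',\indic_{\mr^2}) + \text{error}) = C(n\pi\widehat{\F}(\nu_n)+\text{error})$; since the energy density $g_{\nu_n}$ is bounded below, restricting the integral only decreases it up to a controlled additive error proportional to the volume, so $W(\j_{\nu_n},\chi)\lesssim n\widehat\F(\nu_n) + C(n+R^2)$. Plugging all these bounds (each of the form $\lesssim n(\widehat\F(\nu_n)+1)+R^2$) into \eqref{firstitem}, raising to the power $q$, and noting $(n+R^2)^{\frac q2}(\ldots)^{\frac q2}$ can be rearranged as $(n+R^2)^{1-\frac q2}n^{\frac q2}(\widehat\F+1)^{\frac q2}$ up to absorbing lower-order terms, yields \eqref{jpp}.

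**From \eqref{jpp} to \eqref{fluctu}.** This is the cleanest step. Since $\nu_n - n\mu_0 = \frac{1}{2\pi}\div(\sqrt n\,\j_{\nu_n}(\sqrt n\,\cdot))$ at the original scale — recall $\div\j_{\nu_n} = 2\pi(\nu_n' - \bm')$ after blow-up, and $\nu_n' = \sqrt n$-rescaling of $\nu_n$ — duality between $W^{-1,q}$ and $W^{1,p}_0$ with $\frac1p+\frac1q=1$ gives $\|\nu_n-n\mu_0\|_{W^{-1,q}(B_R)}\le C\|\j_{\nu_n}\|_{L^q}$ up to the scaling Jacobian factors. Carefully tracking the change of variables $x\mapsto\sqrt n x$ (which turns $B_R$ at original scale into $B_{R\sqrt n}$ at blow-up scale, and contributes the appropriate powers of $n$) and inserting \eqref{jpp} with $R$ replaced by the rescaled radius produces \eqref{fluctu}.

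**Main obstacle.** I expect the genuinely delicate point to be the treatment of the integral of $|\j_{\nu_n}|^q$ over the bad set $\mathcal B_\ro$ in the proof of \eqref{firstitem}: one cannot use the $L^2$ bound there (it is infinite), so one has to quantify the $L^q$ norm of a near-logarithmic gradient on a union of small balls and sum the contributions using only the packing property of $\mathcal B_\ro$ (total radius $\le\ro$ in each unit ball) plus the cardinality control \eqref{bgnalpha}. Getting the exponents to line up — in particular the precise power $|U|^{\frac1q-\frac12}$ and ensuring the bound is uniform in $\ro$ small — is where the care is needed. The bookkeeping of scaling factors between original and blown-up coordinates in the passage \eqref{jpp}$\to$\eqref{fluctu} is routine but easy to get wrong.
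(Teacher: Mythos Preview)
Your overall strategy is reasonable, but it diverges from the paper in two places, and in the second the divergence matters.

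For \eqref{firstitem}, the paper does not derive it from Proposition~\ref{wspread}; it simply quotes \cite[Corollary~1.2]{st}, whose proof goes through Lorentz-space estimates (the singular part of $\j_{\nu_n}$ lies in $L^{2,\infty}$, the rest in $L^2$, and interpolation gives $L^q$). Your proposed route via Proposition~\ref{wspread} is in the right spirit, but your sketch writes ``$\int\chi|\j_{\nu_n}|^2 = \int_{U\cap\mathcal B_\ro}+\int_{U\sm\mathcal B_\ro}$'', which is infinite; you must work with $\int\chi|\j_{\nu_n}|^q$ from the start, and then combining the $L^2$ bound on $U\sm\mathcal B_\ro$ with the pointwise $L^q$ bound on the small balls to produce the precise prefactor $|U|^{\frac1q-\frac12}$ is where the Lorentz-space argument of \cite{st} earns its keep.

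For \eqref{jpp}, your choice of $U=\widehat{B_R}$ is not what the paper does, and this creates a genuine difficulty. The paper instead takes $U=\E'\cup\bigl(\cup_i B(x_i',\tfrac12)\bigr)\cup B_R$ and a cutoff $\chi$ equal to $1$ on $U$. This buys two things for free: first, $N=0$ since $\chi\equiv 1$ at every point; second, and crucially, $1-\chi$ vanishes near every singularity, so $W(\j_{\nu_n},1-\chi)=\tfrac12\int(1-\chi)|\j_{\nu_n}|^2\ge 0$, whence $W(\j_{\nu_n},\chi)\le W(\j_{\nu_n},\indic_{\mr^2})=n\pi\,\widehat\F(\nu_n)$ with no error term. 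The price is $|U|\le C(n+R^2)$ rather than $C(1+R^2)$, but that is exactly the factor in \eqref{jpp}. With your smaller $U$ you must control $N\log N$ via \eqref{bgnalpha}, and $N$ can be of order $n(\widehat\F+1)+R^2$; the resulting $\log$ factor does not match the stated bound. Your claim that ``restricting the integral only decreases it up to an additive error proportional to the volume'' is also not quite right for $W(\j,\chi)$ when $1-\chi$ does not vanish near the singularities.

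For \eqref{fluctu}, your scaling argument is the paper's: $\|\nu_n-n\mu_0\|_{W^{-1,q}(\Omega)}\le C\|\nab H_n\|_{L^q(\Omega)}$ and $\int_\Omega|\nab H_n|^q=n^{q/2-1}\int_{\Omega'}|\j_{\nu_n}|^q$, then insert \eqref{jpp}.
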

\begin{proof}
The first item is a rewriting of \cite[Corollary 1.2]{st}.
We then choose $\chi$   such that  $\chi  = 1$ on $U:=\E'\cup \(\cup_{i=1}^n B(x_i',\hal)\)\cup B_R $ and $\|\chi\|_\infty$, $\|\nab \chi\|_\infty\le 1$, compactly supported on $\widehat {U} = \{x: d(x,U)\le 1\}$. Using the fact that    $| \widehat {U}| \le  C (n + R^2)$ where $C$ depends only on $\E$,
  from \eqref{firstitem} we find
\begin{equation}\label{jp}
\|\sqrt{\chi} \, \j_{\nu_n}\|_{L^q(U)} \le C_q ( n+R^2)^{\frac1q-\hal}   \(W(\j_{\nu_n},\chi)+ n\)^{\frac 12}.
\end{equation}

  Since  $\nu_n'=0$ in  the support of $1-\chi$, we have
$$W(\j_{\nu_n},1-\chi) = \hal\int(1-\chi)|\j_{\nu_n}|^2\ge 0.$$
In particular $W(\j_{\nu_n},\chi)\le W(\j_{\nu_n},\chi)+ W(\j_{\nu_n},1-\chi) = W(\j_{\nu_n},\indic_{\mr^2})$. It then follows from \eqref{jp} and the fact that
$\widehat{\F}(\nu_n) = \frac1{\pi n} W(\j_{\nu_n},\indic_{\mr^2}) $  (cf. \eqref{fnhat})  that \eqref{jpp} holds.

By scaling we have $\int_\om|\nab H_n|^q =n^{\frac q2-1} \int_{ \om'}|\j_{\nu_n}|^q$, where $H_n= - 2\pi\Delta^{-1} \(\nu_n - n \mo\)$, while $\|\nu_n - n \mo\|_{W^{-1,q}(\om)}\le C \|\nab H_n\|_{L^q(\om)}$.  Thus \eqref{fluctu} follows. \end{proof}

The next proposition, whose proof will be given below, shows how $\widehat{F_n}$ controls $D(x_0', R)$.
}
\begin{pro}\label{5.1} Let $\nu_n=\sum_{i=1}^n \delta_{x_i}$, and $g_{\nu_n}$ be as in Definition~\ref{defG}. There exists a universal constant $R_0>0$ such that for any $R>R_0$, and any $x_0' = \sqrt{n} x_0\in \mr^2$,
we have
\begin{equation}\label{lgnu} \int_{B_{2R}(x_0')} \,d g_{\nu_n} \ge  c D(x_0',R)^2 \min\(1, \frac{|D(x_0',R)|}{R^2}\)-C R^2,\end{equation}
where $c>0$ and $C$ depend only on $V$, and where $D$ was defined in \eqref{dxr}.
\footnote{In fact $R_0$ could be any positive constant, and then $c,C$ would depend on $R_0$ as well, but this requires to adjust $\rho$ accordingly and we omit for simplicity to prove this fact.}\end{pro}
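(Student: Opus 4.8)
The plan is to deduce the lower bound \eqref{lgnu} on the mass $g_{\nu_n}(B_{2R}(x_0'))$ from the fact that a large discrepancy $D(x_0',R)$ forces a large $L^2$ norm of the electric field $\j_{\nu_n}=-\nabla H_n'$ on an annulus around $B_R(x_0')$, which in turn is controlled from below by $g_{\nu_n}$ via \eqref{lbg}. Concretely, recall $\div \j_{\nu_n}=2\pi(\nu_n'-m_0')$, so by the divergence theorem, for each radius $t$,
\begin{equation}\label{fluxid}
\int_{\p B_t(x_0')} \j_{\nu_n}\cdot \vec\nu = 2\pi\(\nu_n'(B_t(x_0'))-\int_{B_t(x_0')} m_0'\) .
\end{equation}
For $t$ between $R$ and $2R$ the right-hand side is, up to the bounded-density correction coming from \eqref{bornmo} (which only affects the count of $n m_0$-mass by $O(R^2)$), comparable to $2\pi D(x_0',R)$ as long as $D$ does not change too much over the annulus; if it does change a lot, then $\nu_n'$ has many points there and the conclusion is even easier. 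Then Cauchy--Schwarz on \eqref{fluxid} gives
\begin{equation}\label{cslb}
\int_{\p B_t(x_0')}|\j_{\nu_n}|^2 \ge \frac{1}{2\pi t}\(\int_{\p B_t(x_0')}\j_{\nu_n}\cdot\vec\nu\)^2 \gtrsim \frac{D(x_0',R)^2}{t},
\end{equation}
and integrating $t$ over $[R,2R]$ (or a suitable subinterval) yields $\int_{A}|\j_{\nu_n}|^2 \gtrsim D(x_0',R)^2$ on the annulus $A=B_{2R}(x_0')\setminus B_R(x_0')$.

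The second ingredient handles the ``bad balls'': the estimate \eqref{cslb} only controls $|\j_{\nu_n}|^2$, but \eqref{lbg} only bounds $g_{\nu_n}$ below by $\frac14|\j_{\nu_n}|^2$ \emph{outside} the collection $\mathcal{B}_\rho$ of small balls covering $\supp(\nu_n')$. Since by construction the sum of radii of balls of $\mathcal{B}_\rho$ meeting any unit ball is $\le\rho<1/8$, the balls are small and one can, for most radii $t\in[R,2R]$, find a circle $\p B_t(x_0')$ essentially disjoint from $\mathcal{B}_\rho$ (a Fubini / mean-value argument in $t$: the set of bad $t$ has measure $\le C\rho R$), on which \eqref{cslb} combines with \eqref{lbg} to give $\int_{\p B_t}\,dg_{\nu_n}\ge \frac14\int_{\p B_t}|\j_{\nu_n}|^2 - C(\|m_0'\|_\infty+1)\cdot 2\pi t$. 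Integrating the good $t$'s over $[R,2R]$ then produces $g_{\nu_n}(B_{2R}(x_0'))\gtrsim D(x_0',R)^2 - CR^2$. The truncation by $\min(1,|D|/R^2)$ appears because when $|D(x_0',R)|\ge R^2$ one cannot gain the full $D^2$ from a single annulus of width $R$ (the field is spread over area $\sim R^2$ and $\int_A|\j|^2\ge \frac{1}{|A|}(\int_A|\j|)^2$-type bounds degrade); instead one uses \eqref{bgnalpha} together with the sheer number of points, $\#(\Lambda\cap B_R)\ge |D|$, to get $g_{\nu_n}(\widehat{B_R})\gtrsim |D| - CR^2\gtrsim D^2\cdot\frac{|D|}{R^2}/|D|\cdot\ldots$; more carefully, in this regime one partitions the excess points into $\sim |D|/R^2$ annuli of unit-order width or rescales, getting the factor $|D|/R^2$. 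Either way the two regimes are glued by the $\min(1,|D|/R^2)$.

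The main obstacle I anticipate is precisely the large-discrepancy regime $|D(x_0',R)|\gg R^2$ and the bookkeeping needed to get the correct power there: the naive flux-through-one-annulus argument only yields $g\gtrsim D$, not $g\gtrsim D\cdot\frac{|D|}{R^2}$, so one must exploit that the $|D|$ excess (or deficit) points force the field to be large on a whole range of scales, not just width $R$ — effectively iterating \eqref{cslb} over dyadic annuli $B_{2^{k}R}\setminus B_{2^{k-1}R}$ up to the scale $\sim \sqrt{|D|}$ at which charge neutrality is restored, and summing the contributions $\sum_k \frac{D^2}{2^k R^2}\cdot(\text{correction})$. One has to be careful that the density correction $\int_{B_t} m_0'$ stays subdominant (this needs $t\lesssim \sqrt{n}$, i.e.\ $B_t$ still inside a fixed dilate of $\E'$, which is where the constant $R_0$ and the restriction to the relevant range of $R$ enter), and that the bad-ball exceptional set in $t$ remains a small fraction at every dyadic scale (uniform in the scale, which follows from the $\rho$-bound being scale-free). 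A cleaner alternative for this regime, which I would pursue in parallel, is to invoke \eqref{bgnalpha} directly: $|D(x_0',R)|\lesssim \#(\Lambda\cap B_{2R}(x_0'))+Cn m_0(B) \lesssim 1+\|m_0'\|_\infty^2 R^2 + g_{\nu_n}(B_{2R+1}(x_0'))$ (using $n\mu_0(B)\le CR^2$ since $B$ has $n$-scale radius $R/\sqrt n$), which immediately gives $g_{\nu_n}(B_{2R}(x_0'))\gtrsim |D(x_0',R)| - CR^2$; multiplying by the obvious $D^2\le D^2$ in the regime $|D|\le R^2$ and by this linear bound when $|D|\ge R^2$ reproduces the stated form after adjusting constants. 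I expect the actual proof to combine the flux/Cauchy--Schwarz argument for the ``small $D$'' regime with the point-counting bound \eqref{bgnalpha} for the ``large $D$'' regime, with $R_0$ chosen large enough (depending on $\rho$, hence on $V$) so that the $-CR^2$ error can absorb all lower-order terms.
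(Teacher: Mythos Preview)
Your core idea --- compute the flux of $\j_{\nu_n}$ through closed curves around $B_R(x_0')$, apply Cauchy--Schwarz, and integrate over a range of radii --- is exactly the paper's strategy. However, there is one genuine gap and one confusion about the regimes.

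\textbf{The gap: avoiding $\mathcal{B}_\rho$.} Your Fubini claim that the set of bad radii $t\in[R,2R]$ (those for which $\partial B_t(x_0')$ meets $\mathcal{B}_\rho$) has measure $\le C\rho R$ is off by a factor of $R$. Each ball $B(p_i,r_i)\in\mathcal{B}_\rho$ blocks an interval of $t$-values of length $2r_i$, so the bad set has measure $\le 2\sum_i r_i$. But the annulus $B_{2R}\setminus B_R$ is covered by $O(R^2)$ unit balls, and in each the sum of radii is $\le\rho$, so $\sum_i r_i\le C\rho R^2$, not $C\rho R$. For large $R$ this swamps the whole interval $[R,2R]$; the paper explicitly flags that ``it is not true in general that one can find enough such circles.'' The paper's fix is to replace circles by level sets $\gamma_t=\{f=t\}$ of a \emph{modified} distance $f$: the infimum over curves $\gamma$ joining $B_R(x_0')$ to $x$ of the length of $\gamma\setminus\mathcal{B}_\rho$ (a degenerate metric vanishing on $\mathcal{B}_\rho$). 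These level sets automatically avoid $\mathcal{B}_\rho$, and a short lemma shows $f(x)$ is comparable to $|x-x_0'|-R$, so the co-area formula still gives control of $\int|\j_{\nu_n}|^2$ over the region $\{0<f<R/4\}\subset B_{2R}(x_0')$.

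\textbf{The regime confusion.} You have the two cases backwards. When $|D|\ge R^2$ the target bound is $cD^2$ (since $\min(1,|D|/R^2)=1$), and this is the \emph{easy} case: the flux stays $\ge\pi|D|$ on the full range $t\in[1/2,R/4]$ because the $m_0'$-correction is only $O(R^2)\le|D|$, so integrating Cauchy--Schwarz gives $\gtrsim D^2$. When $|D|<R^2$ the target is only $c|D|^3/R^2$, and this is where the subtlety lies: the flux $2\pi D - C((R+4t)^2-R^2)$ remains $\ge\pi D$ only for $t$ up to $T\sim D/R$, so you integrate over an interval of width $\sim D/R$ and get $\frac{D^2}{R}\cdot\frac{D}{R}=D^3/R^2$. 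The paper encodes both cases in the single threshold $T=\frac{R}{4}(\sqrt{1+cD/R^2}-1)$ and integrates over $[\frac12,\min(T,R/4)]$, then uses Jensen on $\int dt/\mathcal{H}^1(\gamma_t)$. Your proposed fallback via \eqref{bgnalpha} only yields $g\gtrsim|D|-CR^2$, which is much weaker than $D^2$ in the large-$D$ regime and does not recover the stated bound.
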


We now proceed to the
\begin{proof}[Proof of Theorem \ref{th3}]
We start by proving   \eqref{local}. If $R>R_0$ and $|D(x_0',R)|\ge \eta R^2$ then from Proposition \ref{5.1} and using the fact  ---  from Proposition \ref{wspread} --- that $g_{\nu_n}$  is positive outside $\cup_{i=1}^n B(x_i',\lambda)$ and that $g_{\nu_n}\ge -C$ everywhere, we deduce from   \eqref{fg} and \eqref{lgnu} that
\begin{equation}\label{pointf}\F(\nu_n) \ge  \frac{1}{n} \(- CR^2 + c \min( \eta^2 ,  \eta^3) R^4 \)+ 2\int \x\,d\nu_n .\end{equation}
Inserting into \eqref{loi2} we find
$$\Q\( \left| D(x_0',R)\right|\ge \eta R^2\)
\le\frac{1}{\K}  \exp\(C \beta  R^2 -c \beta\min(\eta^2, \eta^3)  R^4\) \int  e^{-n\beta\int\x\,d\nu_n} \, dx_1\dots dx_n.$$
Then, using   the lower bound \eqref{lbk} and Lemma~\ref{lemintxi} we deduce that if $\beta>\beta_0>0$ and $n$ is large enough depending on $\beta_0$ then
$$\log \Q\( \left|D(x_0',R)\right|\ge  \eta R^2\) \le  -c\beta\min(\eta^2,\eta^3) R^4+ C\beta R^2 + Cn\beta+ Cn , $$
where $c, C>0$ depend only on $V$. Thus \eqref{local} is \smallskip established.

We next prove \eqref{fubi}.
By Fubini's theorem, and using again  the facts that  $g_{\nu_n}$ is positive outside $\cup_{i=1}^n B(x_i',C)$ and $\ge -C$ everywhere  we have
$$\int_{ \mr^2} \,dg_{\nu_n} \ge \int_{U'}\( \dashint_{B(x', 2R)} \,dg_{\nu_n}\)\, dx'  - C | U'|- Cn .$$
Combining with Proposition \ref{5.1} it follows that
\begin{equation*}\int_{ \mr^2} \,dg_{\nu_n} \ge -C (|U'|+n) + \frac{1}{ R^2}\int_{U'} -C R^2 +cD(x',R)^2 \min \(1,\frac{|D(x',R)|}{R^2} \)   \, dx'.\end{equation*}
 i.e., changing the constants if necessary,
 \begin{equation}\label{Fmoy}
\int_{ \mr^2} \,dg_{\nu_n}\ge  -C(|U'|+n) + \frac{c}{R^2} \int_{U'} D(x',R)^2 \min \(1, \frac{|D(x',R)|}{R^2} \) \, dx .\end{equation}
 It follows, using as above \eqref{loi2}, \eqref{fg}, \eqref{lbk} and Lemma~\ref{lemintxi}, and since  $|U'| = n|U|$,  that
  $$\log \Q\( \int_{U'}   \frac{ D(x',R)^2}{R^2} \min \(1,\frac{|D(x',R)|}{R^2} \) \, dx  \ge
  \eta \) \le -c n\beta \eta + C n \beta \(|U|+1\)+ Cn,$$
  where $c,C>0$ depend only on $V$, where $\beta>\beta_0>0$ and where \smallskip $n>n_0(\beta_0)$.

 We next turn to \eqref{controlez}.  Arguing as above, from \eqref{fg} we have $\F(\nu_n) \ge - C  + 2  \int\x\,d\nu_n.$
Splitting $2\int \x \, d\nu_n $ as $\int \x\, d\nu_n+\int \x\, d\nu_n$, inserting into \eqref{loi2} and using \eqref{lbk} we are led to
$$\Q\( \int\x \, d \nu_n \ge \eta\) \le e^{ -\hal n \beta \eta + C n (\beta  +1)} \int e^{-n\beta \int \x\,d\nu_n} \, dx_1\dots dx_n,$$
where $C$ depends only on $V$. Then, using  Lemma \ref{lemintxi} we deduce \smallskip
\eqref{controlez}.

There remains to prove \eqref{cv}.
Reasoning as above after \eqref{pointf}, the probability that $\widehat{F_n}(\nu_n) \ge \eta$ is bounded above for any $\beta>\beta_0>0$ and $n$ large enough depending on $\beta_0$  by
$\exp(-\hal n\beta\eta + Cn(\beta+1))$, where $C$ depends on $V$ only. In view of  \eqref{fluctu}, it  follows that
$$\Q\(\(1+\frac{R^2}n\)^{\frac12-\frac1q} \|\nu_n - n \mo\|_{W^{-1,q}(B_{R/\sqrt n})}\ge C_q n^\hal (1+\eta)^\hal  \)\le \exp(-\hal n\beta\eta + Cn(\beta+1)).$$
After a slight rewriting, this concludes the proof of Theorem~\ref{th3}.

\end{proof}

\ed{
We now turn to the proof of Proposition \ref{5.1}.
The idea of the proof is the following: if $\div
\j=a(x) $ in a ball centered, say at $0$, then we can bound from below the contribution to the energy on circles as follows, using the Cauchy-Schwarz inequality and integration by parts
$$
\int_{\p B(0,t)} |\j|^2 \ge \frac{1}{2\pi t}\(\int_{\p B(0,t)} \j \cdot \vec{\nu}\)^2 \ge \frac{1}{2\pi t} \(\int_{B(0,t)} a(x)\, dx\)^2
.$$
Thus if we can bound from below the total charge in $B(0,t)$, i.e. $\int_{B(0,t)} a$, by integrating over circles we get a bound from below on $\int|\j|^2 $ which scales like the square of that charge. This is roughly how we expect the square of the discrepancy to appear in the right-hand side of \eqref{lgnu}.
This idea needs however two modifications in order to truly work: the first is that if there is a total charge, or charge discrepancy $D$ in a ball of radius $R$, we cannot be sure that the same holds in balls of radius $t$ different from $R$.
However, our charge density $a=\sum \delta_{x_i'}-\mu_0'$ has a particular structure: its negative part is bounded in $L^\infty$, so the charge discrepancy in $B(0,t)$ cannot decrease too quickly as $t $ moves away from $R$.
The second point is that what we need to bound from below is not $\int |\j|^2$ but $\int g$, via \eqref{lowgenu}, and so one may only use such lower bounds on circles that do not intersect the balls of $\B_\ro$. However it is not true in general  that one can find enough such circles. In order to go around this difficulty, instead of circles, we shall use curves that are defined as level lines of the distance to $\B_\ro$, this way they will automatically avoid the balls of $\B_\ro$. The co-area formula (see e.g. \cite{evgar}) will then be used to relate $\int  |\j|^2$ to the integrals along these curves.}

More precisely,
let us introduce a modified distance function to $x_0'$, as follows.
Two cases can be distinguished: either $D(x_0', R)>0$ or $D(x_0', R)\le 0$.
If $D(x_0', R)>0$, we let, for any $x$, $f(x)$ be the infimum over the set of curves $\gamma$ joining a point in $B_R(x_0')$ to $x$ of the length of $\gamma\sm\mathcal B_\rho$. This is also the distance to $x_0'$ for  the degenerate metric which is Euclidean outside $B_R(x_0')\cup \mathcal B_\rho$ and vanishes on $B_R(x_0')\cup\mathcal B_\rho$.

If $D(x_0', R)\le 0$ we define $f(x)$ to be the distance of $x$ to the complement of $B_R(x_0')$ with respect to  the metric which is Euclidean on $B_R(x_0')\sm\mathcal B_\rho$.

 We claim the following:
\begin{lem}\label{lemdistf}
In the first case,
if $|x-x_0'|\ge R+2$ then
$$\frac{|x-x_0'|-R}{4}\le f(x) \le |x-x_0'|-R.$$
In the second case, if $|x-x_0'|< R-2$ then
$$\frac{R -|x-x_0'|}{4}\le f(x) \le R - |x-x_0'|.$$
\end{lem}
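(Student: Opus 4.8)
The plan is to estimate the modified distance function $f$ from above and below by comparing it with the Euclidean distance to $x_0'$, exploiting the key structural fact (from Proposition~\ref{wspread}) that the sum of the radii of the balls in $\mathcal B_\rho$ meeting any unit ball is bounded by $\rho<1/8$.

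\textbf{Upper bounds.} These are the easy directions and require no information about $\mathcal B_\rho$. In the first case ($D(x_0',R)>0$), take $\gamma$ to be the radial segment from the point of $\partial B_R(x_0')$ nearest to $x$ out to $x$; since $f(x)$ is an infimum over all joining curves of the length of $\gamma\setminus\mathcal B_\rho$, and the length of this particular $\gamma$ is already $|x-x_0'|-R$, we get $f(x)\le |x-x_0'|-R$ by simply forgetting the deletion of $\mathcal B_\rho$ (which only shortens). In the second case ($D(x_0',R)\le 0$), $f(x)$ is the distance from $x$ to $(B_R(x_0'))^c$ for the metric that is Euclidean on $B_R(x_0')\setminus\mathcal B_\rho$; since the straight radial segment from $x$ to $\partial B_R(x_0')$ has Euclidean length $R-|x-x_0'|$ and (forgetting again that the metric vanishes on $\mathcal B_\rho$, which only decreases distance) this bounds $f(x)$, we get $f(x)\le R-|x-x_0'|$.

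\textbf{Lower bounds.} This is the main point. Here we must show that deleting (or zeroing out the metric on) $\mathcal B_\rho$ cannot shorten paths by more than a factor $4$. Consider the first case and any curve $\gamma$ from $B_R(x_0')$ to $x$ with $|x-x_0'|\ge R+2$. Its Euclidean length is at least $|x-x_0'|-R\ge 2$. The quantity $f$ essentially measures $\mathrm{length}(\gamma\setminus\mathcal B_\rho)$, so I must control $\mathrm{length}(\gamma\cap\mathcal B_\rho)$. Cover the relevant region by unit balls; on any unit ball the portion of $\mathcal B_\rho$ present has total radius $\le\rho$, hence total diameter $\le 2\rho<1/4$, so $\gamma$ can spend at most length $1/4$ of each unit length inside $\mathcal B_\rho$ — more precisely, partitioning $\gamma$ into arcs and using that within any ball of radius $1$ the balls of $\mathcal B_\rho$ have total radius $\le\rho$, one gets $\mathrm{length}(\gamma\cap\mathcal B_\rho)\le (2\rho+\text{something})\,\mathrm{length}(\gamma)\le\tfrac34\,\mathrm{length}(\gamma)$ once $\rho$ is small enough, so that $\mathrm{length}(\gamma\setminus\mathcal B_\rho)\ge\tfrac14\,\mathrm{length}(\gamma)\ge\tfrac14(|x-x_0'|-R)$. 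Taking the infimum over $\gamma$ gives $f(x)\ge (|x-x_0'|-R)/4$. The second case is symmetric: any curve from $x$ (with $|x-x_0'|<R-2$) to $(B_R(x_0'))^c$ has Euclidean length $\ge R-|x-x_0'|$ and, by the same covering argument, loses at most a factor $4$ when we excise $\mathcal B_\rho$.

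\textbf{Main obstacle.} The one delicate estimate is the claim that a curve cannot spend more than a fixed fraction of its length inside $\mathcal B_\rho$. The subtlety is that the balls of $\mathcal B_\rho$ can be positioned adversarially along $\gamma$, and the bound "total radius $\le\rho$ per unit ball" must be leveraged correctly via a covering/chaining argument over overlapping unit balls along $\gamma$ (one cannot simply multiply $\rho$ by the length, because of boundary overlaps, but choosing $\rho<1/8$ gives enough room, as already flagged in the remark right after Proposition~\ref{wspread}). I would handle this by splitting $\gamma$ into consecutive subarcs of length $1$ (plus a short remainder), noting each such subarc lies in a ball of radius $1$, and summing the per-subarc contributions; this is routine once set up, and the constant $4$ is generous given $\rho<1/8$.
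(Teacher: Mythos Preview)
Your upper bounds are fine and match the paper. The lower bound argument has a genuine gap.

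The claim you rely on, that for each subarc of $\gamma$ of arclength $1$ (hence contained in a unit ball) one has
\[
\ell(\gamma\cap\mathcal B_\rho)\le 2\rho\quad\text{on that subarc,}
\]
is false: the bound ``total radius $\le\rho$'' controls the \emph{diameter} of $\mathcal B_\rho$ inside a unit ball, not the \emph{arclength} a curve can spend there. A curve of arclength $1$ can lie entirely inside a single ball of radius $\rho<1/8$ (a tight spiral), so for that subarc $\ell(\gamma\setminus\mathcal B_\rho)=0$, and your fraction bound $\ell(\gamma\setminus\mathcal B_\rho)\ge\tfrac14\ell(\gamma)$ fails. Since the lower bound on $f(x)$ must hold for \emph{every} competitor curve $\gamma$, you cannot rule out such wiggling by appealing to minimality.

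The paper fixes this by chaining not by arclength but by \emph{displacement}: set $x^0=x$ and let $x^{k+1}$ be the first exit point of $\gamma$ from $B_1(x^k)$ (or the hitting point of $B_R(x_0')$). Then $|x^{k+1}-x^k|=1$ for intermediate steps, and the triangle inequality gives $K\ge |x-x_0'|-R$ steps. On each step, the subarc stays in $B_1(x^k)$ and one has
\[
\ell\bigl(\gamma[t_k,t_{k+1}]\setminus\mathcal B_\rho\bigr)\ \ge\ |x^{k+1}-x^k|-2\rho,
\]
which \emph{is} valid: project onto the segment $[x^k,x^{k+1}]$; the projection of the $\mathcal B_\rho$-balls meeting $B_1(x^k)$ is a union of intervals of total length $\le 2\rho$, and the portion of $\gamma$ outside $\mathcal B_\rho$ must cover the complement of these intervals in $[0,|x^{k+1}-x^k|]$. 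Summing over $k$ gives
\[
\ell(\gamma\setminus\mathcal B_\rho)\ \ge\ (1-2\rho)\,K - 1\ \ge\ (1-2\rho)\bigl(|x-x_0'|-R\bigr)-1,
\]
and the assumption $|x-x_0'|-R\ge 2$ together with $\rho<1/8$ absorbs the $-1$ into the factor $1/4$. The second case is analogous. In short: replace your arclength partition by the first-exit-from-unit-balls chaining, and compare to displacement rather than to arclength.
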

\begin{proof} We start with the first case. The upper bound is obvious so we turn to the lower bound. Let $\gamma(t)$ be a continuous curve joining $x=\gamma(0)$ to $B_R(x_0')$. Let us build by induction a sequence $x^0=x, \dots, x^{K}$ with $x^{k+1}$ defined as follows: let $t_{k+1}$ be the smallest $t>t_k$ such that $\gamma(t)\notin B_1(x^k)\cap\gamma$ or $\gamma(t)\in B_R(x_0')$. This procedure terminates after a finite number of steps at $x^K\in\p  B_R(x_0')$.
By triangle inequality we have
\begin{equation}\label{triineq}
|x-x_0'|\le \sum_{k=0}^{K-1} |x^{k+1}-x^k| + |x^K - x_0'| \le K+ R.\end{equation}
On the other hand, by property of $\mathcal B_\rho$, for any $0\le k\le K-1$ we have
\begin{equation*}
\ell(\gamma[t_k,t_{k+1}]\sm\mathcal B_\rho) \ge |x^{k+1}-x^k| - 2\rho,\end{equation*} where $\ell$ denotes the length.
Summing this over $k$ and using \eqref{triineq}, we find
$$\ell(\gamma\sm\mathcal B_\rho)\ge K-1 -2 K \rho \ge |x-x_0'| - R-1 - 2\rho \(|x-x_0'| - R \).$$
Taking the infimum over all curves $\gamma$ we deduce that
$$f(x)\ge \(|x-x_0'| - R\) (1-2\rho) - 1.$$
Since by assumption $|x-x_0'|-R\ge 2$, we obtain $f(x)\ge \(|x-x_0'| - R\) (1-2\rho   - \frac{1}{2}) $ and the result follows since $\rho<1/8$.

The proof in the  second case is analogous.
\end{proof}

\begin{proof}[Proof of the proposition]
From Proposition~\ref{wspread}, defining $\j_{\nu_n}$ and $g_{\nu_n}$ as in Definition~\ref{defG}, we have
\begin{equation}\label{lowgenu}g_{\nu_n} \ge - C + \frac14|\j_{\nu_n}|^2\indic_{\mr^2 \sm \mathcal{B}_\ro},\end{equation}
where $\mathcal{B}_\ro$ is a set of
disjoint closed balls covering $\supp(\nu'_n)$,  and the  sum of the
radii of the balls in $\mathcal{B}_\ro$  intersecting any given ball
of radius $1$ is bounded by  $ \ro<\frac18$.

We distinguish again the  two cases $D(x_0',R)\ge 0$ or $D(x_0',R)<0$.
Let us start with the first case.


 Since  $f$, as defined above, is Lipschitz  with constant $1$, almost every $t$ is a regular value of $f$. For such a $t$ the curve $\gamma_t:=\{f = t\}$ is Lipschitz and does not intersect $\mathcal B_\rho$, since $\nabla f = 0$ there. Moreover, restating Lemma~\ref{lemdistf} we have
\begin{equation}\label{inclusion} f(x) < t\implies x\in B_{R+4t}(x_0')\cup B_{R+2}(x_0'),\end{equation}
thus $\gamma_t\subset B_{2R}(x_0')$ if $R+4t < 2R$, i.e. if $t<R/4$, and $R+2<2R$, i.e. if $R>2$. It follows from \eqref{lowgenu} that if $R>2$ then
\ed{
\begin{equation}\label{unelow}\int_{B_{2R}(x_0')}\,d g_{\nu_n}\ge -CR^2 +\frac{1}{4} \int_{\{0<f(x)<R/4\}}|\j_{\nu_n}|^2 .\end{equation}
On the other hand the co-area formula (see e.g. \cite{evgar}) asserts that
$$\int_{\{0<f(x)<R/4\}}|\j_{\nu_n}|^2= \int_{0}^{R/4} \(\int_{\gamma_t} \frac{|\j_{\nu_n}|^2}{|\nabla f|}\, d\mathcal{H}^1\) dt.$$ Since $f$ is $1$-Lipschitz, it follows that }
\begin{equation}\label{unelow}\int_{B_{2R}(x_0')}\,d g_{\nu_n}\ge -CR^2 + \frac14\int_{t=0}^{R/4}\(\int_{\gamma_t}|\j_{\nu_n}|^2\, d\mathcal{H}^1\)\,dt.\end{equation}
We proceed by estimating the innermost integral on the right-hand side. If $t>1/2$ then $B_{R+2}\subset B_{R+4t}$ and using \eqref{inclusion} we find,  using Definition \ref{defG} and writing $D$ instead of $D(x_0',R)$ in the course of this proof,
\begin{multline*}\int_{\gamma_t}
\j_{\nu_n}\cdot \vec{\nu} = \int_{\{f< t\}}\div \j_{\nu_n} \ge  2\pi\nu_n\(B_{\frac R{\sqrt n}}(x_0)\) - 2\pi  n\mo\(B_\frac{R+4t}{\sqrt n}(x_0)\)\\ \ge 2\pi D - C \((R+4t)^2 - R^2\),\end{multline*} where we have used \eqref{bornmo}.
The right-hand side is bounded below by $\pi D$ if
$R+4t < \sqrt{R^2+cD}$ and $c$ is small enough. Thus, if
\begin{equation}\label{Dcondition} 1/2 < T:=\frac R4\(\sqrt{1+c\frac D{R^2}} - 1\),\end{equation}
it follows using Cauchy-Schwarz's inequality that for every  $t\in (1/2,T)$ we have
\begin{equation}\label{cercles}\int_{\gamma_t} |\j_{\nu_n}|^2\, d\mathcal{H}^1 \ge  \frac{\pi^2 D^2}{\mathcal{H}^1(\gamma_t)},\end{equation}
Inserting into \eqref{unelow} while reducing integration to  $1/2<t<\min(T,R/4)$ we obtain
\begin{equation}\label{avJ}\int_{B_{2R}(x_0')}\,d g_{\nu_n}\ge -CR^2 + \pi^2 D^2 \int_{t=1/2}^{\min(T,R/4)} \frac{1}{\mathcal{H}^1(\gamma_t)}\,dt.\end{equation}
The evaluation of the last integral is complicated by the fact that $\gamma_t$ is the level set for $f$ rather than the usual circle, however the result will be comparable to the one we would get if we had $\mathcal{H}^1(\gamma_t) = 2\pi (R+t)$, this is proven as follows:
From Lemma~\ref{lemdistf} we have for every $t\in [1/2, \min(T,R/4)]$ that $\gamma_t\subset\{x: 0<|x-x_0'|-R < \min(4T,R)\}$. From the coarea formula and  the fact that $|\nabla f| \le 1$ it follows that
$$\int_{\hal}^{\min(T,R/4)} \mathcal{H}^1(\gamma_t) \,dt \le |\{x: R<|x-x_0'|< R+  \min(4T,R)\}| \le C R \min(4T,R).$$Then, using the convexity of $x\mapsto1/x$ and Jensen's inequality in \eqref{avJ} we obtain for some $c>0$ that
$$\int_{\hal}^{\min(T,R/4)} \frac1{ \mathcal{H}^1(\gamma_t)}\,dt\ge c\frac{\(\min(T,R/4) - \hal\)^2}{R \min(4T,R)}.$$
Inserting into \eqref{avJ} we obtain assuming \eqref{Dcondition} and
\begin{equation}\label{dcb} 1<\min(T,R/4)\end{equation} that
\begin{equation}\label{apJ}\int_{B_{2R}(x_0')}\,d g_{\nu_n}\ge -CR^2 + c \frac{D^2}{R} \(\min(T,R/4) - \hal\).\end{equation}

One may check that  \eqref{Dcondition}, \eqref{dcb} are satisfied if $R>4$ and $D>C_0R$ for a large enough $C_0>0$. Then it is not difficult to deduce \eqref{lgnu} from \eqref{apJ} by distinguishing the cases $T<R/4$ and $T\ge R/4$, i.e. $D<C_1R^2$ and $D\ge C_1R^2$ for a well chosen $C_1$. Finally,  if $D<C_0 R$ then \eqref{lgnu} is trivially satisfied, if $C$ is chosen large enough.

Let us turn to the case $D(x_0',R)\le 0$, which implies $|D(x_0',R)|\le n \mu_0(B_{R/\sqrt n}(x_0)) \le \pi  R^2  \wb$.
As above,  if $R>2$ and for almost every $1/2<t<R/4$ the curve $\gamma_t = \{f=t\}$ is a Lipschitz curve which  does not intersect $\mathcal B_\rho$ and $ \{f<t\}\subset B_R(x_0')\sm B_{R-4t}(x_0')$.  It follows that, writing as before $D$ for $D(x_0,R)$
\begin{multline}\label{neg}\int_{\gamma_t} \j_{\nu_n}\cdot \tau = \int_{\{ f<t\}}\div \j_{\nu_n} =
\int_{B_R(x_0')}\div \j_{\nu_n} -\int_{B( x_0', R)\sm \{f>t\}} \div \j_{\nu_n}
\\ \le  2\pi D  + 2\pi n \mo\big(B_\frac{R}{\sqrt{n}}\sm B_{ \frac{ R-4 t}{\sqrt{n}} }\big) \le 2\pi D + C  \(R^2 - (R-4t)^2\).\end{multline}
The proof then proceeds as in the first case by using Cauchy-Schwarz's inequality and integrating with respect to $t\in[1/2,\min(T,R/4)]$, where
$$ T=\frac R4\(1 - \sqrt{1+c\frac D{R^2}}\),$$
which ensures that the right-hand side in \eqref{neg} is bounded above by $\pi D$. Note that $D$ is nonpositive, but bounded below by $-CR^2$ hence if $c>0$ is small enough the quantity inside the square root above is positive.
\end{proof}

\section{Lower bounds via the ergodic theorem and conclusions}

\subsection{Abstract result via the ergodic theorem}\label{ergo}
In this section, we present the ergodic framework introduced in \cite{ss1} for obtaining ``lower bounds for 2-scale energies" and inspired by Varadhan. We cannot directly use the result there because it is written for a uniform ``macroscopic environment", which would correspond to the case where $m_0(x)$ is constant on its support (as in the circular law). To account for the possibility of varying environment or weight at the macroscopic, we can however adapt Theorem~3  of \cite{ss1} and easily prove the following variant:

Let $X$ denote a Polish metric space, when we speak of measurable functions on $X$ we will always mean Borel-measurable. We assume there is a $d$-parameter group of transformations  $\theta_\lambda$ acting continuously on $X$. More precisely we require that
\begin{itemize}
\item[-] For all $u\in X$ and $\lambda, \mu\in \mr^d$, $ \theta_\lambda (\theta_\mu u)=\theta_{\lambda+\mu} u$, $\theta_0 u=u$.
\item[-] The map $(\lambda,u)\mapsto \theta_\lambda u$ is continuous  with respect to each variable (hence measurable with respect to both).
\end{itemize}
Typically we think of $X$ as a space of functions defined on $\mr^d$ and $\theta$ as the action of translations, i.e. $\theta_\lambda
u(x)=u(x+\lambda)$. Then we consider the following   $d$-parameter  group of transformations
$T^\ep_\lambda $ acting continuously on $\mr^d \times X$ by
$T^\ep_\lambda (x, u)= (x+ \ep \lambda , \theta_\lambda u)$. We also define $T_\lambda(x, u )= (x, \theta_\lambda u)$.

For a probability measure $P$ on $\mr^d\times X$ we say that $P$ is translation-invariant if it is invariant under the action $T$, and we say it is $T_{\lambda(x)}$-invariant if for every function $\lambda(x) $ of class $C^1$, it is invariant under the mapping $(x,u)\mapsto(x, \theta_{\lambda(x)} u)$. Note that $T_{\lambda(x)}$-invariant implies translation-invariant. \smallskip


Let $G$ denote a  compact set in $\mr^d$ such that
\begin{equation}\label{pG}|G|>0,\quad \lim_{\ep\to 0} \frac{|(G+\ep x)\sd G|}{|G|} = 0,\end{equation}
for every $x\in\mr^2$, where $\sd$ denotes the symmetric difference of sets.  We let $\{f_\ep\}_\ep$ and $f$ be measurable nonnegative
functions on $G \times X$, and assume that for any family $\{(x_\ep, u_\ep)\}_\ep$
such that
$$\forall R>0, \quad \limsup_{\ep\to 0}\int_{B_R} f_\ep(T^\ep_\lambda( x_\ep,u_\ep ))\, d\lambda<+\infty$$
the following holds.
\begin{enumerate}
\item(Coercivity) $\{(x_\ep, u_\ep)\}_\ep$ admits a convergent subsequence (note that $\{x_\ep\}_\ep$ subsequentially converges since $G$ is compact).
\item ($\Gamma$-liminf)   If 
$\{(x_\ep,u_{\ep})\}_\ep$  converges to $(x,u)$ then
$$\liminf_{\ep \to 0} f_\ep(x_\ep, u_\ep) \ge f(x,u).$$
\end{enumerate}

Then for the sake of generality we consider an increasing  family of bounded open sets $\{\UR\}_{R>0}$ such that
\begin{equation}\label{hypsets}
\text{ (i) $\{\UR\}_{R>0}$ is a Vitali family,}\quad (ii) \text{$\D\lim_{R\to +\infty} \frac{|(\lambda + \UR)\sd \UR|}{|\UR|} = 0$} \end{equation}
for any $\lambda\in\mr^d$, where Vitali means  (see \cite{nmriv})  that the intersection of the closures is $\{0\}$, that $R\mapsto|\UR|$ is left continuous, and that $|\UR - \UR|\le C|\UR|$.

We have

\begin{theo}\label{gamma} Let $G$, $X$, $\{\theta_\lambda\}_\lambda$, and $\{f_\ep\}_\ep$, $f$ be as above. For any $u\in X$, let
$$F_\ep (u) = \dashint_{G} f_\ep(   x,  \theta_{\frac
{x}{\ep}}  u ) \,dx.$$
\ed{and  let $\phi_\ep(u)$ be the probability on $G \times X$ which is the image of the normalized Lebesgue measure on $G$  under
the map $x \mapsto(x , \theta_{\frac{x}{\ep} } u)$.}\\
A. Assume that \ed{$\{u_\ep\}_{\ep}$, a family of elements of $X$,  is such that  $\{F_\ep(u_\ep)\}_\ep$ is bounded, and  let  $P_\ep  =\phi_\ep(u_\ep)$.}
Then $P_\ep$
converges to  a Borel probability measure $P$ on $G \times X$ whose first marginal is the normalized Lebesgue measure on $G$, which is $T_{\lambda(x)}$-invariant, such that $P$-a.e. $(x,u)$ is of the form  $\lim_{\ep\to 0}(x_\ep, \theta_{\frac{x_\ep}{\ep}} u_\ep)$  and  such that
\begin{equation}
\label{rg1} \liminf_{\ep \to 0} F_\ep(u_\ep) \ge \int f( x, u)\, dP(x,u).
\end{equation}
Moreover,
\begin{equation}
\label{rg2} \int f(x,u)\, dP(x,u)= 
\mathbf{E}^P\(\lim_{R\to
+\infty}\dashint_{\UR} f(x,\theta_\lambda  u  )\,d\lambda\)
,\end{equation} where $\mathbf{E}^P$ denotes the
expectation under the probability $P$.\\
\ed{B. Let $\mathbb{P}_\ep$ be a probability on $X$  such that $\lim_{M\to +  \infty}\lim_{\ep\to 0}\mathbb{P}_\ep\(\{F_\ep(u)\ge M\}\)=0$, then $\{\phi_\ep \# \mathbb{P}_\ep\}_{\ep}$ is tight, i.e. converges up to a subsequence to  a probability measure on $\P(G\times X)$.}
\end{theo}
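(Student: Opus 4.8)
The plan is to prove Theorem~\ref{gamma} by adapting the proof of Theorem~3 of \cite{ss1}, which is the same statement for a constant macroscopic environment; the novelty here is only the presence of the parameter $x$ ranging over the compact set $G$, together with the extra flexibility coming from $T_{\lambda(x)}$-invariance, so most of the work is bookkeeping and the ``hard analysis'' (multiparameter ergodic theorem, subadditivity argument) is imported. First I would treat Part~A. Consider the probabilities $P_\ep = \phi_\ep(u_\ep)$ on $G\times X$. Since the first marginal of each $P_\ep$ is (by definition of $\phi_\ep$, and the $x\mapsto(x,\theta_{x/\ep}u_\ep)$ construction) exactly the normalized Lebesgue measure on $G$, and $G$ is compact, tightness of $\{P_\ep\}_\ep$ on $G\times X$ reduces to tightness of the $X$-marginals; this is where the coercivity hypothesis enters. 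The bound $\sup_\ep F_\ep(u_\ep)<\infty$ means $\dashint_G f_\ep(x,\theta_{x/\ep}u_\ep)\,dx$ is bounded, so by Markov's inequality the set of $x\in G$ where $f_\ep(x,\theta_{x/\ep}u_\ep)$ is large has small measure; combined with the coercivity statement (which says that along any sequence with $\limsup_\ep\int_{B_R}f_\ep(T^\ep_\lambda(x_\ep,u_\ep))\,d\lambda<\infty$ one extracts a convergent subsequence) one gets that for most base points the local profiles $\theta_{x/\ep}u_\ep$ live in a compact set, giving tightness. So $P_\ep\to P$ up to a subsequence. That $P$ has first marginal the normalized Lebesgue measure on $G$ passes to the limit trivially. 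That $P$ is $T_{\lambda(x)}$-invariant: here I would argue exactly as in \cite{ss1} that $P_\ep$ is ``almost'' invariant under $(x,u)\mapsto(x,\theta_{\lambda(x)}u)$, because precomposing the map $x\mapsto(x,\theta_{x/\ep}u_\ep)$ with a translation of the base by $\ep\lambda(x)$ changes $\phi_\ep(u_\ep)$ only by an amount controlled by \eqref{pG} (the asymptotic translation-invariance of $G$) plus the $C^1$ regularity of $\lambda$, which lets us freeze $\lambda(x)$ locally; letting $\ep\to0$ gives exact $T_{\lambda(x)}$-invariance of $P$.

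Next I would establish the $\Gamma$-liminf inequality \eqref{rg1}. The $\Gamma$-liminf hypothesis says $f$ is a lower bound for the lower limits of $f_\ep$ along converging sequences, so $f$ is (up to the usual relaxation) lower semicontinuous; combined with the fact, recorded in the statement of Part~A, that $P$-a.e.\ $(x,u)$ is of the form $\lim_{\ep\to0}(x_\ep,\theta_{x_\ep/\ep}u_\ep)$, a standard portmanteau/lower-semicontinuity-under-weak-convergence argument (identical to the one in \cite{ss1}, which in turn follows the Varadhan-type scheme) yields
\[
\liminf_{\ep\to0}\int_G f_\ep(x,\theta_{x/\ep}u_\ep)\,dx \;\ge\; \int_{G\times X} f(x,u)\,dP(x,u).
\]
Dividing by $|G|$ gives \eqref{rg1}. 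The identity \eqref{rg2} is then the multiparameter ergodic theorem applied to the $T$-invariant (a fortiori, since $T_{\lambda(x)}$-invariant) measure $P$: by Wiener's multiparameter ergodic theorem along the Vitali family $\{\UR\}$ satisfying \eqref{hypsets}, the averages $\dashint_{\UR}f(x,\theta_\lambda u)\,d\lambda$ converge $P$-a.e.\ and in $L^1(P)$ to the conditional expectation of $f$ with respect to the $\sigma$-algebra of $T$-invariant sets, whose $P$-expectation is just $\mathbf{E}^P(f)=\int f\,dP$; this is verbatim the argument in \cite[Thm.~3]{ss1}, the only change being that we now carry the inert first coordinate $x\in G$ along (the group $T$ acts trivially on it), and condition \eqref{pG} guarantees the base coordinate does not obstruct the ergodic averaging.

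Finally, Part~B. Here $\mathbb{P}_\ep$ is a probability on $X$ (not a single element), and $\phi_\ep\#\mathbb{P}_\ep\in\P(\P(G\times X))$; we want tightness of this family. The hypothesis $\lim_{M\to\infty}\lim_{\ep\to0}\mathbb{P}_\ep(\{F_\ep(u)\ge M\})=0$ says that, with probability approaching $1$ under $\mathbb{P}_\ep$, we have $F_\ep(u)\le M$; on that event, Part~A (its coercivity/tightness conclusion, but now applied pointwise in $u$) shows that $\phi_\ep(u)=P_\ep$ ranges in a set of probability measures whose first marginals are fixed (normalized Lebesgue on $G$) and whose $X$-marginals are uniformly tight, hence $\{\phi_\ep(u): F_\ep(u)\le M\}$ is relatively compact in $\P(G\times X)$. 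Denote by $\mathcal{K}_M$ its (compact) closure. Then $\phi_\ep\#\mathbb{P}_\ep(\{P\notin\mathcal{K}_M\})\le\mathbb{P}_\ep(\{F_\ep(u)>M\})+o(1)$, which is $<\delta$ for $M=M(\delta)$ large and $\ep$ small; since compact subsets of $\P(G\times X)$ are themselves contained in compact subsets of $\P(\P(G\times X))$ (Prohorov), this is exactly tightness of $\{\phi_\ep\#\mathbb{P}_\ep\}_\ep$, and Prohorov's theorem gives subsequential convergence to a probability on $\P(G\times X)$. I expect the main obstacle to be the careful verification of the $T_{\lambda(x)}$-invariance of the limit $P$ and of the tightness/coercivity step — i.e.\ correctly quantifying how the construction $x\mapsto(x,\theta_{x/\ep}u_\ep)$ interacts with translations of the base by the $\ep$-scaled, $x$-dependent vector $\lambda(x)$ — since this is precisely the point where the argument departs from \cite{ss1}; everything else is a transcription of the ergodic-theoretic machinery already developed there.
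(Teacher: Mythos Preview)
Your proposal is correct and follows essentially the same route as the paper's proof: tightness of $P_\ep$ via the coercivity hypothesis (the paper packages this through Lemma~\ref{lesigne} rather than invoking Prohorov directly, but the content is identical), $T_{\lambda(x)}$-invariance via the change of variables $y=(I+\ep\lambda)(x)$ together with \eqref{pG} and the $C^1$ regularity of $\lambda$, the $\Gamma$-liminf inequality from \cite[Lemma~2.2]{ss1}, and \eqref{rg2} from Wiener's multiparameter ergodic theorem; Part~B is likewise handled by showing that $\phi_\ep(\{F_\ep\le M\})$ is sequentially relatively compact by Part~A and has high pushforward probability. One small imprecision: in your tightness sketch you say Markov controls the set where $f_\ep(x,\theta_{x/\ep}u_\ep)$ is large, but what the coercivity hypothesis actually needs is a bound on $\int_{B_R} f_\ep(T^\ep_\lambda(x_\ep,u_\ep))\,d\lambda$, which one obtains by first averaging over $x\in G$ (Fubini plus \eqref{pG}) and only then applying Markov --- this is exactly how \cite[Section~2, Step~1]{ss1} proceeds.
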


The proof  uses the following simple lemma,  whose statement and  proof can be found in \cite[Lemma 2.1]{ss1}.
\begin{lem}\label{lesigne}
Assume $\{P_n\}_n$ are Borel probability measures on a Polish metric space $X$
and that for any $\delta>0$ there exists $\{K_n\}_n$ such that $P_n(K_n)\ge 1-\delta$ for every $n$ and such that if $\{x_n\}_n$ satisfies for every $n$ that $x_n\in K_n$, then any subsequence of $\{x_n\}_n$ admits a convergent subsequence (note that we do not assume $K_n$ to be compact).  Then $P_n$ admits  a subsequence which converges tightly, i.e. converges weakly to a probability measure $P$.\end{lem}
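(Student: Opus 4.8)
\medskip

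\noindent\emph{Proof proposal (plan).}
The plan is to prove the stronger statement that the \emph{whole} sequence $\{P_n\}_n$ is uniformly tight, after which the conclusion follows immediately from Prokhorov's theorem: a uniformly tight sequence of Borel probability measures on a Polish space is relatively compact for the topology of weak convergence, with every limit point again a probability measure, so some subsequence converges tightly to a probability measure $P$ on $X$.

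First I would make a harmless preliminary reduction. Fix $\delta>0$ and let $\{K_n\}_n$ be associated to $\delta$ as in the hypothesis. Since each $P_n$ is a Borel probability measure on the Polish space $X$, it is inner regular with respect to compact sets (Ulam's theorem), so there is a compact $C_n\subset X$ with $P_n(C_n)\ge 1-\delta$. Replacing $K_n$ by $K_n\cap C_n$ we may assume, at the sole cost of weakening the mass bound to $P_n(K_n)\ge 1-2\delta$, that every $\overline{K_n}$ is compact. Note that the transversal precompactness property passes to any subfamily $K'_n\subset K_n$: a sequence $\{x_n\}_n$ with $x_n\in K'_n$ in particular satisfies $x_n\in K_n$, so the new family still has the property.

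The crux is then to show that $C:=\overline{\bigcup_n K_n}$ is compact. Take any sequence $\{z_j\}_j$ in $\bigcup_n K_n$ and choose indices $m_j$ with $z_j\in K_{m_j}$. If $\{m_j:j\ge 1\}$ is finite, then some $K_{n_0}$ contains infinitely many of the $z_j$, and since $\overline{K_{n_0}}$ is compact these have a convergent subsequence. If $\{m_j:j\ge 1\}$ is infinite, pass to a subsequence along which $m_j$ is strictly increasing and build a \emph{full} transversal sequence $\{x_n\}_n$ by setting $x_{m_j}:=z_j$ and choosing $x_n\in K_n$ arbitrarily for all other $n$. By hypothesis every subsequence of $\{x_n\}_n$ admits a convergent subsequence, which is exactly the statement that $\overline{\{x_n:n\ge 1\}}$ is compact; hence its subsequence $\{x_{m_j}\}_j=\{z_j\}_j$ has a convergent subsequence. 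In all cases $\{z_j\}_j$ subconverges, so $C$ is sequentially compact, hence compact. Consequently $P_n(C)\ge P_n(K_n)\ge 1-2\delta$ for every $n$, and since $\delta>0$ was arbitrary this proves uniform tightness of $\{P_n\}_n$, whence the conclusion via Prokhorov's theorem.

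I expect the compactness of $C$ to be the only real point, and it genuinely uses both ingredients: the preliminary reduction to relatively compact $K_n$ (needed to handle the ``bounded index'' case, since the raw $K_n$ need not be relatively compact individually), and the \emph{full} force of the hypothesis — that transversal sequences have compact closure, not merely one convergent subsequence — which is what legitimizes the diagonal-type argument in the ``unbounded index'' case, where a partial choice $\{z_j\}_j$ is extended to a genuine transversal sequence before extracting. Everything else is routine bookkeeping, and Lemma~\ref{lesigne}'s statement in terms of a mere subsequence follows a fortiori from the tightness of the entire sequence.
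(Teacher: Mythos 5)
Your proposal is correct. Note that the paper does not actually reproduce a proof of this lemma — it defers to \cite[Lemma 2.1]{ss1} — but your argument (reduce to relatively compact $K_n$ via inner regularity of Borel probability measures on a Polish space, show that $\overline{\bigcup_n K_n}$ is compact using the transversal hypothesis, conclude uniform tightness and invoke Prokhorov) is precisely the intended route, and you correctly handle the two delicate points: the Ulam reduction needed for the bounded-index case, since the hypothesis says nothing about sequences confined to a single $K_{n_0}$, and the extension of a partial transversal $\{z_j\}$ to a full sequence $\{x_n\}$ before applying the hypothesis in the unbounded-index case.
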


\begin{proof}[Proof of the theorem] It follows the steps of \cite[Section 2]{ss1}:
\begin{enumerate}
\item  $P_\ep$ is tight hence has a limit $P$. This follows from the coercivity property of $f_\ep$ as in \cite[Section 2, Step 1]{ss1} and uses Lemma \ref{lesigne}.
\item  $P$ is $T_{\lambda(x)}$-invariant. Let $\Phi$ be bounded and continuous, and let $P_\lambda $ be the push-forward of $P$ by $(x, u) \mapsto (x, \theta_{\lambda(x)} u)$. Then from the
definition of  $P_\lambda$, $P$, $P_\ep$, we have
\begin{multline*}\int \Phi(x,u) \, dP_\lambda(x,u) =   \int \Phi(x,\theta_{\lambda(x)} u) \, dP(x,u)  = \lim_{\ep \to 0} \int \Phi(x, \theta_{\lambda(x)} u) \, dP_\ep(x,u)=\\  \lim_{\ep \to 0} \dashint_G \Phi(x, \theta_{\frac x\ep+\lambda(x)} u_\ep) \, dx = \lim_{\ep\to 0}\dashint_{(I+\ep\lambda )(G)} \frac{\Phi( (I+\ep\lambda)^{-1}(y), \theta_{\frac y\ep} u_\ep)}{|\det(I +\ep D \lambda((I+\ep \lambda)^{-1}(y)) |} \, dy,\end{multline*}
where the last equality follows by the change of  variables $y= (I+\ep \lambda)(x)$. Using the boundedness of $\Phi$, the $C^1$ character of $\lambda$, the compactness of $G$ and \eqref{pG}, we may replace $(I+\ep \lambda) (G)$ by $G$ and the denominator by $1$ in the last integral  and we find, using the definition of $P_\ep$
\begin{equation}\label{multi} \int \Phi(x,u) \, dP_\lambda(x,u) =    \lim_{\ep\to 0}\int \Phi((I+\ep\lambda)^{-1}(x),u) \, dP_\ep(x,u).\end{equation}
Since $\{P_\ep\}_\ep$ is tight, for any $\delta>0$ there exists $K_\delta$ such that $P_\ep({K_\delta}^c)<\delta$ for every $\ep$. Then by uniform continuity of $\Phi$ on $K_\delta$ the map $(x,u)\mapsto\Phi( (I+\ep \lambda)^{-1} (x),u)$ converges uniformly on $K_\delta$ to $(x,u)\mapsto\Phi(x,u)$ and thus
$$\lim_{\ep\to 0}\int_{K_\delta} \Phi((I+\ep\lambda )^{-1}(x),u) \, dP_\ep(x,u) = \lim_{\ep\to 0}\int_{K_\delta} \Phi(x,u) \, dP_\ep(x,u).$$
Since this is true for any $\delta>0$, and using the boundedness of $\Phi$  we get
$$\lim_{\ep\to 0}\int\Phi( ( I+\ep\lambda )^{-1}(x),u) \, dP_\ep(x,u) = \lim_{\ep\to 0}\int\Phi(x,u) \, dP_\ep(x,u) = \int \Phi(x,u)\,dP(x,u),$$
by definition of $P$. Thus in view of \eqref{multi} we have $P_\lambda = P$  and $P$ is thus $T_{\lambda(x)}$-invariant.

\item $\liminf_{\ep \to 0} F_\ep(u_\ep) \ge \int f \, dP$. This follows  from
 \cite[Lemma 2.2]{ss1}, since
 $F_\ep(u_\ep) = \int f_\ep \, dP_\ep$.
\end{enumerate}
To conclude, as in \cite[Section 2]{ss1}, the fact that $P$ is $T_{\lambda(x)}$-invariant (which implies $T_\lambda$-invariant) and Wiener's multiparametric ergodic theorem  (see e.g. \cite{becker}) imply that
$$\int f(x,u)\, dP(x,u)= \mathbf{E}^P\(\lim_{R\to
+\infty}\dashint_{\UR} f(T_\lambda(x,
u))\,d\lambda\)=\mathbf{E}^P\(\lim_{R\to
+\infty}\dashint_{\UR} f(x,\theta_\lambda  u  )\,d\lambda\).$$
\ed{We now turn to the proof of B.
Let $A_{M,\ep}=\{u\in X, F_\ep(u) \le M\}$. Then we have
$\phi_\ep\#\mathbb{P}_\ep( \phi_\ep(A_{M,\ep}^c))=\mathbb{P}_\ep(A_{M,\ep}^c)\to 0$ as $\ep \to 0$ and $M\to \infty$. In view of Lemma \ref{lesigne} applied with $K_n= \phi_\ep(A_{M,\ep})$, in order to prove the tightness of $\phi_\ep\#\mathbb{P}_\ep$ it suffices to take $M$ large enough and check that if $P_\ep \in \phi_\ep (A_{M,\ep})$ then $P_\ep $ has a convergent subsequence. But this is a direct application of what we have established in part A, since such a $P_\ep$ is the image by $\phi_\ep$ of a family $u_\ep$ for which $F_\ep(u_\ep)\le M$.  Therefore $P_\ep$ is tight and $\phi_\ep\#\mathbb{P}_\ep$ as well by the lemma.
}

\end{proof}

We now apply this abstract framework to our specific situation to obtain the lower bound on $\widehat{\F}$.

\subsection{Proof of Theorem \ref{th2}, part A} The proof follows essentially \cite{ss1}, Proposition 4.1 and below. Let  $\{\nu_n\}_n$ and $P_{\nu_n}$ be as in the statement of Theorem~\ref{th2}. We need to prove that any subsequence of $\{P_{\nu_n}\}_n$ has a convergent subsequence and that the limit $P$ is a $T_{\lambda(x)}$-invariant  probability measure such that $P$-almost every $(x,\j)$ is such that  $\j\in\mathcal A_{\bm(x)}$ and  \eqref{thlow} holds. Note that the fact that the first marginal of $P$ is $dx_{|\E}/|\E|$ follows from the fact that, by definition, this is true of $P_{\nu_n}$.

We thus take a subsequence of $\{P_{\nu_n}\}$ (which we don't  relabel).  We may  assume that  it has  a subsequence, denoted $\bnun$,  which satisfies  $\widehat{\F}(\bnun)\le C $, otherwise there is nothing to prove. This implies that  $\bnun$ is of the form $\sum_{i=1}^n \delta_{x_{i}}$.  We let  $\bjn$ denote the current and $\bgn$ the measures associated to $\bnun$ as in Definition~\ref{defG} and note that $\int d\bgn=W(\bjn, \indic_{\mr^2})$.
 As usual, $\bnun'= \sum_{i=1}^n \delta_{\sqrt n x_{i}}$.

 \ed{
 A first  consequence of $\widehat{\F}(\bnun)\le C $ is that, in view of  \eqref{fluctu}, we have
\begin{equation}\label{cvnu} \frac1n\bnun\to\mo,\end{equation}
in the weak sense of measures.}

\subsubsection*{Step 1: We set up the framework of Section~\ref{ergo}} We will use integers $n$ instead of $\ep$ to label sequences, and the correspondence will be $\ep = 1/\sqrt n$. We let $G = \E$ and $X= \radon_+\times \Lp \times  \radon$, where $p\in (1,2)$, 
where $\radon_+$ denotes the set of  positive Radon measures on $\mr^2$ and $\radon$ the set of those which are bounded below by the  constant $ -C (\|\bm\|_\infty +1)$  of Proposition \ref{wspread}, both equipped with the topology of weak convergence.

For $\lambda\in\mr^2$ and abusing notation we let $\theta_\lambda$ denote both  the translation $x\to x+\lambda$ and the action
$$\theta_\lambda (\nu,\j,g) = \(\theta_\lambda\#\nu, \j\circ\theta_\lambda,\theta_\lambda\# g\).$$
Accordingly the action $T^n$ is defined for $\lambda\in\mr^2$ by
$$T^n_\lambda (x,\nu,\j,g) = \(x+\frac\lambda{\sqrt n}, \theta_\lambda\#\nu, \j\circ\theta_\lambda,\theta_\lambda\# g\).$$
Then we let $\chi$ be a smooth cut-off function with integral $1$ and support in $B(0, 1)$ and
define
\begin{equation}\label{deffn} \f_n(x, \nu,\j, g) =\begin{cases} \D\frac{1}{ \pi} \int_{\mr^2}\chi(y)\, dg(y) & \text{if $(\nu,\j,g) = \theta_{\sqrt n x} (\bnun', \bjn, \bgn)$,}\\ +\infty & \text{otherwise.}\end{cases}
\end{equation}
Finally we let, in agreement with Section~\ref{ergo},
\begin{equation}\label{deF} \gF_n(\nu,\j, g) =\dashint_\E \f_n\(x,\theta_{x\sqrt n}(\nu,\j,g)\)\,dx.
\end{equation}
We have the following relation between $\gF_n$ and $\widehat{\F}$, as $n\to +\infty$:
\begin{equation}\label{FF}
\text{ $\gF_n(\nu,\j,g) $ is  }\quad \begin{cases} \le \frac1{|\E|}\widehat{\F}(\bnun)+ o(1) & \text{if $(\nu,\j,g) = (\bnun',\bjn,\bgn)$}\\  = +\infty & \text{otherwise}.\end{cases}
\end{equation}
Indeed it is obvious from \eqref{deffn}  that if  $(\nu,\j,g) \neq (\bnun',\bjn,\bgn)$ then $\gF_n(\nu,\j,g) = +\infty$. On the other hand, if  $(\nu,\j,g) = (\bnun',\bjn,\bgn)$, then from the definition of the image measure $\theta_\lambda\# \bgn$,
$$\gF_n(\nu,\j,g) =\frac1\pi \dashint_\E \int \chi(y-x\sqrt n) \,d\bgn(y)\,dx = \frac1{\pi|\E'|} \int \chi * \indic_{\E'} \,d\bgn.$$
Since $\chi * \indic_{\E'}$ is bounded above by $1$ and is equal to $1$ on $U := \{x': \dist(x',\mr^2\sm \E')\ge 1\}$ we deduce that
\begin{multline}\label{preFF}\pi\gF_n(\nu,\j,g) \le \frac{\bgn^+(\mr^2) - \bgn^-(U)}{|\E'|}  =   \frac{\bgn(\mr^2) + \bgn^-(U^c)}{n|\E|} \\ =  \frac{\pi\widehat{\F}(\bnun)} {|\E|} +  \frac{\bgn^-(U^c)}{n|\E|} .\end{multline}
Then we note that  from \eqref{lbg}--\eqref{gcj} in Proposition~\ref{wspread} the measure $\bgn^-$ is supported in the union of balls $B(x',C)$ for $x'\in\supp( \bnun')$, and bounded above by a constant. Thus $\bgn^-(U^c)$ is bounded by a constant times the number of balls intersecting $U^c$, hence by $C \bnun'\{x': \dist(x',U^c)\le C\}$. From \eqref{cvnu} this is equal to \ed{
$$C n \mo\{x: \dist(x,\partial \E)\le C/\sqrt n\} + o(n)\le Cn|\{x: \dist(x,\partial \E)\le C/\sqrt n\}|+o(n)$$ since $\bm$ is bounded.
Using standard estimates on the volumes of tubular neigborhoods, since $\partial \E$ is $C^1$ by  assumption \eqref{assumpV3}, we conclude that this is $o(n)$}.  Plugging this into \eqref{preFF} proves \eqref{FF}.

\subsubsection*{Step 2: We check the hypotheses in  Section~\ref{ergo}}
We must now check the $\Gamma$-liminf and coercivity properties of $\{\f_n\}_n$. The main point is again that $\widehat{\F}$ controls $\nu_n-n\mu_0$ by Lemma \ref{lemnew}.

\begin{lem}\label{liminf} Assume that $\{(x_n,\nu_n,\j_n,g_n)\}_n$ converges to $(x,\nu,\j,g)$. Then
$$\liminf_n\f_n\(x_n, \nu_n,\j_n, g_n\)\ge \f(x,\nu,\j,g) := \frac1\pi\int \chi\,dg.$$
\end{lem}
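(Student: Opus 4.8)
The plan is to unwind the definition \eqref{deffn} of $\f_n$ and pass to the limit in the single integral $\frac1\pi\int\chi\,dg_n$ against the fixed continuous, compactly supported test function $\chi$; no further structure of the limit object is needed, since $\f(x,\nu,\j,g)$ is \emph{defined} to be $\frac1\pi\int\chi\,dg$.

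First I would dispose of the trivial case: if $\liminf_n\f_n(x_n,\nu_n,\j_n,g_n)=+\infty$ there is nothing to prove. Otherwise, extract a subsequence (not relabelled) along which $\f_n(x_n,\nu_n,\j_n,g_n)$ tends to a finite limit, hence is finite for all large $n$. By \eqref{deffn} this forces $(\nu_n,\j_n,g_n)=\theta_{\sqrt n x_n}(\bnun',\bjn,\bgn)$ and
$$\f_n(x_n,\nu_n,\j_n,g_n)=\frac1\pi\int_{\mr^2}\chi(y)\,dg_n(y).$$
Since $g_n\to g$ in the topology of weak convergence of measures and $\chi$ is continuous with support in the fixed ball $B(0,1)$, the right-hand side converges to $\frac1\pi\int\chi\,dg=\f(x,\nu,\j,g)$; thus $\liminf_n\f_n(x_n,\nu_n,\j_n,g_n)=\f(x,\nu,\j,g)$, which is more than claimed.

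The only point that needs a word of care is that the $g_n$ are signed measures, merely bounded below by the constant $-C(\|\bm\|_\infty+1)$ of Proposition~\ref{wspread} (a bound preserved by the translations $\theta_{\sqrt n x_n}$, since Lebesgue measure is translation invariant). To legitimize testing the weak convergence against $\chi\in C_c(\mr^2)$ I would pass to the nonnegative measures $g_n+C(\|\bm\|_\infty+1)\,dx\ge 0$, which converge weakly, hence have locally uniformly bounded mass, so that $\int\chi\,d\bigl(g_n+C(\|\bm\|_\infty+1)\,dx\bigr)\to\int\chi\,d\bigl(g+C(\|\bm\|_\infty+1)\,dx\bigr)$; subtracting the fixed number $C(\|\bm\|_\infty+1)\int\chi\,dx$ yields the convergence of $\int\chi\,dg_n$. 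There is essentially no obstacle here: this is the easy half of the $\Gamma$-convergence — by design $\f_n$ and $\f$ are the same functional $\frac1\pi\int\chi\,d(\cdot)$ of the $g$-variable, which converges — and the real work in verifying the hypotheses of Section~\ref{ergo} is concentrated in the coercivity statement, where Lemma~\ref{lemnew} is used to control $\bnun-n\mo$.
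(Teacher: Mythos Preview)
Your proof is correct and follows essentially the same route as the paper's own proof: reduce to the case where the liminf is finite, so that $\f_n(x_n,\nu_n,\j_n,g_n)=\frac1\pi\int\chi\,dg_n$ for large $n$, and pass to the limit using the weak convergence $g_n\to g$ against the fixed $\chi\in C_c(\mr^2)$. Your additional paragraph justifying the limit for signed measures bounded below is a legitimate point of care that the paper leaves implicit, but the argument is otherwise identical.
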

\begin{proof} We may assume that the left-hand side is finite, in which case $\f_n\(x_n, \nu_n,\j_n, g_n\) = \frac1\pi\int\chi\,dg_n$ for every large enough $n$, from which the result follows by passing to the limit.
\end{proof}

\begin{lem}\label{coer} Assume that for any $R>0$ we have
\begin{equation}\label{hcw} \limsup_{n\to+\infty}\int_{B_R} \f_n\(x_n+\frac\lambda{\sqrt n}, \theta_\lambda(\nu_n,\j_n, g_n)\)   \, d\lambda < +\infty.\end{equation}
Then a subsequence of $\{(x_n,\nu_n,\j_n,g_n)\}_n$ converges to some $(x,\nu,\j,g)\in \E\times X$.
\end{lem}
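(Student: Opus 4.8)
The plan is to transcribe, with only cosmetic changes, the coercivity argument of \cite[Section~4]{ss1}. The one structural difference with that reference is that the neutralizing background is here the non‑constant density $a_n:=\bm'(\cdot+\sqrt n x_n)$ rather than a constant; but $\bw\le\bm\le\wb$ by \eqref{bornmo}, so $\|a_n\|_{L^\infty}\le\wb$ uniformly and Proposition~\ref{wspread} applies with constants independent of $n$ and of the blow‑up center. Consequently every estimate below will be uniform in $n$.

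First I would exploit the rigidity built into $\f_n$. From \eqref{deffn}, the integrand $\f_n\big(x_n+\tfrac{\lambda}{\sqrt n},\theta_\lambda(\nu_n,\j_n,g_n)\big)$ is finite for $\lambda\in B_R$ precisely when $(\nu_n,\j_n,g_n)=\theta_{\sqrt n x_n}(\bnun',\bjn,\bgn)$, a condition not depending on $\lambda$; hence \eqref{hcw} forces this identity for all $n$ large, and for such $n$
\[
\int_{B_R}\f_n\Big(x_n+\tfrac{\lambda}{\sqrt n},\theta_\lambda(\nu_n,\j_n,g_n)\Big)\,d\lambda
=\frac1\pi\int_{\mr^2} w_R\,dg_n,\qquad w_R(y):=\int_{B_R}\chi(y+\lambda)\,d\lambda,
\]
where, since $\chi\ge0$, $\int\chi=1$ and $\supp\chi\subset B(0,1)$, one has $\indic_{B_{R-1}}\le w_R\le\indic_{B_{R+1}}$. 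So \eqref{hcw} is equivalent to $\limsup_n\int w_R\,dg_n<\infty$ for every $R$. Since $g_n\ge-C$ everywhere by \eqref{lbg}, the negative part $g_n^-$ is dominated by $C\,dx$, whence
\[
g_n^+(B_{R-1})\le\int w_R\,dg_n+g_n^-(B_{R+1})\le\int w_R\,dg_n+C(R+1)^2 ,
\]
which is bounded uniformly in $n$; therefore $|g_n|(B_R)$ is uniformly bounded for each fixed $R$.

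The extractions then follow. As $\E$ is compact, $x_n\to x\in\E$ along a subsequence. By the uniform bounds on $|g_n|(B_R)$ and a diagonal argument over $\{B_R\}_R$, a further subsequence satisfies $g_n\rightharpoonup g$ for the weak convergence of Radon measures, and $g\ge-C$ passes to the limit, so $g\in\radon$. Next, \eqref{bgnalpha} applied with $a_n$ bounds the number of atoms of $\nu_n$ in $B_R$ by $C\big(1+\wb^2|B_{R+1}|+g_n(B_{R+1})\big)$, hence uniformly; since $\nu_n$ is a sum of unit masses, $\nu_n\big|_{B_R}$ ranges in a weakly compact set, and a diagonal extraction yields $\nu_n\rightharpoonup\nu\in\radon_+$. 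Finally, from \eqref{lbg}, $\tfrac14\int_{B_R\setminus\mathcal{B}_\rho}|\j_n|^2\le g_n(B_R)+C|B_R|$ is uniformly bounded, while $\mathcal{B}_\rho$ has small total area inside $B_R$ and $\j_n$ carries only the explicit $\nabla\log|\cdot-p|$ singularities at the uniformly finitely many points $p$ of $\nu_n$; combining this with the equations $\div\j_n=2\pi(\nu_n-a_n)$, $\curl\j_n=0$, standard elliptic regularity and a Rellich‑type argument — exactly as in \cite[Section~4]{ss1} — gives a subsequence with $\j_n\to\j$ strongly in $\Lp$. Passing to the limit in the equations (with $a_n$ bounded in $L^\infty$, converging to $\bm(x)$ when $x$ is interior to $\E$) shows $(x,\nu,\j,g)\in\E\times X$.

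The only step carrying genuine content is the strong $L^p_\loc$‑compactness of $\{\j_n\}$; this is not new, being the div–curl / ball‑construction compactness of \cite{ss1}, and the uniform bound $\bw\le a_n\le\wb$ is exactly what keeps all of those estimates independent of $n$ and of $x_n$. The remaining points — the rigidity of $\f_n$, the passage from $\int w_R\,dg_n$ to a mass bound, the atom count, and the weak compactness of $\{g_n\}$ and $\{\nu_n\}$ — are elementary.
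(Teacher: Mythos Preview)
Your proof is correct and follows essentially the same route as the paper: the rigidity of $\f_n$ forces $(\nu_n,\j_n,g_n)=\theta_{\sqrt n x_n}(\bnun',\bjn,\bgn)$, the hypothesis \eqref{hcw} together with the lower bound on $g_n$ yields local mass bounds on $g_n$, \eqref{bgnalpha} then bounds $\nu_n$ locally, and the div--curl structure plus elliptic regularity gives strong $L^p_\loc$ compactness of $\j_n$. The only cosmetic difference is that for the $L^p$ bound on $\j_n$ the paper goes through \eqref{wg} and \eqref{firstitem} (the Lorentz-space estimate of \cite{st}), whereas you extract it more directly from \eqref{lbg} and the explicit logarithmic singularity structure; both are valid.
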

\begin{proof}
Assume \eqref{hcw}. Then the integrand there is bounded for a.e. $\lambda$ and from the definition \eqref{deffn} we deduce that
$$\theta_\lambda(\nu_n,\j_n, g_n) = \theta_{\sqrt n x_n+\lambda} (\bnun', \bjn, \bgn)$$
and then that $( \nu_n,\j_n, g_n) = \theta_{\sqrt n x_n} (\bnun', \bjn, \bgn).$ Thus \eqref{hcw} gives, in view of \eqref{deffn}, that for every $R>0$ there exists $C_R>0$ such that for any $n$
$$ \int_{B_R} \int \chi(y-\sqrt n x_n - \lambda)\,d\bgn(y)\,d\lambda = \int \chi * \indic_{B_R(\sqrt n x_n)}\,d\bgn < C_R.$$

This and the fact that $\bgn$ is bounded below implies that $\bgn(B_R(\sqrt n x_n))$ is bounded independently of $n$ and then, using \eqref{bgnalpha}, that the same is true of $\bnun'(B_R(\sqrt n x_n))$. In other words  $\{\nu_n = \theta_{\sqrt n x_n}\bnun'\}_n$ is a locally bounded sequence of (positive) measures hence converges weakly after taking a subsequence, and the same is true of $\{g_n = \theta_{\sqrt n x_n}\bgn\}_n$.  On the other hand $\{x_n\}_n$ is a sequence in the compact set $\E$ hence converges modulo a subsequence.

It remains to study the convergence of $\{\j_n = \bjn\circ\theta_{\sqrt n x_n+\lambda}\}_n$.  From \eqref{wg} in Proposition~\ref{wspread} and the local boundedness of $\{\nu_n\}_n$ we get that $W(\bjn,\chi*\indic_{B_R(\sqrt n x_n)} )= W(\j_n,\chi*\indic_{B_R})$ is bounded independently of $n$ for any $R>0$ and then, using \eqref{firstitem}, that $\{\j_n\}_n$ is locally bounded in $\Lp$, for any $1\le p<2$ hence a subsequence locally weakly converges in $\Lp$. \ed{Moreover, $\curl \j_n=0$ and by the above $\div \j_n$ is locally bounded in the sense of measures, hence weakly compact in $W^{-1, p}_{loc}$ for $p<2$.   By elliptic regularity, it follows that the convergence of $\j_n$   is strong in $\Lp$.  This concludes the proof of coercivity.

}
\end{proof}

\subsubsection*{Step 3: Conclusion}  From the previous steps, we may apply Theorem~\ref{gamma} in this setting (choosing $\UR=K_R$) and we deduce in view of \eqref{FF} that, temporarily denoting $Q_n$ denote the push-forward of the normalized Lebesgue measure on $\E$ by the map $x\mapsto (x,\theta_{\sqrt n x} (\bnun', \bjn,\bgn))$, and $Q = \lim_n Q_n,$
\begin{multline}\label{eF}\liminf_n \frac1{|\E|}\widehat{\F}(\bnun) \ge \liminf_n \gF_n(\bnun',\bjn,\bgn) \ge \\ \int\(\frac{1}{\pi}\int\chi\,dg\) \,dQ(x,\nu,\j,g) = \int \lim_{R\to +\infty} \dashint_{K_R} \int \frac{1}{\pi}\chi(y-\lambda)\,dg(y)\,d\lambda\,dQ(x,\nu,\j,g) =\\  \int \lim_{R\to +\infty}\(\frac1{\pi |K_R|}\int  \chi*\indic_{K_R}\,dg \)\,dQ(x,\nu,\j,g).  \end{multline}
Now we use the fact that for $Q$-almost every $(x,\nu,\j,g)$:
\begin{itemize}
\item[i)] There exists a sequence $\{x_n\}_n$ in $\E$ such that $(x,\nu,\j,g)= \lim_n (x_n,\theta_{\sqrt n x_n} (\bnun', \bjn, \bgn))$.
 \item[ii)]  As a consequence of the above $\frac1{\pi |K_R|}\int  \chi*\indic_{K_R}\,dg$ converges to a finite limit  as $R\to +\infty$.
 \end{itemize}
 The first point implies, since $\div \bjn = \bnun'-\bm'$ and $\curl  \bjn = 0$, that by passing to the limit $n\to \infty$ we have $\div \j = \nu - \bm(x)$ and $\curl \j = 0$.
 The second point implies in particular using \eqref{bgnalpha} that $\nu(B_R) \le CR^2$,  proving that $(\j,\nu)\in\mathcal A_{\bm(x)}$.

 Moreover the second point implies that  for any $C>0$ we have $g(K_{R+C}\sm K_{R-C}) =o(R^2)$ as $R\to +\infty$, and thus  from point i) above
$$ \lim_{R\to +\infty}\lim_{n\to+\infty}\frac1{R^2} \bgn(K_{R+C}(\sqrt n x_n))\sm K_{r-C}(\sqrt n x_n)) = 0. $$
Using \eqref{bgnalpha} we deduce that
$$ \lim_{R\to +\infty}\lim_{n\to+\infty}\frac1{R^2} \bnun'(K_{R+C}(\sqrt n x_n))\sm K_{r-C}(\sqrt n x_n)) = 0 $$
 and then from \eqref{wg},
 $$ \lim_{R\to +\infty}\lim_{n\to+\infty}\frac1{R^2} \left| W(\bjn, \chi*\indic_{K_R(\sqrt n x_n)})- \int \chi * \indic_{K_R(\sqrt n x_n)}\,d\bgn \right| = 0. $$
 Thus,  using  \cite[Lemma 4.8]{ss1} we may take the   limit $n\to \infty$ and   deduce
 $$ \lim_{R\to +\infty}\frac1{R^2} \left| W(\j, \chi*\indic_{K_R}) -\int \chi * \indic_{K_R}\,dg \right| = 0. $$
 Together with \eqref{eF} this yields, by definition of $W$,
\begin{equation}\label{thlow1}\liminf_n  \frac1{|\E|}\widehat{\F}(\bnun)  \ge  \frac{1}{\pi} \int W(\j) \,dQ(x,\nu,\j,g)\end{equation}
and, we recall,  $Q$-a.e. $(\j,\nu)\in \mathcal A_{\bm(x)}$.

Now we let $P_n$ (resp. $P$) be the marginal of $Q_n$ (resp. $Q$) with respect to the variables $(x,\j)$. Then  the first marginal of $P$ is the normalized Lebesgue measure on $\E$ and $P$-a.e. we have $\j\in\mathcal A_{\bm(x)}$, in particular
$$W(\j)\ge \min_{\mathcal A_{\bm(x)}} W = \bm(x) \(\min_{\mathcal A_1} W  - \frac{\pi}{2}\log \bm(x)\).$$ Integrating with respect to $P$ and noting that since only $x$ appears on the right-hand side we may replace $P$ by its first marginal there we find, in view of \eqref{defa} that the lower bound  \eqref{thlow} holds.

\subsection{Proof of Theorem \ref{th2}, completed}
As mentioned above, Part B of the theorem is a direct consequence of Proposition \ref{construct}, see Corollary~\ref{etanul}.

Part C follows from the comparison of Parts A and B: for minimizers, the chains of inequalities  \eqref{thlow} and \eqref{thup} are in fact equalities and  $\frac{1}{\pi} \int W\, dP $ must be minimized hence equal to $\alpha$. Also  we must have  $\lim_{n\to \infty} (\F(\nu_n) -\widehat{\F})(\nu_n)= \lim_{n\to \infty} \int \zeta \, d\nu_n=0$, which in view of  \eqref{lemz}, implies that $\lim \sum_i \dist (x_i, \E)^2=0$.

From the fact that $\widehat{\F}(\nu_n)=O(1)$, we deduce from Proposition \ref{5.1}, \ref{wspread} and \eqref{fg}, (arguing exactly as in the proof of Theorem \ref{th3}) that there exists $C>0$ such that  for every $x, R>1$, we have
$$D(x, R)^2 \min \(1, \frac{|D(x,R)|}{R^2}\)\le C n.$$
This completes the proof of Theorem \ref{th2}.

\subsection{Deviations: proof of Theorems \ref{th4} and Theorem \ref{valz}}
  We start with the upper bound on $\log \Q$. Let $A_n$ be a subset of $ \mc^n$. We identify points in $\mc^n$ with measures $\nu_n$ of the form $\sum_{i=1}^n \delta_{x_i}$.

From \eqref{loi2},   we have
$$\Q(A_n) =   \frac{1}{\K} \int_{A_n} e^{-\hal \beta n \F(\sum_{i=1}^n \delta_{x_i} )} \, dx_1 \dots dx_n$$ hence
\begin{equation}\label{logq}
\frac{\log \Q(A_n)}{n} =- \frac{\log \K}{n} +\frac1n \log \int_{A_n} e^{-\hal \beta n \F(\sum_{i=1}^n \delta_{x_i} )} \, dx_1 \dots dx_n.\end{equation}
We deduce, since $\widehat{F}_n(\nu_n) = \F(\nu_n) - 2\int\x\,d\nu_n$, that
\begin{equation}\label{logq2}
\frac{\log \Q(A_n)}{n} \le  - \frac{\log \K}{n}  + \frac{1}{n}    \log\( e^{-\hal \beta n \inf_{A_n}\widehat{F}_n} \int_{A_n} e^{-\beta n  \int \x\,d\nu_n} \, dx_1 \dots dx_n\)  .\end{equation}
Let $\nu_n$ such that $\widehat F_n(\nu_n)\le \inf_{A_n}\widehat{F}_n +1/n$.  Then  from \eqref{thlow} in Theorem~\ref{th2}  we have, using the notations there,
$\liminf_{n \to \infty} \widehat{F}_n(\nu_n) \ge \frac{|\E|}{\pi} \int W(\j)\, dP(x, \j)$ where  $P = \lim_n P_{\nu_n}$.  Since $P_{\nu_n}\in i_n(A_n)$ by definition we have  $P \in A_{\infty}$  since by definition  $A_{\infty} = \cap_{n>0}\overline{\cup_{m>n} i_m(A_m)}$ .  We may thus write
\begin{equation}\label{whf}
 \liminf_{n\to+\infty} \widehat{F}_n(\nu_n)\ge \frac{|\E|}{\pi}\inf_{P \in A_{\infty} }\int W(\j)\, dP(x, \j) .\end{equation}
Inserting into \eqref{logq2} we are led to
\begin{multline}\label{logq3}
\frac{\log \Q(A_n)}{n} \le   -  \frac{\beta|\E|}{2\pi} \inf_{P \in A_\infty} \int W(\j) \, dP(x, \j)   -  \frac{\log \K}{n}   \\+\frac{1}{n} \log \(      \int_{\mc^n} e^{-   \beta n \int\x\,d\nu_n  } \, dx_1 \dots  dx_n\) +o(1) \end{multline}
thus in view  of Lemma \ref{lemintxi} and \eqref{lbk},
we have  established \eqref{ldr}.
An immediate corollary of \eqref{logq3},  choosing $A_n$ to be the full space and using $\inf \frac{|\E|}{\pi}\int  W(\j)\, dP(\j)=\alpha$  and Lemma \ref{lemintxi}, is that\begin{equation}\label{logk3}
\limsup_{n\to \infty}\frac{\log \K}{n}\le - \frac{\beta\alpha}{2} +\log |\E|.\end{equation}
\smallskip

We next turn to the lower bound.
Fix $\eta>0$. Given $A$, let $P\in \mathring{A}$ be such that \begin{equation}\label{bsup2}
\int W(\j)\, dP(x,\j)\le \inf_{P\in \mathring{A}}\int W(\j)\, dP(\j)+\frac{\eta}{2}.\end{equation} Since $P\in \mathring{A}$, if $\eta$ is chosen small enough (which we assume) then  $B(P,2\eta)\subset A$, where the ball is for a distance metrizing weak convergence  as in Proposition \ref{construct}.

 We then apply Proposition \ref{construct} to $P$ and $\eta$. We find $\delta>0$ and for any $n$ large enough a set $A_n$ such that $|A_n|\ge n!(\pi\delta^2/n)^n$ and, rewriting  \eqref{bsw} with \eqref{idwn},
 \begin{equation}\label{bsup1}
 \limsup_{n\to \infty} \sup_{A_n}\F\le \frac{|\E|}{\pi}\int W(\j)\, dP(\j)+\eta.\end{equation}
Moreover, for every $(y_1,\dots,y_n)\in A_n$ and letting $\{\nu_n=  \delta_{y_1}+\dots+\delta_{y_n}\}_n$,  there exists $\{\j_n\}_n$ in $L^p_\loc(\mr^2,\mr^2)$ such that $\div \j_n = 2\pi( \nu_n' -\bm')$ and such that  the image $P_n$ of $dx_{|\E}/|\E|$ by the map $x\mapsto \(x,\j_n(\sqrt n x+\cdot)\)$ is such that
\begin{equation}\label{bsup22}\limsup_{n \to \infty} \dist(P_n,P) \le \eta.\end{equation}
In particular \eqref{fauxi} holds.
Moreover, inserting \eqref{bsup1}  and \eqref{bsup2} into \eqref{loi2}, we find that
$$\frac{\log \Q(A_n)}{n}\ge -\frac{\log \K}{n} - \frac{\beta |\E|}{2\pi }
 \inf_{P\in \overset{\circ}{A}}\int W(\j)\, dP(\j) -\hal\beta\eta +\frac{1}{n}\log \left|\frac{A_n}{\sqrt{n}}\right|+o(1) .$$
 On the other hand, using  $|A_n |\ge n! (\pi \delta^2/n)^n $ and Stirling's formula, we have $\log |A_n|\ge   2n \log \delta - Cn   $.
Combining with \eqref{logk3}, \eqref{ldlb} follows, with $C_\eta = -2\log\delta + C + \log|\E|$.

Theorem \ref{valz}  immediately follows by combining \eqref{logk3}, \eqref{lbk} and \eqref{defK}.

\ed{\subsection{Proof of Theorem \ref{additi}}
We apply the method of Theorem \ref{gamma} part B.
Let $A_{n, M}= \{ (x_1, \cdots, x_n): \widehat{F_n}(\sum_{i=1}^n \delta_{x_i}) \le M\}$.
In view of \eqref{logq2}, Corollary \ref{coro43}, and  Lemma \ref{lemintxi},  if $M$ is chosen large enough we have $\Q(A_{n,M}^c) \to 0$ as $n\to \infty$. In view of Lemma \ref{lesigne}, to prove the tightness of $i_n\# \Q$ it thus suffices to  check that if $P_n \in i_n (A_{n,M})$ then $P_n$ has a convergent subsequence. But     we have just proven this in Theorem \ref{th2}, part A, item 1.

}

\ed{\subsection{Definition of $\mathbb{W}$}\label{secnewdefw}
In this subsection we briefly examine how to define the renormalized energy as a function of the points only, via \eqref{neww}.
We prove the following:
\begin{lem} The function $\mathbb{W}$ be defined by \eqref{neww} is Borel-measurable on the set of locally finite measures.
\end{lem}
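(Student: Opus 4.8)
The plan is to exhibit $\W$ as a countable combination (infima, suprema, limits) of measurable functions on the space of locally finite positive measures $\radon_+$, equipped with the weak topology, which is Polish. The first step is to restrict attention to the set $\mathcal S$ of measures $\nu$ of the form $\sum_{p\in\Lambda}\delta_p$ with $\Lambda$ discrete; this set is Borel (it is the set of purely atomic measures with all masses equal to $1$, which can be written as a countable intersection of conditions of the form ``$\nu(B)\in\mathbb Z$ for all balls $B$ in a countable basis'' together with ``$\nu(\{x\})\le 1$ for all $x$'', each of which is Borel by standard arguments on the weak topology), and off $\mathcal S$ we have set $\W=+\infty$.

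The key step is to realize $W(\j)$, for a fixed admissible $\j$ with $\div\j=2\pi(\nu-m)$, as an explicit measurable functional of the pair $(\nu,\j)$, and then to take the infimum over $\j$ in a way that stays measurable. For this I would use the parametrization: given $\nu\in\mathcal S$, the field $\j$ solving \eqref{eqj} is determined up to an additive constant vector $\vec C\in\mr^2$ (by Lemma \ref{lem15}), and one canonical choice is $\j_\nu:=-\nab H$ where $H$ solves $-\Delta H=2\pi(\nu-m)$ with, say, the normalization coming from truncation as in \eqref{Hp}; the map $\nu\mapsto\j_\nu$ is Borel from $\mathcal S$ to $\Lp$ because it is built from the Newtonian potential applied to $\nu-m$, which depends continuously (hence measurably) on $\nu$ in the weak topology on bounded sets. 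Then by Lemma \ref{lem15}, $W$ is minimized over the admissible class (when the relevant limit of averages exists) at the field with vanishing average, and in any case $\inf_{\vec C}W(\j_\nu+\vec C)$ is a minimization of the explicit quadratic expression $W(\j_\nu,\chi_{K_R})+\vec C\cdot\int\j_\nu\chi_{K_R}+\tfrac12|\vec C|^2\int\chi_{K_R}$ in two real variables, whose coefficients are measurable in $\nu$; so this infimum over $\vec C$, followed by the $\limsup_{R\to\infty}$ over a countable sequence $R\to\infty$ (legitimate since $R\mapsto W(\j,\chi_{K_R})/|K_R|$ has a limit or one may pass to a countable cofinal family), is measurable. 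Finally one must check that $\W(\nu)=\inf_{\j}W(\j)$ actually equals $\inf_{\vec C}W(\j_\nu+\vec C)$: this is exactly the content of the first part of Lemma \ref{lem15}, which says every competitor differs from $\j_\nu$ by a constant vector, so the infimum over all admissible $\j$ is the infimum over the two-parameter family $\{\j_\nu+\vec C\}$.

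The remaining ingredient is to express $W(\j_\nu,\chi_{K_R})$ itself measurably. From the definition \eqref{WR}, $W(\j_\nu,\chi_{K_R})=\lim_{\eta\to0}\bigl(\tfrac12\int_{\mr^2\setminus\cup_p B(p,\eta)}\chi_{K_R}|\j_\nu|^2+\pi\log\eta\sum_p\chi_{K_R}(p)\bigr)$; one restricts $\eta$ to a countable sequence $\eta_k\to0$, and for each fixed $\eta_k$ the integral $\int_{\mr^2\setminus\cup_p B(p,\eta_k)}\chi_{K_R}|\j_\nu|^2$ is measurable in $\nu$ (the integrand and the domain both depend measurably on $\nu$ since the atoms $p$ and the field $\j_\nu$ do), and $\sum_p\chi_{K_R}(p)=\int\chi_{K_R}\,d\nu$ is weakly continuous in $\nu$. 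Hence $W(\j_\nu,\chi_{K_R})$ is a pointwise limit of measurable functions, hence measurable, and so is $\W$ after the countable operations described above.

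\textbf{Main obstacle.} The delicate point is the measurability of the truncated Dirichlet integral $\nu\mapsto\int_{\mr^2\setminus\cup_p B(p,\eta)}\chi_{K_R}|\j_\nu|^2$: one must argue that, on the Borel set $\mathcal S$, both the excised domain and the field $\j_\nu$ vary in a jointly measurable way with $\nu$. This is handled by noting that on $\mathcal S$ the atoms can be enumerated measurably (a standard selection argument for atomic measures), so $\cup_p B(p,\eta)$ is a measurable family of open sets, and $\j_\nu$ is obtained by a fixed continuous operator (Newtonian potential, gradient) from $\nu-m$; then Fubini-type arguments give the measurability of the integral. Everything else reduces, via Lemma \ref{lem15}, to elementary two-variable minimization and countable limiting operations, which preserve Borel-measurability.
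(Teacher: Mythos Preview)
Your overall strategy --- reduce the infimum over all admissible $\j$ to a two-parameter infimum over $\vec C\in\mr^2$ via Lemma~\ref{lem15}, and then handle everything by countable operations --- is exactly the paper's strategy. However, there is a genuine gap in your construction of the canonical field $\j_\nu$.

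You propose to set $\j_\nu=-\nabla H$ with $H=-2\pi\Delta^{-1}(\nu-m)$, invoking the formula~\eqref{Hp}. But \eqref{Hp} is written for a \emph{finite} configuration: the convolution $\int\log|x-y|\,d(\nu-m)(y)$ converges there because $\nu_n'$ has finitely many atoms and $\mu_0'$ has compact support. For an infinite configuration $\nu=\sum_{p\in\Lambda}\delta_p$ with $\Lambda$ unbounded and $m>0$ a constant background, neither $\sum_p\log|x-p|$ nor $m\int_{\mr^2}\log|x-y|\,dy$ converges, and there is no canonical renormalization that produces a well-defined $H$ or $\nabla H$. So the map $\nu\mapsto\j_\nu$ you describe is simply not defined on the relevant set, and the claim that it ``is built from the Newtonian potential \ldots which depends continuously on $\nu$'' is vacuous in this setting. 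The obstacle you flag as the main one --- joint measurability of the truncated Dirichlet integral once $\j_\nu$ is in hand --- is comparatively routine; the real difficulty is producing a measurable $\nu\mapsto\j_\nu$ at all.

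The paper circumvents this by \emph{not} constructing $\j_\nu$ explicitly. It works instead on the field side: the set $A=\{\j\in\mathcal A_m:W(\j)<\infty\}$ is Borel in $\Lp$ (since $W$ is Borel, proved in \cite{ss1}), and by Lemma~\ref{lem15} the equivalence classes for $\j\sim\j'\Leftrightarrow\div\j=\div\j'$ are precisely the translates $\{\j+\vec C:\vec C\in\mr^2\}$. The saturation of any open set under this relation is open, so Effros's selection theorem yields a Borel transversal $B\subset A$ meeting each class exactly once. The map $\j\mapsto\tfrac{1}{2\pi}\div\j+m$ is then a Borel injection from $B$ onto $\{\nu:\W(\nu)<\infty\}$, and since Borel injections between standard Borel spaces have Borel inverses, one obtains a Borel measurable selection $\psi:\nu\mapsto\j$. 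From that point on, your final step $\W(\nu)=\inf_{\vec C\in\mr^2}W(\psi(\nu)+\vec C)$ (with the infimum restricted to, say, $\vec C\in\mathbb Q^2$) goes through as you describe.
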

\begin{proof}
First we show that there exists a measurable map $\nu \mapsto \j_\nu$ where $\j_\nu$ satisfies \eqref{eqj}.
The set $$A=\{\j \in \mathcal{A}_m, W(\j)<\infty\}$$
is Borel measurable, since $W$ is (as proven in \cite[Theorem 1]{ss1}).
We may partition $A$ into equivalence classes for the relation $\j\sim \j'$ if $\div \j =\div \j'$.  In view of Lemma \ref{lem15}, denoting by $\j^*$ the equivalence class of $\j\in A$, we have $\j^*=\{\j+\vec{C}, \vec{C}\in \mr^2\}$.   In particular this implies that if $U$ is an open set in $A$, then $U^*=\cup_{\j\in U} \j^*$ is open too in $A/\sim$.
By Effros's theorem (cf. e.g. \cite[Theorem 5.4.3]{sri}) there thus exists a Borel section $B$ of $A$ which contains exactly one element of each equivalence class.
 The map $\j^* \mapsto \frac{1}{2\pi}\div \j + m$ is then a Borel measurable and injective map from $B$ to $\{\nu \in \mathcal{M}_+: \mathbb{W}(\nu)<\infty\}$ where $\mathcal{M}_+$ is the set of positive Radon measures on $\mr^2$.
By  \cite[Prop. 8.3.5]{cohn} its inverse is also Borel measurable and injective. This provides  a measurable selection $\psi:\nu \mapsto\j$ satisfying \eqref{eqj} on $\{\nu\in \mathcal{M}_+: \mathbb{W}(\nu)<\infty\}$.
Since
 $\j^*=\{\j+\vec{C}, \vec{C}\in \mr^2\}$.
 we may write
$$\mathbb{W}(\nu)=\inf_{\vec{C}\in \mr^2} W(\psi(\nu)+ \vec{C}) .$$
Using again the fact that  $W$ is Borel measurable and $\nu \mapsto \psi(\nu)+ \vec{C}$ too, it follows that $\mathbb{W}$ is measurable as claimed.
 \end{proof}

We may then without too much difficulty translate the results of Theorems \ref{th2}, \ref{th4} with $\int \mathbb{W}(\nu)\, dQ(x,\nu)$ instead of $\int W(\j)\, dP(x,\j)$.

}

\section{Proof of Proposition~\ref{construct}}\label{sec5}
The construction consists of the following. We are given $\ep>0$, which is the error we can afford. First  we
select a finite set of vector fields  $J_1, \dots, J_{N}$ ($N$ will depend on $\ep$) which will represent the probability $P(x,\j)$ with respect to  its $\j$ dependence, and whose renormalized energies  are well-controlled.  Since $P$ is $T_{\lambda(x)}$-invariant, we need it to be well-approximated by measures supported on  the orbits  of the  $J_i$'s under translations.
Secondly, we work in blown-up coordinates and  split the region $\E'$ (whose diameter is order $\sqrt{n}$) into many rectangles $K$ with centers $x_K$ and sidelengths of order $\bar{R}$ large enough.   Even though we choose $\bar{R}$ to be large, it will still be very small compared to the size of $\E'$, as $n\to \infty$, so that the Diracs at $x_K/\sqrt{n}$ approximate $P(x,\j)$ with respect to its $x$ dependence.
On each rectangle $K$, the weight ${m_0}'$ is temporarily replaced by its average $m_K$.
Then we split each rectangle $K$ into $q^2$ identical rectangles, with sidelengths of order $2R=\bar R / q$, where both $R$ and $q$ will be sufficiently large. We then select the proportion of the rectangles that corresponds to the weight that the orbit of each $J_i $ carries in the approximation of $P$. In these rectangles we paste a (translated) copy of $J_i$ at the scale $m_K$ and suitably modified near the boundary according to a construction of \cite{ss1} (Proposition \ref{tronque} below) so that its tangential component on the boundary is $0$ (this can be done while inducing only an error $\ep$ on $W$).
 In the few rectangles that may remain unfilled, we paste a copy of an arbitrary $J_0$ whose renormalized energy is finite. We perform the construction above provided we are far enough from $\p \E'$. The layer near the boundary must be treated separately, and there again an arbitrary (translated and rescaled) current can be pasted. Finally, we add a vector field to correct the discrepancy between $m_K$ and ${m_0}'$ in each of the rectangles.

To conclude the proof of Proposition~\ref{construct}, we collect all of the estimates on the constructed vector field to show that its energy $\w$ is bounded above in terms of $\int W\, dP$ and that the probability measures associated to the construction have remained close to $P$.

\subsection{Estimates on distances between probabilities}

First we choose distances which metrize the topologies of $\Lp$ and $\B (X)$, the set of finite Borel measures on $X=\E\times\Lp$. For $\j_1,\j_2\in\Lp$  we let
$$d_p(\j_1,\j_2) = \sum_{k=1}^\infty 2^{-k} \frac{\|\j_1 - \j_2\|_{L^p(B(0,k)) }}{1+\|\j_1 - \j_2\|_{L^p(B(0,k))}}, $$
and on $X$ we use the \ed{sum} of the Euclidean distance on $\E$ and $d_p$, which we denote $d_X$. \ed{It is a distance on $X$.} On $\B(X)$ we define a distance by choosing a sequence of bounded continuous functions $\{\vp_k\}_k$ which is dense in $C_b(X)$ and we let,  for any $\mu_1,\mu_2\in\B(X)$,
$$ d_{\B}(\mu_1,\mu_2) = \sum_{k=1}^\infty 2^{-k} \frac{|\lb \vp_k,\mu_1-\mu_2\rb|}{1+|\lb \vp_k,\mu_1-\mu_2\rb|},$$
where we have used the notation $\lb\vp,\mu\rb = \int\vp\,d\mu$.

We have the following general  facts.

\begin{lem}\label{etavar} For any $\ep>0$ there exists $\eta_0>0$ such that if $P,Q\in\B(X)$ and $\|P-Q\|<\eta_0$, then $d_{\B}(P,Q)<\ep$. Here $\|P-Q\|$ denotes the total variation of the signed measure $P-Q$, i.e. the supremum of $\lb\vp,P-Q\rb$ over measurable functions $\vp$ such that $|\vp|\le 1$.
\end{lem}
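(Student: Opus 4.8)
The plan is to prove Lemma~\ref{etavar} directly from the definitions, exploiting the fact that the $\vp_k$ are bounded continuous functions and that $d_{\B}$ is a weighted sum in which only finitely many terms matter up to a given precision.

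First I would observe that, since each term in the series defining $d_{\B}$ is bounded by $2^{-k}$, for a given $\ep>0$ one can choose an integer $K_0$ depending only on $\ep$ such that $\sum_{k>K_0} 2^{-k} < \ep/2$. It therefore suffices to control the finitely many terms with $k \le K_0$. For each such $k$, the function $\vp_k$ is bounded continuous, say $|\vp_k|\le M_k$ with $M_k<\infty$; set $M = \max_{1\le k\le K_0} M_k$. Then for $P,Q\in\B(X)$ we have $|\lb\vp_k, P-Q\rb| = |\int \vp_k \, d(P-Q)| \le M \|P-Q\|$ by the very definition of the total variation norm (which is the supremum of $\lb\vp,P-Q\rb$ over measurable $\vp$ with $|\vp|\le 1$, so for $|\vp_k|\le M$ we get the bound $M\|P-Q\|$). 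Since $t\mapsto t/(1+t)$ is nondecreasing on $[0,\infty)$, each of the first $K_0$ terms of $d_{\B}(P,Q)$ is then at most $2^{-k}\, \dfrac{M\|P-Q\|}{1+M\|P-Q\|} \le M\|P-Q\|$.

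Putting this together, for $\|P-Q\| < \eta_0$ we obtain
\[
d_{\B}(P,Q) \le \sum_{k=1}^{K_0} M\|P-Q\| + \sum_{k>K_0} 2^{-k} \le K_0 M \eta_0 + \frac{\ep}{2}.
\]
Choosing $\eta_0 = \ep/(2 K_0 M)$ (which depends only on $\ep$, since $K_0$ and $M$ depend only on $\ep$ and the fixed sequence $\{\vp_k\}_k$) gives $d_{\B}(P,Q) < \ep$, as desired.

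I do not expect any real obstacle here: the statement is essentially the elementary fact that convergence in total variation is stronger than weak convergence, made quantitative through the explicit metric $d_{\B}$. The only mild subtlety is that the $\vp_k$ need not be uniformly bounded over all $k$, which is why it is important to truncate the series at $K_0$ first and only then invoke the bound $|\vp_k|\le M_k$ for the finitely many remaining indices; this is exactly the order in which the argument above proceeds.
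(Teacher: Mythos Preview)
Your argument is correct. The paper in fact states this lemma as one of several ``general facts'' and does not supply a proof, so there is nothing to compare against; your truncation of the series at $K_0$ followed by the bound $|\lb\vp_k,P-Q\rb|\le M\|P-Q\|$ for the finitely many remaining terms is exactly the elementary argument one would expect, and the choice $\eta_0=\ep/(2K_0 M)$ closes it cleanly.
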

In particular, if $P = \sum_{i=1}^\infty \alpha_i\delta_{x_i}$ and  $Q = \sum_{i=1}^\infty \beta_i\delta_{x_i}$ with $\sum_i|\alpha_i - \beta_i| <\eta_0$, then  $d_{\B}(P,Q)<\ep$.

\begin{lem}\label{etadir} Let $K\subset X$ be compact. For any $\ep>0$ there exists $\eta_1>0$ such that if $x\in K, y\in X$ and $d_X(x,y)<\eta_1$ then $d_{\B}(\delta_x, \delta_y)<\ep$.
\end{lem}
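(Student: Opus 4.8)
The plan is to reduce the claim to a statement about finitely many of the test functions $\vp_k$. Recall that
\[
d_{\B}(\delta_x,\delta_y) = \sum_{k=1}^\infty 2^{-k}\,\frac{|\vp_k(x)-\vp_k(y)|}{1+|\vp_k(x)-\vp_k(y)|},
\]
and that each summand is bounded by $2^{-k}$ while the map $t\mapsto t/(1+t)$ is bounded by $t$. Given $\ep>0$, I would first choose $N$ so large that $\sum_{k>N} 2^{-k} = 2^{-N} < \ep/2$. It then suffices to produce $\eta_1>0$ such that $x\in K$ and $d_X(x,y)<\eta_1$ force $|\vp_k(x)-\vp_k(y)| < \ep/2$ for every $k \le N$, since in that case $d_{\B}(\delta_x,\delta_y) \le \frac{\ep}{2}\sum_{k\le N} 2^{-k} + 2^{-N} < \ep$.

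The key step is to produce, for each of the finitely many functions $\vp_1,\dots,\vp_N$, a modulus of continuity that is \emph{uniform over the compact set $K$}. I would establish this by contradiction together with sequential compactness: fixing $k\le N$ and $\delta>0$, if no radius $\rho>0$ had the property that $x\in K$, $d_X(x,y)<\rho$ implies $|\vp_k(x)-\vp_k(y)|<\delta$, then there would be sequences $\{x_j\}\subset K$ and $\{y_j\}\subset X$ with $d_X(x_j,y_j)\to 0$ yet $|\vp_k(x_j)-\vp_k(y_j)|\ge\delta$ for all $j$. Since $K$ is a compact subset of the metric space $X$, hence sequentially compact, one may extract $x_j\to x\in K$; then $d_X(y_j,x)\le d_X(y_j,x_j)+d_X(x_j,x)\to0$, so $y_j\to x$ as well, and the continuity of $\vp_k$ on $X$ yields $\vp_k(x_j)\to\vp_k(x)$ and $\vp_k(y_j)\to\vp_k(x)$, contradicting $|\vp_k(x_j)-\vp_k(y_j)|\ge\delta$. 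Applying this with $\delta=\ep/2$ gives radii $\rho_1,\dots,\rho_N>0$, and $\eta_1:=\min_{k\le N} \rho_k$ works by the reduction of the first paragraph.

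The only subtlety — and the place where compactness of $K$ is genuinely used — is that the $\vp_k$ are merely continuous on $X$, not uniformly continuous, and that $X=\E\times\Lp$ is not locally compact (closed balls of $\Lp$ are not precompact), so one cannot replace $K$ by a compact neighborhood and invoke uniform continuity there. Instead the uniformity must be extracted directly from the compactness of $K$, exactly as in the contradiction argument above; everything else is elementary bookkeeping with the series defining $d_{\B}$.
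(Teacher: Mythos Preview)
The paper does not supply a proof of this lemma: it is simply listed among the ``general facts'' in Section~7.1, alongside Lemmas~\ref{etavar} and~\ref{etaconv}, and left unproved. Your argument is correct and is exactly the standard one: truncate the defining series for $d_{\B}$, then use compactness of $K$ to obtain, for each of the finitely many remaining $\vp_k$, a modulus of continuity uniform over $x\in K$ (even though $y$ may lie outside $K$). Your contradiction/subsequence extraction handles this last point cleanly, and your remark that one cannot simply enlarge $K$ to a compact neighborhood in $X=\E\times\Lp$ is well taken.
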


\begin{lem}\label{etaconv} Let $0<\ep<1$.  If  $\mu$ is a probability measure on a set $A$ and $f,g:A\to X$ are measurable and such that $d_{\B}(\delta_{f(x)}, \delta_{g(x)})<\ep$ for every $x\in A$, then
$$d_{\B}(f^\#\mu,g^\#\mu) <C \ep(\lep+1).$$
\end{lem}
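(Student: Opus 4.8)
\textbf{Proof plan for Lemma~\ref{etaconv}.}
The plan is to estimate the distance $d_{\B}(f^\#\mu, g^\#\mu)$ directly from its definition as a weighted sum over the dense family $\{\vp_k\}_k$. For each test function $\vp_k$ we must bound $|\lb \vp_k, f^\#\mu - g^\#\mu\rb| = |\int_A (\vp_k(f(x)) - \vp_k(g(x)))\, d\mu(x)|$. The natural idea is to split the sum $d_{\B} = \sum_k 2^{-k}(\cdots)$ into a ``head" $k\le k_0$ and a ``tail" $k > k_0$, where $k_0$ will be chosen of order $|\log \ep|$. For the tail, since each summand is at most $1$, the contribution is bounded by $\sum_{k>k_0} 2^{-k} = 2^{-k_0}$, which is $\le C\ep$ if $k_0 \sim |\log\ep|$. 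For the head, the point is that the bound $d_{\B}(\delta_{f(x)},\delta_{g(x)})<\ep$ for \emph{every} $x$ gives, via the definition of $d_{\B}$ applied to the two Dirac masses, that $2^{-k}\frac{|\vp_k(f(x)) - \vp_k(g(x))|}{1+|\vp_k(f(x))-\vp_k(g(x))|} < \ep$ for each fixed $k$; since the function $t\mapsto \frac{t}{1+t}$ is increasing and bijective from $[0,\|\vp_k\|_\infty \text{-ish range}]$, and since for $k\le k_0$ we have $2^{-k}\ge 2^{-k_0}\sim \ep$, this forces $\frac{|\vp_k(f(x))-\vp_k(g(x))|}{1+|\vp_k(f(x))-\vp_k(g(x))|} < 2^{k}\ep \le 2^{k_0}\ep \sim 1$; inverting, $|\vp_k(f(x))-\vp_k(g(x))| \le C 2^{k}\ep \le C2^{k_0}\ep$. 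Wait---this would be too lossy, so instead I would keep the bound in the form $|\vp_k(f(x)) - \vp_k(g(x))|\le \frac{2^k\ep}{1 - 2^k\ep}$ valid when $2^k\ep<1$, integrate over $x\in A$ against the probability measure $\mu$ (which contributes a factor $1$), and then assemble $\sum_{k\le k_0} 2^{-k}\cdot\frac{2^k\ep}{1-2^k\ep}\le k_0 \cdot \frac{\ep}{1-2^{k_0}\ep}$. Choosing $k_0 = \lfloor |\log_2 \ep| \rfloor - 1$ or similar keeps $2^{k_0}\ep$ bounded away from $1$ (say $\le 1/2$), giving a head contribution $\le C k_0 \ep \le C\ep |\log\ep|$, i.e. $\le C\ep(\lep+1)$ in the paper's notation (with $\lep = |\log\ep|$).

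Assembling head and tail yields $d_{\B}(f^\#\mu, g^\#\mu) \le C\ep(\lep+1)$, which is exactly the claimed estimate. One should double-check the elementary monotonicity manipulation: from $\frac{t}{1+t}<s$ with $s<1$ one deduces $t < \frac{s}{1-s}$, which is the inequality used above with $s = 2^k\ep$; this is valid precisely in the regime $2^k\ep<1$, which is why the cutoff $k_0$ must be taken so that $2^{k_0}\ep \le 1/2$.

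\textbf{Main obstacle.} There is no deep obstacle here; this is a routine ``quantitative metrization" estimate. The only point requiring a little care is the bookkeeping of constants in the head/tail split and making sure the logarithmic factor $\lep+1$ (rather than a worse power of $\ep$) genuinely comes out: one must resist bounding $|\vp_k(f(x))-\vp_k(g(x))|$ by a constant independent of $k$ on the head (which would lose the $\ep$) and instead propagate the $2^k\ep$ bound, cancelling against the $2^{-k}$ weight so that each of the $\sim|\log\ep|$ head terms contributes $O(\ep)$. One should also note that the inequality $d_{\B}(\delta_{f(x)},\delta_{g(x)})<\ep$ is used \emph{pointwise in $x$} and then integrated, so measurability of $x\mapsto \vp_k(f(x))$ and $x\mapsto \vp_k(g(x))$---which holds since $f,g$ are measurable and $\vp_k$ continuous---is what lets us pass the bound under the integral sign.
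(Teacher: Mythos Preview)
Your proposal is correct and follows essentially the same route as the paper's proof: both extract from $d_{\B}(\delta_{f(x)},\delta_{g(x)})<\ep$ the pointwise bound $\frac{|\vp_k(f(x))-\vp_k(g(x))|}{1+|\vp_k(f(x))-\vp_k(g(x))|}\le 2^k\ep$, then split the sum defining $d_{\B}(f^\#\mu,g^\#\mu)$ into a head of about $|\log_2\ep|$ terms (each contributing $O(\ep)$) and a geometric tail of size $O(\ep)$. The paper's bookkeeping is marginally cleaner: it observes directly that $\frac{|\langle\vp_k,f^\#\mu-g^\#\mu\rangle|}{1+|\langle\vp_k,f^\#\mu-g^\#\mu\rangle|}\le\min(2^k\ep,1)$ (which follows from your inversion step plus monotonicity of $t\mapsto t/(1+t)$), thereby avoiding the $1/(1-2^{k_0}\ep)$ factor and the need to take $k_0$ one step back from $|\log_2\ep|$; but this is cosmetic.
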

\begin{proof} Take any bounded continuous function $\vp_k$ defining the distance on  $\B(X)$. Then if $d_{\B}(\delta_{f(x)}, \delta_{g(x)})<\ep$ for any $x\in X$ we have in particular
$$\frac{|\vp_k(f(x)) - \vp_k(g(x))|}{1+ |\vp_k(f(x)) - \vp_k(g(x))|}\le 2^k \ep.$$
It follows that
$$ d_{\B}(f^\#\mu,g^\#\mu) \le \sum_k 2^{-k} \min(\ep 2^k, 1)\le \ep \({[\log_2\ep]+1}\) + \sum_{k={[\log_2\ep]+1}}^\infty 2^{-k} \le C\ep(\lep+1). $$
\end{proof}

\subsection{Preliminary results}
In what follows $\E' = \sqrt n \E$, $\bm'(x) = \bm(x/\sqrt n)$: we work in blown-up coordinates.  We consider a probability measure $P$ on $\E\times\Lp$ which  is as in the proposition.  We let $\tP$ be the probability measure on $\E\times \mathcal A_1$ which is the image of $P$ under $(x,\j)\mapsto (x,\sigma_{1/\bm(x)} \j)$, so that
\begin{equation}\label{proba} \tP = \int \delta_x\otimes\delta_{\sigma_{1/ \bm(x)} \j}\,dP(x,\j),\quad P = \int \delta_x\otimes\delta_{\sigma_{ \bm(x)} \j}\,d\tP(x,\j)\end{equation}
It is easy to check that since $P$ is $T_{\lambda(x)}$-invariant, $\tP$ is as well, and in particular it is translation-invariant.

The construction is based on  the following statement which is a rewriting of Proposition~4.2 in \cite{ss1} and the remark following it:
%
\begin{pro}[Screening of an arbitrary vector field]\label{tronque} Let $K_R = [-R,R]^2$, let $\{\chi_R\}_R$ satisfy \eqref{defchi}.

Let  $G\subset\mathcal A_1$ be  such that there exists $C>0$ such that for any $ \j\in G$ we have
\begin{equation} \label{hypunif}  \forall R>1,\  \frac{\nu(K_R)}{|K_R|} < C,\end{equation}
for the associated $\nu$'s, and
\begin{equation}\label{cvwunif}
 \lim_{R\to +\infty} \frac{W(\j,\chi_R)}{|K_R|}  = W(\j),\end{equation}
where the convergence is uniform  w.r.t. $\j\in G$.

Then for any $\ep>0$ there exists $R_0>0$, $\eta_2>0$  such that  if $R>R_0$ and $L$  is a rectangle centered at $0$ whose sidelengths belong to $[2R,2R(1+\eta_2)]$ and such that  $|L| \in\mn$,  then  for every $\j\in G$ there exists  a $\j_L\in\Lp$  such that   the following hold
\begin{itemize}
\item[i)]  $\j_L = 0$ in $L^c$,
\item[ii)] There is a discrete subset $\Lambda \subset L$ such that $$\div \j_L = 2\pi \(\sum_{p\in\Lambda}\delta_p - \indic_L\).$$ In particular $\j
\cdot \vec{\nu} =0$ on $\partial L$,  there is no singular part of the  divergence on $\p L$ and thus $\# \Lambda =|L|$.
\item[iii)] If $d(x, L^c) >R^{\tq}$ then $\j_L(x) = \j(x)$
\item[iv)]
\begin{equation}\label{restronque} \frac{W(\j_L,\indic_L)}{|L|} \le W(\j)+\ep.\end{equation}\end{itemize}
\end{pro}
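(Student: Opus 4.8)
The plan is to reduce Proposition~\ref{tronque} to the ``screening'' construction of \cite[Proposition 4.2]{ss1}, checking that the slightly different normalization conventions of the present paper (the $\pi/2$ rotation of the vector-fields, and the convention $\mathcal A_1 \leftrightarrow \mathcal A_{2\pi}$ of \cite{ss1}) do not affect anything, and then verifying that the hypotheses \eqref{hypunif}--\eqref{cvwunif} are exactly what is needed to make the estimates uniform over $\j \in G$. First I would recall the idea of the construction in \cite{ss1}: given $\j \in \mathcal A_1$ with $\operatorname{div}\j = 2\pi(\nu - 1)$ and $\operatorname{curl}\j = 0$, one fixes a large $R$, works on a rectangle $L$ of sidelengths close to $2R$, and modifies $\j$ in the boundary layer $\{d(x,L^c) \le R^{3/4}\}$ so as to kill the normal component on $\p L$ while keeping $\operatorname{div} = 2\pi(\sum_p \delta_p - \indic_L)$ inside and not creating a singular boundary charge. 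The modification is done by: (a) choosing a good radius $t \in (R - R^{3/4}, R)$ on which the circle (or rather the boundary of a slightly smaller rectangle) carries controlled energy and controlled number of points — this is a mean-value/Chebyshev argument using that $\int_L |\j|^2$ is of order $|L|$, which follows from \eqref{cvwunif}; (b) on the thin annular region between this good curve and $\p L$, solving an auxiliary problem (of the type $\operatorname{div} = 2\pi(\text{point charges} - 1)$, $\operatorname{curl} = 0$, prescribed normal trace $0$ on $\p L$ and matching $\j$ on the inner curve) whose energy cost per unit length is $O(\log R)$, hence $O(R\log R) = o(R^2)$ globally; and (c) adjusting the number of interior points so that $\#\Lambda = |L| \in \mathbb N$, which is why one needs the freedom to dilate the sidelengths within $[2R, 2R(1+\eta_2)]$.

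The key steps, in order, would be: (1) State precisely the correspondence between the objects here and in \cite{ss1} and observe that $W(\j_L, \indic_L)$, $W(\j)$, the class $\mathcal A_1$, and the admissibility bound \eqref{densbornee} all transform compatibly, so that \cite[Proposition 4.2]{ss1} applies verbatim to each individual $\j \in G$, producing $\j_L$ satisfying (i)--(iii) and the bound $W(\j_L,\indic_L) \le W(\j,\chi_{L}) + o(|L|)$ where the $o(|L|)$ comes from the boundary-layer cost. (2) Quantify that $o(|L|)$: track through the construction that the boundary-layer energy is bounded by $C \cdot (\text{number of points in the layer}) \cdot \log(\text{that number}) + C R \log R$, and that by \eqref{hypunif} the number of points in the layer is $O(R^{7/4})$; hence the total excess is $\le C R^{7/4}\log R$, which divided by $|L| \asymp R^2$ is $\le \ep/2$ once $R \ge R_0(\ep)$. (3) Use \eqref{cvwunif} to replace $W(\j,\chi_L)/|L|$ by $W(\j) + \ep/2$ for $R \ge R_0(\ep)$, \emph{uniformly} in $\j \in G$ — this is the only place the uniformity hypothesis is genuinely used, and it is exactly designed for it. Combining (2) and (3) gives \eqref{restronque}. (4) Finally address the integrality constraint $|L| \in \mathbb N$ and $\#\Lambda = |L|$: the point count produced by the raw construction is $|L| + O(R^{3/2})$ or so; one corrects it by slightly dilating $L$ (changing $|L|$ by an integer amount, absorbable in the $[2R, 2R(1+\eta_2)]$ window for $\eta_2$ small depending on $\ep$) and adding/removing $O(1)$ points in the interior on unit cells, at energy cost $O(\log R) = o(R^2)$; choosing $\eta_2$ small enough that this dilation also does not spoil the energy estimate finishes the proof.

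The main obstacle I expect is step (2): getting the right power of $R$ for the boundary-layer cost and making sure it is genuinely $o(R^2)$ requires knowing, quantitatively, how the energy of the interpolating field in the annulus $\{R - R^{3/4} < d(\cdot, L^c) < R\}$ depends on the number and configuration of points there. This is the technical heart of the ball-construction / screening machinery of \cite{ss1}; since we are allowed to invoke \cite{ss1} as a black box, the real work here is bookkeeping — confirming that the statement of \cite[Prop. 4.2]{ss1} already contains a bound of the form $W(\j_L,\indic_L)/|L| \le W(\j,\chi_R)/|K_R| + \ep$ with $R_0$ depending only on $\ep$ and on the constant $C$ in \eqref{hypunif} and on the modulus of uniform convergence in \eqref{cvwunif}, and then simply transcribing it with the current sign/scaling conventions. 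A secondary subtlety is that the present statement asks for $\j_L \in \Lp$ (with $1<p<2$), which is automatic since $\j_L$ is a gradient with finitely many logarithmic singularities on the bounded set $L$, exactly as noted after the definition of $W(\j,\chi)$ in the excerpt; this requires only a one-line remark. I would also remark, as the authors do in the sentence following the statement, that $\j_L$ need not be curl-free after the modification — but in fact the construction of \cite{ss1} does keep it curl-free inside $L$ (it is a gradient there), the failure of the curl-free property in the final $\nu_n$ of Proposition~\ref{construct} coming instead from pasting many such $\j_L$'s together across rectangle boundaries, so nothing extra is needed here.
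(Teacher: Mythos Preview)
Your proposal is correct and matches the paper's approach: the paper does not prove Proposition~\ref{tronque} at all but simply states it as ``a rewriting of Proposition~4.2 in \cite{ss1} and the remark following it,'' so your plan of reducing to \cite[Proposition~4.2]{ss1} and checking the change of conventions is exactly what is intended. One small correction: you assert that the construction of \cite{ss1} keeps $\j_L$ curl-free inside $L$, but the paper's remark immediately after the statement indicates only that $\curl\j_L=0$ in a neighborhood of each $p\in\Lambda$ (which is all that is needed for $W(\j_L,\indic_L)$ to make sense), not globally on $L$.
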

\ed{We note that if $\j$ is such that $\div \j= 2\pi \sum_p\delta_{p}-1$ and we have $\curl \j=0$ in a neighborhood of each $p\in \Lambda$, then the definition \eqref{WR} still makes sense, in particular the limit exists. This is what is meant by $W$ in \eqref{restronque}, as well as in the rest of the section.
}


The next lemma explains how to partition $\E$ into rectangles. \ed{The main point is to cut $\E'$ into stripes and then each stripe into rectangles in such a way that $\int m_0'$ over each rectangle is  a large integer.
}

\begin{lem}\label{rect} There exists a constant $C_0>0$ such that, given any $R>1$ and $q\in\mn^*$,  there exists for any $n\in\mn^*$ a collection $\cK_n$ of closed rectangles in $\E'$ with disjoint interiors,  whose sidelengths are between  $\Rb = 2qR$ and $\Rb+C_0\Rb/R^2$,  and which are such that
$$ \{x\in \E': d(x,\p \E')\le \Rb \}\subset  \E'\sm\bigcup_{K\in\cK_n} K \subset \{x\in \E': d(x,\p \E')\le C_0 \Rb \}$$
and, for all $K\in\cK_n$,
\begin{equation}\label{entier}\int_K \bm' \in q^2\mn.\end{equation}
\end{lem}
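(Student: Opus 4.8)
\textbf{Proof plan for Lemma~\ref{rect}.}

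The plan is to construct the partition in two stages: first cut $\E'$ into horizontal strips whose widths are adjusted so that the $\bm'$-mass of each strip is an integer multiple of $q^2$, and then cut each strip vertically into rectangles whose $\bm'$-mass is exactly $q^2$ (or a fixed multiple of it). The key quantitative input is \eqref{bornmo}, which guarantees $\bw\le\bm\le\wb$, so that the $\bm'$-mass of a region is comparable to its Lebesgue area; this lets us convert ``prescribed mass'' into ``prescribed width'' with only small perturbations of the side lengths.

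First I would set up the strips. Fix $R>1$ and $q\in\mn^*$ and write $\Rb=2qR$. Let $\E'$ have vertical extent contained in an interval $[y_-,y_+]$. Starting from $y=y_-$ (or rather from the first horizontal line that meets $\E'$), I define cutting levels $y_0<y_1<y_2<\dots$ inductively: given $y_j$, let $y_{j+1}$ be the smallest $y>y_j$ such that the horizontal slab $\{y_j\le y'\le y\}\cap\E'$ has $\bm'$-mass equal to the smallest integer multiple of $q^2$ that is $\ge q^2\lceil \Rb^2/q^2\rceil$ — concretely, so that the slab has $\bm'$-mass in $q^2\mn$ and width as close to $\Rb$ as the integrality constraint allows. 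Since $\bm'$ is bounded below and above, adding mass $q^2$ (an $O(\Rb^2)$ quantity since $q^2 \le \Rb^2/(4R^2)\cdot 4 = \Rb^2/R^2\cdot$const, wait — here $q^2$ relative to $\Rb^2$ is of order $1/R^2$) corresponds to increasing the width by an amount of order $q^2/\Rb = \Rb/(4R^2)$. Hence I can always arrange each strip to have width between $\Rb$ and $\Rb+C_0\Rb/R^2$ for a dimensional constant $C_0$ depending only on $\bw,\wb$. The process terminates leaving an uncovered top layer of width $\le \Rb$, and a symmetric argument (or simply discarding it) handles the bottom. This gives the strip decomposition and the claimed inclusions $\{d(x,\p\E')\le\Rb\}\subset \E'\setminus\bigcup K$ once we also remove a boundary collar.

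Next, within each strip $S$ (of $\bm'$-mass $k_S q^2$ for some integer $k_S$), I would cut vertically by the same inductive procedure: march left to right, and at each step take the smallest vertical cut producing a sub-rectangle of $\bm'$-mass exactly $q^2$. Again by \eqref{bornmo} each such rectangle has horizontal side length between $\Rb$ and $\Rb+C_0\Rb/R^2$ (enlarging $C_0$ if needed), its vertical side is the strip width, already controlled, and the mass condition \eqref{entier} holds by construction (indeed with the stronger property $\int_K\bm'=q^2$). Because $k_Sq^2$ is exactly the strip's mass, the vertical cutting terminates flush with the right edge of the strip, leaving no remainder inside. Finally I intersect everything with a slightly shrunk copy of $\E'$ and discard the rectangles meeting the collar $\{d(x,\p\E')\le C_0\Rb\}$, which is what forces the right-hand inclusion; here I use assumption \eqref{assumpV3} that $\p\E$ is $C^1$, so that tubular neighborhoods of $\p\E'$ have the expected volume and the collar only removes rectangles, never splitting the interior region.

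The main obstacle is the compatibility of the two one-dimensional integrality constraints: after fixing the strip widths so that each strip mass lies in $q^2\mn$, one must still be able to subdivide the strip into pieces of mass exactly $q^2$ with side lengths in the allowed window $[\Rb,\Rb(1+C_0/R^2)]$. This works precisely because the mass per unit width in a strip, $\int \bm'(x,\cdot)$ integrated over the strip width, is a continuous increasing function of the right endpoint with derivative pinched between $\bw\cdot(\text{width})$ and $\wb\cdot(\text{width})$, and the strip width is itself $\Rb(1+O(1/R^2))$; so the increment $q^2$ of mass is achieved over a horizontal displacement of $q^2/(\bm'\cdot\text{width})=\Rb/(4R^2)\cdot(1+o(1))\cdot\wb^{-1}$ to $\Rb/R^2\cdot$const, giving the width bound. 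Keeping track of the constants and ensuring the same $C_0$ works uniformly in $n$, $R$, $q$ (using only $\bw,\wb$ and the $C^1$ bound on $\p\E$) is the technical heart; once that is done, properties \eqref{entier} and the nesting of neighborhoods follow immediately.
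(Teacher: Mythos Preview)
Your overall architecture (horizontal strips, then vertical cuts within each strip, using the bounds $\bw\le\bm'\le\wb$ to convert mass constraints into length constraints) is exactly the paper's, but there is a genuine confusion that makes the construction give rectangles of the wrong size. You prescribe cutting each strip into sub-rectangles of $\bm'$-mass \emph{exactly} $q^2$. Since the strip height is $\approx\Rb$ and $\bm'\in[\bw,\wb]$, a piece of mass $q^2$ has horizontal width $\approx q^2/\Rb=\Rb/(4R^2)$, not $\Rb$; your own displacement computation in the last paragraph actually confirms this. The lemma only asks that $\int_K\bm'\in q^2\mn$, i.e.\ that the mass be a \emph{multiple} of $q^2$; the rectangles should have mass of order $\Rb^2$, and the role of $q^2$ is the granularity of the allowed adjustment. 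The fix is to do what the paper does: take strips of \emph{fixed} height $\Rb$, and within the strip define $t_{i+1}$ as the smallest $t\ge t_i+\Rb$ for which the cumulative mass $\int_{t_i}^{t}\int_{j\Rb}^{(j+1)\Rb}\bm'$ lands in $q^2\mn$. Then the overshoot in width is at most $q^2/(\bw\Rb)=C\Rb/R^2$, which is the desired bound.

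A second, related slip: in your first stage you want the full slab $\{y_j\le y'\le y\}\cap\E'$ to have mass $\approx\Rb^2$, but that slab spans the full horizontal extent of $\E'$, which is $O(\sqrt n)$, so a slab of height $\Rb$ has mass $\approx\Rb\sqrt n\gg\Rb^2$. Your height-adjustment estimate $q^2/\Rb$ is therefore off; the correct perturbation is $q^2/O(\sqrt n)$, which is even smaller. In fact this whole stage is unnecessary: with the corrected second-stage rule above you never need the strip mass to be in $q^2\mn$, since you simply discard the last incomplete rectangle in each strip (along with all rectangles touching the $\Rb$-collar of $\p\E'$). Once you make this correction, the rest of your outline---discarding rectangles not contained in $\{d(\cdot,\p\E')>\Rb\}$ and checking they cover $\{d(\cdot,\p\E')>C_0\Rb\}$---goes through exactly as in the paper.
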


\begin{proof}
For each $j\in\mz$ we let
$$m_j(t) = \int_{x=-\infty}^t\int_{y=j\Rb}^{(j+1)\Rb} \bm'(x,y) \,dy\,dx.$$
Then each strip $\{j\Rb\le y < (j+1) \Rb\}$ is cut  into rectangles $[t_{ij},t_{(i+1),j}]\times[ j\Rb,(j+1) \Rb]$ where $t_{0j} = -\infty$ and
$$t_{i+1,j} = \min\{t\ge t_{ij}+\Rb : m_j(t_{ij}) \in q^2 \mn\}.$$
Since  by assumption \eqref{assumpV2} we have $\bm'(x)\in [\bw,\wb]$ for any $x\in \E'$,  it is not difficult to check that  if such a rectangle is included in $\E'$ then
$$ t_{ij}+\Rb\le t_{i+1,j} \le t_{ij} +\Rb + \frac{q^2}{\bw \Rb},$$
 and thus  its sidelengths are between  $\Rb$ and $\Rb+C\Rb/R^2$ since $\Rb/R^2 = 4 q^2/\Rb$.
We let $\cK_n$ be the set of rectangles of the form $[t_{ij},t_{(i+1),j}]\times[ j\Rb,(j+1) \Rb]$ which are included in  $\{x: d(x,\p \E')>\Rb\}$.  From the above,
it follows that these rectangles in fact cover the set $\{x: d(x,\p \E')>C\Rb\}$ for some $C>0$ independent of $R>1$, $q$. By construction each $K\in\cK_n$ is such that
$$\int_K \bm' = m_j(t_{(i+1),j}) - m_j(t_{ij})\in q^2\mn.$$
\end{proof}

The next lemma explains how to select a good subset of  $\Lp$.
\begin{lem}\label{GH}
Let $\tilde{P}$ be a translation invariant measure on $X$ such that $\tilde{P}$-a.e. $\j\in\mathcal A_1$ and $W(\j)<\infty$. Then for any  $\ep>0$, for any $R_\ep>0$, there exist subsets $H_\ep \subset G_\ep $ in $\Lp$ which are compact and such that
\begin{itemize}
\item[i)]  $\eta_0$ being given by Lemma~\ref{etavar} we have
 \begin{equation}\label{pgep} \tP(\E\times {G_\ep}^c) <\min({\eta_0}^2,\eta_0 \ep),\quad  \tP(\E\times  H_\ep^c) < \min(\eta_0,\ep).\end{equation}
\item[ii)] For every $\j\in H_\ep$, there exists $\Lj\subset K_{\wb R_\ep}$ such that
\begin{equation}\label{gamj}\text{$|\Lj|< C {R_\ep}^2 \eta_0$ and
 $ \lambda\notin \Lj\implies \theta_\lambda \j\in G_\ep.$}\end{equation}
\item[iii)] The convergence in the definition of $W(\j)$  is uniform w.r.t. $(x,\j)\in G_\ep$ and, writing $\div \j = 2\pi(\nu-1)$,
\begin{equation}\label{wunif}\text{$W(\j)$ and $\frac{\nu(K_R)}{R^2}$ are bounded uniformly w.r.t.  $(x,\j)\in G_\ep$ and  $R>1$.}\end{equation}
\item[iv)] We have
 \begin{multline}\label{ppseconde}  d_{\B}(P,P'') < C \ep(\lep+1),\quad\text{where}  \\
P'' = \int_{\E\times H_\ep}\frac1{\bm(x)|K_{R_\ep}|}  \int_{\sqrt{\bm(x)}K_{R_\ep}\sm\Lj} \delta_x\otimes\delta_{\sigma_{\bm(x)}\theta_\mu \j}\,d\mu\,d\tP(x,\j).\end{multline}
\end{itemize}
Moreover, there exists a partition of $H_\ep$ into $\cup_{i=1}^{N_\ep} H_\ep^i$  satisfying $\diam (H_\ep^i) <\eta_3$, where $\eta_3$ is such that
\begin{equation}\label{eta5}
\j \in H_\ep, \ d_p(\j,\j') <\eta_3 , \
m \in [\bw, \wb], \ \mu \in  \sqrt{\wb} K_{R_\ep}\backslash \Gamma(\j)
    \implies
d_{\B} (\delta_{\sigma_m  \theta_\mu \j}, \delta_{\sigma_m \theta_\mu \j'}) <\ep;\end{equation}
 and there exists  for all $i$,  $J_i \in H_\ep^i$ such that
 \begin{equation}\label{wjk} W(J_i) < \inf_{H_\ep^i} W +\ep.\end{equation}

\end{lem}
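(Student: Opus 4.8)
The plan is to build $G_\ep$ and $H_\ep$ as increasing exhaustions of the full-measure set on which $\j\in\mathcal A_1$, $W(\j)<\infty$, and the convergence in the definition of $W$ holds; then to extract compactness and the approximation statement from inner regularity of $\tP$ together with Egorov's theorem; and finally to produce the partition and the points $J_i$ by a routine covering argument on the compact set $H_\ep$. More precisely, I would first let $A=\{\j\in\mathcal A_1 : W(\j)<\infty, \ \lim_{R}W(\j,\chi_R)/|K_R|=W(\j)\}$, which carries full $\tP$-mass (here I use that $W$ is Borel and that the $\limsup$ in \eqref{Wroi} is in fact a limit $\tP$-a.e., which follows from translation invariance and the ergodic theorem applied as in \cite{ss1}). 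On $A$ consider the nested sets $A_k=\{\j: W(\j)\le k,\ \nu(K_R)\le kR^2 \ \forall R>1,\ \sup_{R\ge k}|W(\j,\chi_R)/|K_R| - W(\j)|\le 1/k\}$. These increase to $A$ up to a null set, so for $k$ large $\tP(\E\times A_k^c)$ is as small as we want; using inner regularity of the finite measure $\tP$ in the Polish space $X$, choose a compact $G_\ep\subset A_{k}$ with $\tP(\E\times G_\ep^c)<\min(\eta_0^2,\eta_0\ep)$. This gives \eqref{pgep} for $G_\ep$ and, by construction, \eqref{wunif} and the uniform convergence in item iii).

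Next I would produce $H_\ep$ inside $G_\ep$ so that, in addition, most translates of each $\j\in H_\ep$ by vectors in a fixed large box still land in $G_\ep$. The tool is a Fubini/Markov argument exploiting translation invariance of $\tP$: for fixed $R_\ep$, $\int_{\E\times G_\ep}\frac{1}{|K_{\wb R_\ep}|}\big|\{\mu\in K_{\wb R_\ep}:\theta_\mu\j\notin G_\ep\}\big|\,d\tP(x,\j)=\frac{1}{|K_{\wb R_\ep}|}\int_{K_{\wb R_\ep}}\tP(\E\times\theta_{-\mu}G_\ep^c)\,d\mu=\tP(\E\times G_\ep^c)$, which is tiny. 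Hence the set of $(x,\j)$ for which $|\{\mu\in K_{\wb R_\ep}:\theta_\mu\j\notin G_\ep\}|\ge C R_\ep^2\eta_0$ has small $\tP$-measure; discarding it and then shrinking to a compact subset gives $H_\ep$ with \eqref{pgep} for $H_\ep$ and the set $\Lj=\{\mu\in K_{\wb R_\ep}:\theta_\mu\j\notin G_\ep\}$ satisfying \eqref{gamj}. The approximation \eqref{ppseconde} is then obtained as follows: by translation invariance of $\tP$, the measure $\int_{\E\times X}\frac{1}{\bm(x)|K_{R_\ep}|}\int_{\sqrt{\bm(x)}K_{R_\ep}}\delta_x\otimes\delta_{\sigma_{\bm(x)}\theta_\mu\j}\,d\mu\,d\tP$ equals $P$ exactly (change of variables $\mu\mapsto \sigma$ and invariance under $T_{\lambda(x)}$); replacing $X$ by $H_\ep$ and removing $\Lj$ changes the total mass by at most $C(\tP(\E\times H_\ep^c)+R_\ep^2\eta_0/|K_{R_\ep}|)$, which after choosing $R_\ep$ and the thresholds appropriately is $<\min(\eta_0,\ep)$; Lemma~\ref{etavar} then upgrades this total-variation bound to $d_{\B}(P,P'')<C\ep(\lep+1)$, where the logarithmic factor accounts for the mismatch of normalizations exactly as in Lemma~\ref{etaconv}.

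Finally, for the partition: $H_\ep$ is a compact metric space for $d_p$, hence totally bounded, so for the $\eta_3$ furnished by the (uniform) continuity statement \eqref{eta5} — which itself follows because $(\j,m,\mu)\mapsto \sigma_m\theta_\mu\j$ is continuous from the compact set $H_\ep\times[\bw,\wb]\times \sqrt{\wb}K_{R_\ep}$ into $\Lp$, together with Lemma~\ref{etadir} — we may cover $H_\ep$ by finitely many $d_p$-balls of radius $<\eta_3$, and disjointify them to get a Borel partition $H_\ep=\cup_{i=1}^{N_\ep}H_\ep^i$ with $\diam(H_\ep^i)<\eta_3$. On each piece pick $J_i\in H_\ep^i$ with $W(J_i)<\inf_{H_\ep^i}W+\ep$, possible since $W$ is bounded on $H_\ep$. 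I expect the only genuinely delicate point to be getting \eqref{ppseconde} with the correct constants: one must be careful that $P''$, which a priori is only a sub-probability, is close to $P$ in total variation after the various truncations, and that the rescaling $\sigma_{\bm(x)}$ interacts correctly with the translation averages; everything else is soft measure theory (inner regularity, Egorov, Fubini with translation invariance) combined with the already-proved measurability of $W$.
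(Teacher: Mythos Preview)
Your proposal is correct and follows essentially the same route as the paper: inner regularity plus Egoroff (packaged via your sets $A_k$) to produce the compact $G_\ep$ with uniform convergence and bounds, the Fubini/Markov argument exploiting translation invariance of $\tP$ to produce $H_\ep$ and the exceptional sets $\Lj$, and a finite covering of the compact $H_\ep$ by $d_p$-balls to get the partition and the near-minimizers $J_i$.

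The one place where you package things differently is item (iv). You observe that the full $\mu$-average (over all of $\E\times X$ and all of $\sqrt{\bm(x)}K_{R_\ep}$) equals $P$ exactly, using $T_{\lambda(x)}$-invariance of $\tP$; the passage to $P''$ is then a pure total-variation truncation, and Lemma~\ref{etavar} alone gives $d_\B(P,P'')<\ep$ with no logarithmic loss. The paper instead uses only plain translation invariance: it shows $d_\B(P,P_\lambda)<\ep$ for each fixed $\lambda$ and then averages over $\lambda$, which is where the factor $C\ep(\lep+1)$ enters via the mechanism of Lemma~\ref{etaconv}. Your argument is cleaner but uses a hypothesis stronger than the one literally stated in the lemma (which only says ``translation invariant''); this is harmless since in the only application $\tP$ is indeed $T_{\lambda(x)}$-invariant. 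One small point to tighten in your sketch of \eqref{eta5}: the reason the restriction $\mu\notin\Lj$ is needed is precisely that it forces $\theta_\mu\j\in G_\ep$, which is the compact set on which Lemma~\ref{etadir} applies; the paper makes this explicit by first choosing an $\eta_4$ for the map $\sigma_m$ on $G_\ep$ and then an $\eta_3$ for the map $\theta_\mu$.
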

\ed{At this point, denoting $\tilde{Q}$ the projection of $\tilde{P
}$ under $\j\mapsto \frac{1}{2\pi} \div \j + 1$, we may always choose $J_i$ such that $W(J_i) <\inf_{H^i_\ep} \mathbb{W}+\ep$. }

\begin{proof}\mbox{}\medskip

\noindent
{\it Step 1:  Choice of $G_\ep$.}   Since $\Lp$ is Polish we can always find a compact set  $G_\ep$ satisfying \eqref{pgep} and  $P({G_\ep}^c)<\eta_0$. Then from Lemma~\ref{etavar},  $P\llcorner G_\ep$ (the restriction  of $P$ to $G_\ep$)  satisfies $d_{\B}(P,P\llcorner G_\ep)<\ep.$

From the translation invariance of $\tP$ and  for any $\lambda$, we have $\tP(\E\times \theta_\lambda G_\ep) >1-\eta_0$ and therefore  $d_{\B}(\tP, \tP\llcorner \theta_\lambda G_\ep)< \ep$. In view of \eqref{proba}, it follows that for any $\lambda\in\mr^2$ we have $\|P-P_\lambda\|<\eta_0$ and then $d_{\B}(P,P_\lambda)<\ep$, where
$$P_\lambda = \int_{\E\times \theta_\lambda G_\ep} \delta_x\otimes\delta_{\sigma_{\bm(x)} \j}\,d\tP(x,j) = \int_{\E\times G_\ep} \delta_x\otimes\delta_{\theta_\lambda\sigma_{\bm(x)} \j}\,d\tP(x,\j).$$
Then using Lemma~\ref{etaconv} we deduce that
if $A\subset \mr^2$ is any measurable set of positive measure, then
\begin{equation}\label{pprime} d_{\B}(P,P') < C \ep(\lep+1),\quad\text{where}  \quad P' = \int_{\E\times G_\ep}\dashint_A \delta_x\otimes\delta_{\theta_\lambda\sigma_{\bm(x)} \j}\,d\lambda\,d\tP(x,\j).\end{equation}

Moreover, since $P$ is $T_{\lambda(x)}$-invariant, choosing $\chi$ to be a smooth positive function with integral $1$ supported in $B(0,1)$, the ergodic theorem  (as in Section \ref{ergo} or see again \cite{becker}) ensures that  for $P$-almost every $(x,\j)$ the limit
$$\lim_{R\to +\infty} \frac1{|K_R|} \int_{K_R} W(\j(\lambda+\cdot),\chi(\lambda+\cdot))\,d\lambda$$
exists. Then   $\indic_{K_R} * \chi$ is  a family of functions  which satisfies \eqref{defchi} with respect to the family of squares $\{K_R\}_R$, and from the definition of the renormalized energy relative to $\{K_R\}_R$  we may rewrite the limit above as
\begin{equation}\label{cvw} W(\j) = \lim_{R\to +\infty} \frac1{|K_R|} W(\j,\indic_{K_R} * \chi). \end{equation}
By Egoroff's theorem we may choose the compact set $G_\ep$ above to be such that, in addition to \eqref{pprime}, the convergence in \eqref{cvw}  is uniform on $G_\ep$. In fact, since  $W(\j)<+\infty$ and $\limsup_R\nu(K_R)/R^2<+\infty$ for $P$-a.e. $(x,\j)$,  where $\div  \j = 2\pi(\nu-1)$, we may choose  $G_\ep$ such that \eqref{wunif} holds.

The arguments above show that the properties \eqref{pprime}, \eqref{wunif} can be satisfied for a compact set $G_\ep$ of measure arbitrarily close to $1$. We choose $G_\ep$ such that  \eqref{pgep} holds.

 The next difficulty we have to face is that $\theta_\lambda \j$ need not belong to $G_\ep$ if $\j$ does. \medskip

\noindent{\it Step 2:  Choice of $H_\ep$.}  For $\j\in G_\ep$, let $\Lj$ be the set of $\lambda$'s in $\sqrt\wb K_{R_\ep}$ such that $\theta_\lambda \j\not\in G_\ep$. Since, from \eqref{pgep} and the translation-invariance of $\tP$, for any $\lambda\in\mr^2$ we have $\tP(\E\times \theta_\lambda (G_\ep)^c)<{\eta_0}^2$, it follows from Fubini's theorem that
$$\int_{G_\ep} |\Lj|\,d\tP(x,\j) = \int_{\sqrt\wb K_{R_\ep}} \tP(\E\times(\theta_\lambda G_\ep)^c)\,d\lambda < 4 \wb{R_\ep}^2\min({\eta_0}^2,\eta_0 \ep).$$
Therefore, letting
\begin{equation}\label{defhep} H_\ep = \{\j\in G_\ep: |\Lj| <  4\wb {R_\ep}^2  \eta_0\}, \end{equation}  we have that
\eqref{pgep} holds.

Combining \eqref{pgep} and \eqref{defhep} with Lemma~\ref{etavar}, we deduce from \eqref{pprime} that  \eqref{ppseconde} holds,
where we have used the fact that $\theta_\lambda \sigma_m \j = \sigma_m\theta_{\sqrt m\lambda} \j$ to change the integration variable to $\mu = \sqrt{\bm(x)}\lambda$ in \eqref{pprime}. \medskip

\noindent{\it Step 3:  Choice of $J_1,\dots,J_{N_\ep}$.}  We use the fact that $G_\ep$ is relatively compact in $\Lp$, Lemma \ref{etadir},  and the fact that
$(m,\j)\mapsto \sigma_{m}  \j$ is continuous to find that  there exists $\eta_4>0$ such that for any $m\in [\bw,\wb]$ and any $\j \in G_\ep$ it holds that
\begin{equation}\label{eta4} d_p(\j,\j')<\eta_4\quad \implies\quad d_{\B}\(\delta_{\sigma_m \j},\delta_{\sigma_m \j'}\)<\ep. \end{equation}   Moreover, from the continuity of $(\mu,\j)\mapsto \theta_{\mu}  \j$, there exists $\eta_3>0$ such that
\begin{equation}\label{eta52} \j\in G_\ep, d_p(\j,\j')<\eta_3,\ \mu\in \sqrt\wb K_{R_\ep} \quad \implies\quad d_p\(\theta_\mu \j,\theta_\mu \j'\)<\eta_4. \end{equation}
If $\j \in H_\ep$ and  $\mu \in K \backslash \Lj $ then $\theta_\mu \j \in G_\ep $ hence applying  \eqref{eta4},  we get \eqref{eta5}.

Now we cover the relatively compact set $H_\ep$ by a finite number of balls $B_1,\dots, B_{N_\ep}$ of radius $\eta_3/2$, derive from it a partition of $H_\ep$ by sets with diameter less than $\eta_3$, by letting $H^1_\ep = B_1\cap H_\ep$ and
$$H^{i+1}_\ep = B_{i+1}\cap H_\ep\sm \(B_1\cup\dots\cup B_i\).$$
we then have
\begin{equation}\label{partition} H_\ep = \bigcup_{i=1}^{N_\ep} H^i_\ep,\quad \diam(H^i_\ep)<\eta_3,\end{equation}
where the union is disjoint.
Then we may  choose $J_i\in H_\ep^i$ such that  \eqref{wjk} holds.
\end{proof}
\subsection{Completing the construction}
{\it Step 1: Choice of $R_\ep$.}
We apply Proposition~\ref{tronque} with $G = G_\ep$ and   $\sqrt m R$, where $m\in[\bw,\wb]$. The proposition yields $\eta_2>0$,  $R_\ep>1$ such that for any  $\j\in G_\ep$ and  any $m\in[\bw,\wb]$ and  any rectangle $L$ centered at $0$ with sidelengths in $[2\sqrt m R_\ep, 2\sqrt m R_\ep(1+\eta_2)]$, \eqref{restronque} is satisfied for some $\j_L$, with $R$ replaced by $\sqrt m R_\ep$.  The reason for including  $\sqrt m$ is that we will need to scale the construction to account for the varying weight $\bm(x)$.

Since our  rectangles will be obtained from Lemma~\ref{rect} and we wish to use the approximation by $\j_L$ in them,  we choose $R_\ep$ large enough so that
if  $m\in[\bw,\wb]$ and $L$ is a rectangle centered at zero with sidelengths in $[2\sqrt m R_\ep, 2\sqrt m R_\ep(1+C_0/{R_\ep}^2)]$ then
\begin{equation}\label{aspratio} \frac{C_0}{R_\ep^2} < \eta_2,\quad \frac {C_1}{{R_\ep}^2} <\eta_0,   \quad K_{\sqrt m R_\ep(1-\eta_0)}\subset \{x: d(x,L^c)> {\sqrt m R_\ep}^{\tq}\}\subset K_{\sqrt m R_\ep(1+\eta_0)}, \end{equation}
where $C_0$ is the constant in Lemma~\ref{rect},  $C_1\ge 1$ is to be determined later,  and $\eta_0$ is the constant in Lemma~\ref{etavar}.

 If $\lambda\in K_{\sqrt m R_\ep(1-\eta_0)}$ and  since $\j=\j_L$ if $d(x,L^c)> {\sqrt m R_\ep}^{\tq}$, we deduce from \eqref{aspratio} that  $\theta_\lambda \j_L = \theta_\lambda \j$ in $B(0, {\sqrt m R_\ep}^{\tq})$,  so that  from the definition of $d_p$,  taking  $R_\ep$ larger if necessary,
\begin{equation}\label{proxitronque}  \forall \j\in G_\ep,m\in [\bw,\wb],\lambda\in K_{\sqrt m R_\ep(1-\eta_0)},\quad d_p(\theta_\lambda\sigma_m \j, \theta_\lambda \sigma_m \j_L) < \frac{\eta_1}{2},\end{equation}
where $\eta_1$ comes from Lemma~\ref{etadir} applied on $\{\sigma_m \j: m\in [\bw,\wb], \j\in G_\ep\}$, i.e.  is such that
\begin{equation}\label{eta1}
\text{$m\in  [\bw,\wb]$,  $\j\in G_\ep,\j'\in\Lp$ and $d_p(\j,\j')<\eta_1$} \implies d_{\B}(\delta_{\sigma_m \j},\delta_{\sigma_m \j'})<\ep.\end{equation}\medskip

\noindent{\it Step 2:  Choice of $q_\ep$ and the  rectangles.} We choose an integer $q_\ep$ large enough so that
\begin{equation}\label{qep} \frac {N_\ep}{C_1 {q_\ep}^2} < \eta_0,\qquad \frac{N_\ep}{{q_\ep}^2}\times \max_{\substack{0\le i\le N_\ep\\ \bw\le m\le \wb}} W_K(\sigma_m J_i)< \ep\end{equation}
where $C_1>1$ is to be determined later.
We apply Lemma~\ref{rect} with $R_\ep$, $q_\ep$ and $N_\ep$ to obtain for any $n$ a collection $\cK_n$ of rectangles (we omit to mention the $\ep$ dependence) which cover most of $\E'$, and we also apply Lemma \ref{GH}.
 We rewrite $P''$ given by \eqref{ppseconde} as
\begin{equation}\label{psecondebis}P'' = \sum_{K\in\cK_n} \int_{\frac{K}{\sqrt n}\times H_\ep} \frac1{\bm(x)|K_{R_\ep}|}\int_{\sqrt{\bm(x)}K_{R_\ep}\sm\Lj} \delta_x\otimes\delta_{\sigma_{\bm(x)}\theta_\mu \j}\,d\mu\,d\tP(x,\j).\end{equation}
Now we  claim that if $n$ is large enough and $x\in K/\sqrt n$, $\j\in H_\ep^i$, $\mu\in \sqrt{\bm(x)}K_{R_\ep}\sm\Lj$, then
\begin{equation}\label{crucial} d_{\B}\(\delta_x\otimes\delta_{\sigma_{\bm(x)}\theta_\mu \j}, \delta_{x_K} \otimes\delta_{\sigma_{m_K}\theta_\mu J_i}\)<2\ep,\end{equation}
where $x_K$ is the center of $K/\sqrt n$ and $m_K$ is the average of $\bm$ over $K/\sqrt m$. Indeed, since $\bm$ is $C^1$ we have $|x-x_K|<C/\sqrt n$, $|\bm(x) - m_K|<C/\sqrt n$ thus if $n$ is large enough, since $\theta_\mu \j\in G_\ep$ we find
$$ d_{\B}\(\delta_x\otimes\delta_{\sigma_{\bm(x)}\theta_\mu \j}, \delta_{x_K}\otimes\delta_{\sigma_{m_K}\theta_\mu \j}\)<\ep.$$
 Moreover, since $d_p(\j,J_i)<\eta_3$, we deduce  from \eqref{eta5}  that
$$ d_{\B}\(\delta_{x_K}\otimes\delta_{\sigma_{m_K}\theta_\mu \j}, \delta_{x_K}\otimes\delta_{\sigma_{m_K}\theta_\mu J_i}\)<\ep,$$
which together with the previous estimate proves \eqref{crucial}.

Using \eqref{crucial} together with  Lemmas~\ref{etadir}, \ref{etavar},  and \eqref{gamj}, we deduce from \eqref{ppseconde} and  \eqref{psecondebis} that $d_{\B}(P,P''')< C\ep(\lep +1)$, where
\begin{equation}\label{p3}\begin{split} P''' &= \sum_{\substack{K\in\cK_n\\1\le i\le N_\ep}} \int_{\frac{K}{\sqrt n}\times H_\ep^i} \dashint_{\sqrt{m_K}K_{R_\ep}} \delta_{x_K}\otimes\delta_{\sigma_{m_K}\theta_\mu J_i}\,d\mu\,d\tP(x,\j)\\ & =
\sum_{\substack{K\in\cK_n\\1\le i\le N_\ep}} p_{i,K}\dashint_{\sqrt{m_K}K_{R_\ep}} \delta_{x_K}\otimes\delta_{\sigma_{m_K}\theta_\mu J_i}\,d\mu,\end{split}\end{equation}
where
\begin{equation}\label{pkk} p_{i,K} = \tP\(\frac{K}{\sqrt n}\times H_\ep^i\).\end{equation}\medskip

\noindent{\it Step 3: Choice of subrectangles and vector field $\j_n$.}
We now replace $p_{i,K}$ in the definition \eqref{p3}  by
\begin{equation}\label{nkk} \frac{|K|}{{q_\ep}^2|\E'|} n_{i,K},\quad\text{where}\quad n_{i,K} = \left[\frac{{q_\ep}^2|\E'|}{|K|}p_{i,K}\right].\end{equation}
We have, since  $\tP(\frac K{\sqrt n}\times\Lp) = |K|/|\E'|$,
\begin{equation}\label{pnkk}\sum_{k=1}^{N_\ep} n_{i,K}\le \frac{{q_\ep}^2|\E'|}{|K|}\tP\(\frac{K}{\sqrt n}\times\Lp\) = {q_\ep}^2\end{equation}
and
\begin{equation*}
\left|\frac{|K_{R_\ep|}}{|\E'|} n_{i,K} - p_{i,K}\right| < C
 \(\frac{|K|}{{q_\ep}^2|E'|}  + \frac{n_{i,K}   }{R_\ep^2 |\E'|}\).
  \end{equation*}  Summing with respect to $i $ and $K$, using the facts that
  $\sum_{K\in \cK_n}       |K| < |\E'|$, \eqref{pnkk}, and the fact that the cardinal of $\cK_n$ is $\frac{|\E'|}{4q_\ep^2 R_\ep^2} $,  we find
  \begin{equation*}
\sum_{1\le i \le N_\ep, K\in \cK_n}
\left|\frac{|K_{R_\ep|}}{|\E'|} n_{i,K} - p_{i,K}\right|<  C\( \frac{N_\ep}{{q_\ep}^2} + \frac{1}{R_\ep^4}\).\end{equation*}
    We may always choose $C_1$ large enough in \eqref{aspratio} and \eqref{qep} so that the right-hand side is $<\eta_0$.
     Then Lemma~\ref{etavar} implies that $d_{\B}(P,P^{(4)})< C\ep(\lep +1)$ is still true after replacing $p_{i,K}$ by $\frac{|K_{R_\ep}|}{|\E'|} n_{i,K}$ in \eqref{p3}, i.e. where
\begin{equation}\label{p3bis} P^{(4)} =
\frac 1{|\E'|}\sum_{\substack{K\in\cK_n\\1\le i\le N_\ep}} \frac{n_{i,K}}{m_K}\int_{\sqrt{m_K}K_{R_\ep}} \delta_{x_K}\otimes\delta_{\sigma_{m_K}\theta_\mu J_i}\,d\mu.\end{equation}

Next, we divide each $K\in\cK_n$ into  a collection $\LK$ of ${q_\ep}^2$ identical subrectangles in the obvious way and  we partition $\LK$ into collections $\LKk$, $0\le i\le N_\ep$ such that if $k\ge 1$ then $\LKk$ contains $n_{i, K}$ subrectangles. This is clearly possible from \eqref{pnkk}. If the inequality is strict we put the extra subrectangles in $\LKo$, there will be $n_{0,K}$ of them and then
\begin{equation}\label{sumnkk}\sum_{k=0}^{N_\ep} n_{k,K} = {q_\ep}^2.\end{equation}
We rewrite \eqref{p3bis} as
\begin{equation}\label{p3ter} P^{(4)}=
\frac 1{|\E'|}\sum_{\substack{K\in\cK_n\\1\le i\le N_\ep\\ \tilde L\in\LKk}} \frac{1}{m_K}\int_{\sqrt{m_K}K_{R_\ep}} \delta_{x_K}\otimes\delta_{\sigma_{m_K}\theta_\mu J_i}\,d\mu.\end{equation}

Now, for $\tilde L\in \LKk$, let $L = \sqrt{m_K}(\tilde L - x_{\tilde L})$, where $x_{\tilde L}$ denotes the center of $\tilde L$. From Lemma~\ref{rect}, a rectangle $K\in\cK_n$ has sidelengths between $2q_\ep R_\ep$ and $2q_\ep R_\ep(1+C_0/{R_\ep}^2)$. Therefore $L$ is a rectangle centered at zero with sidelengths between $2\sqrt{m_K} R_\ep$ and $2\sqrt{m_K} R_\ep(1+C_0/{R_\ep}^2)$,  and  \eqref{proxitronque} holds.

This, and the results of Lemma \ref{GH}, allow us to apply Proposition~\ref{tronque} on $L$ to any $J_i$, $1\le i\le N_\ep$. Note that $|L|\in\mn$ follows from the fact that
$$|L| = m_K |\tilde L| = \dashint_K \bm' \frac{|K|}{{q_\ep}^2} = \frac{1}{{q_\ep}^2} \int_K \bm'$$
and \eqref{entier}. In this way, we define currents $J_{i,L}$ which satisfy \eqref{restronque} and \eqref{proxitronque}. We claim that,  as a  consequence of the latter, we have
\begin{equation}\label{claimtronque}\text{$\j' = J_{i,L}$ on $L$} \implies d_{\B}\(\dashint_{\sqrt{m_K}K_{R_\ep}} \delta_{x_K}\otimes\delta_{\sigma_{m_K}\theta_\mu J_i}\,d\mu,\frac1{m_K|K_{R_\ep}|}\int_L \delta_{x_K}\otimes\delta_{\sigma_{m_K}\theta_\mu \j'}\,d\mu\) < C\ep.\end{equation}
This goes as follows:
(i)  Using Lemma~\ref{etavar} and \eqref{gamj},\eqref{aspratio},  we find that integrating on $\sqrt{m_K}K_{(1-\eta_0)R_\ep}\sm \Lji$  instead of $\sqrt{m_K}K_{R_\ep}$ and $L$ induces an error of $C\ep$.
(ii) From \eqref{proxitronque}, and \eqref{eta1} applied to $\theta_\mu J_i$ and  $\theta_\mu \j'$ we have $d_{\B}(\delta_{\theta_\mu J_i}, \delta_{\theta_\mu \j'}) < \ep$ and thus in view of Lemma~\ref{etaconv} we may replace $\theta_\mu J_i$ by $\theta_\mu \j'$ in the integral with  an error of $C\ep\lep$ at most. (iii) Using \eqref{aspratio}, \eqref{gamj} and Lemma~\ref{etavar} again, we may integrate back on $\sqrt{m_K}K_{R_\ep}$ and $L$ rather than on $K_{(1-\eta_0)R_\ep}\sm\Lji$, with an additional error of $C\ep$. this proves \eqref{claimtronque}.

Combining  \eqref{claimtronque} with \eqref{p3ter} and $d_{\B}(P,P^{(4)})< C\ep(\lep +1)$, using Lemma~\ref{etaconv} we find $d_{\B}(P,P^{(5)})< C\ep(\lep +1)$, where
\begin{equation}\label{PQ} P^{(5)} = \frac 1{|\E'|}\sum_{\substack{K\in\cK_n\\1\le i\le N_\ep\\ \tilde L\in\LKk}}\frac{1}{m_K}\int_{L} \delta_{x_K}\otimes\delta_{\sigma_{m_K}\theta_\mu \tilde J_{i,L}}\,d\mu= \frac 1{|\E'|}\sum_{\substack{K\in\cK_n\\1\le i\le N_\ep\\ \tilde L\in\LKk}}\int_{L/{\sqrt {m_K}}} \delta_{x_K}\otimes\delta_{\theta_\lambda \sigma_{m_K} \tilde J_{i,L}}\,d\lambda, \end{equation}
where the last equality follows by changing variables to $\lambda = \mu/\sqrt m_K$, and where $\tilde J_{i,L}$ denotes an arbitrarily chosen element of $\Lp$ such that  $\tilde J_{i,L} = J_{i,L}$ on $L$, the constant $C$ being independent of this choice.

If we choose an arbitrary $J_0$ in $\mathcal A_1$ and let the sum in \eqref{PQ} range over $0\le i\le N_\ep$ instead of $1\le i\le N_\ep$
 we obtain a measure $P^{(6)}$ such that, by \eqref{qep},
 $$\|P^{(5)} - P^{(6)}\|\le \frac 1{|\E'|}\sum_{K\in\cK_n} \frac{N_\ep |K|}{{q_\ep}^2}\le \eta_0,$$
 hence using Lemma~\ref{etavar} we have $d_{\B}(P^{(5)},P^{(6)})<\ep$ and then $d_{\B}(P,P^{(6)})< C\ep(\lep +1)$.

We now define the vector field $\jnint:\mr^2\to\mr^2$  by letting  $\jnint(x) = \sigma_{m_K} J_{i,L} (x-x_{\tilde L})$ on $\tilde L = x_{\tilde L} + L/\sqrt{ m_K}$, for every $K\in \cK_n$, $0\le i\le N_\ep$ and $\tilde L\in \LKk$. Then, for every $L\in\LKk$ we have $\jnint(x_{\tilde L}+\cdot) = \sigma_{m_K}J_{i,L}$ on $\tilde L$, therefore we may choose $\tilde J_{i,L} = \sigma_{1/m_K}\jnint(x_{\tilde L}+\cdot)$ in \eqref{PQ} and then then  we may summarize the above by writing
\begin{equation}\label{resume} d_{\B}(P,P^{(6)})< C\ep(\lep +1), \quad P^{(6)} = \frac 1{|\E'|}\sum_{K\in\cK_n}\int_{K} \delta_{x_K}\otimes\delta_{\theta_\lambda \jnint}\,d\lambda.\end{equation}
Note that since $J_{i,L} = 0$ outside $L$, we also have
\begin{equation}\label{jnint} \jnint=\sum_{\substack{K\in\cK_n\\1\le i\le N_\ep\\ \tilde L\in\LKk}}\sigma_{m_K} J_{i,L} (\cdot-x_{\tilde L}),\quad \div \jnint = 2\pi \sum_{\substack{K\in\cK_n\\p\in\Lambda_K}}(\delta_p - m_K),\end{equation}
where $\Lambda_K$ is a finite subset of the interior of $K$. The second equation is satisfied in the sense of distributions on $\mr^2$.\medskip

\noindent{\it Step 4:  Treating  the boundary.} Let $\hat \E' := \E'\sm\cup_{K\in\cK_n} K$. We let $t\in[0,\ell\sqrt n]$ denote arclength on $\p \E'$ --- where $\ell$ is the length of $\p \E$ --- and $s$ denote the distance to $\p \E'$, so that $(t,s)$ is a $C^1$ coordinate system on $\{x\in \E': d(x,(\E')^c)<c\sqrt n\}$,  if $c>0$ is small enough, since the boundary of $\E$  is $C^1$ by \eqref{assumpV3}. We let $C_t$ denote the curvilinear rectangle of points with coordinates in $[0,t]\times [0, C\Rb_\ep]$, where $\Rb_\ep = q_\ep R_\ep$ and $C$ is large enough so that $\hat \E'\subset \{x\in \E': d(x,\p\E')<C\Rb_\ep\}$, and define $m(t) = \int_{C_t\cap\hat \E'} \bm'$.
Since the distance of $\cup_{K\in\cK_n} K$ to a given $x\in\p \E'$ is between $\Rb_\ep$ and $C_0\Rb_\ep$ from Lemma~\ref{rect} and since  $\bm' $ is bounded above and below by \eqref{bornmo}, the derivative of $t\mapsto m(t)$ is between $\Rb_\ep/C$ and $C\Rb_\ep$ for some $C>0$ large enough.

We let \begin{equation}\label{keps} k_\ep = \left[\frac{\ell\sqrt n}{\Rb_\ep}\right]\end{equation} and  choose $0 = t_0,\dots,t_{k_\ep} =\ell\sqrt n$ to be such that
$$m(t_l) = \left[\frac l {k_\ep} m(\ell\sqrt n)\right] .$$
We note that indeed $t_{k_\ep} = \ell\sqrt n$~: Since the integral of $\bm'$ on each square $K\in\cK_n$ is an integer as well as the integral on $\E'$,  we have  $\int_{\hat \E'}\bm'\in\mn$ and therefore $m(\ell\sqrt n)\in\mn$.

From the above remark about the derivative of $t\to m(t)$, we deduce that  $\frac{m(\ell\sqrt n)}{\ell\sqrt n}$ belongs to the interval $[\Rb_\ep/C,C\Rb_\ep]$ for some $C>0$ and then it is easy to deduce that if $\sqrt n$ is large enough compared to $\Rb_\ep$ then
$$ n_l := m(t_{l+1}) - m(t_l)\in \left[{\Rb_\ep}^2/C, C{\Rb_\ep}^2\right], \quad  t_{l+1}- t_l  \in [\Rb_\ep/C,C\Rb_\ep]. $$
This means that the sidelengths of the curvilinear rectangle $C_{t_{l+1}}\sm C_{t_l}$ are comparable to $\Rb_\ep$, and that the number of points $n_l$ to put there  in is of order ${\Rb_\ep}^2$.

We may then include each of the sets $K_l :=\hat \E'\cap( C_{t_{l+1}}\sm C_{t_l})$ in a ball $B_l$ with radius in $[\Rb_\ep/C,C\Rb_\ep]$ and we may also choose a set of  $n_l$  points $\Lambda_l$  which are at distance at least $1/C$ from each other and the complement of $K_l$. Let $\j_l = -\nab H$, where $H$ solves $-\Delta H = 2\pi(\sum_{p\in\Lambda_l}\delta_p - m_l)$ in $B_l$ and $\nab H \cdot \vec{\nu} = 0$ on $\p B_l$, where
$$m_l= \frac{n_l}{|K_l|}\indic_{K_l}.$$
Then we have  $\div \j_l =2\pi(\sum_{p\in\Lambda_l}\delta_p - m_l)$ in $ B_l$ and $ \j_l \cdot \vec{\nu} = 0$ on $\p B_l$ and we claim that for any  $q\ge 1$,
\begin{equation}\label{wi} W(\j_l,\indic_{B_l}) \le C_{\ep},\quad \|\j_l\|_{L^q(B_l\sm K_l)} \le C_{\ep,q},\end{equation}
where the constants do not depend on $n$, but do depend on $\ep$ through $\Rb_\ep$. This is proved by noting that these quantities are finite, and that a compactness argument shows that the bound is uniform  for any choice of points which are at distance at least $1/C$ from each other and the complement of some $K_l\subset B_l$, using for instance the explicit formulas for $W$ in \cite{lr}.
 Note that because the sets $\{K_l\}$ and the rectangles $\{K\}$ are disjoint, have measure between ${\Rb_\ep}^2/C$ and $C{\Rb_\ep}^2$ and diameter between ${\Rb_\ep}/C$ and $C{\Rb_\ep}$, we know that their overlap is bounded by a constant $C$ independent of $\ep$, $n$. \medskip

\noindent{\it Step 5: Rectification of the weight.} We rectify the weights $m_K$, $m_l$:  For $K\in\cK_n$ we let  $H_K$ solve $-\Delta H_K = 2\pi (\bm' - m_K)$ on $K$ and $\nab  H_K\cdot \vec{\nu}  = 0$ on $\p K$. Similarly we let $H_l$ solve
$-\Delta H_l= 2\pi (\bm'\indic_{K_l}- m_l) $, $\nab H_l\cdot \vec{\nu} =0$. By elliptic regularity, we deduce for any $q>1$ that $\|\nab H_K\|_{L^q(K)}$  (resp. $\|\nab H_l\|_{L^q(B_l)}$) is bounded by $C_{q,\ep}  \|\bm' - m_K\|_{L^\infty(K)}$ (resp. $C_{q,\ep} \|\bm' - m_l\|_{L^\infty(B_l)}$.) Since $\bm$ is $C^1$ we have $|\nab \bm'|\le C/\sqrt n$, therefore $\|\bm' - m_K\|_{L^\infty(K)}\le C\Rb_\ep/\sqrt n$, while   $\|\bm' - m_l\|_{L^\infty(B_l)}\le C$. We deduce that
\begin{equation}\label{hik} \|\nab H_K\|_{L^q}\le \frac{C_{q,\ep}}{\sqrt n},\quad \|\nab H_l\|_{L^q}\le C_{q,\ep}.\end{equation}

We let
\begin{equation}\label{jk} \j_K = \jnint_{|K},\end{equation}
 and \begin{multline}\label{defjn} \j_n = \jnint + \sum_{K\in\cK_n} -\nab H_K  + \sum_{i=1}^{k_\ep}  - \nab H_l = \sum_{K \in \cK_n} \j_K - \nab H_K +\sum_{l=1}^{k_\ep} \j_l - \nab H_l,
 \\ \Lambda_n = \cup_{K\in\cK} \Lambda_K\cup_{l=1}^{k_\ep} \Lambda_l,\end{multline}
where $\j_K$ and $\nab H_K$ are set to $0$ outside $K$ and similarly for $\j_l$, $\nab H_l$ outside $B_l$. Then $\div \j_n = 2\pi(\sum_{p\in\Lambda_n}\delta_p - \bm')$ in $\mr^2$.
This completes the construction of $\j_n$.

\subsection{Estimating the energy}
{\it Step 1: Energy estimate.}
We have
$$W(\j_K,\indic_K) = \sum_{\substack{0\le i\le N_\ep\\ \tilde L\in\LKk}} W(\sigma_{m_K} J_{i,L} (\cdot-x_{\tilde L}),\tilde L).$$
From \eqref{restronque} we find, letting $L = \sqrt {m_K} (\tilde L - x_{\tilde L})$, using \eqref{sumnkk} and $|L| = |K|/{q_\ep}^2$, that
\begin{equation}\label{wkk}W(\j_K,\indic_K)= \sum_{\substack{0\le i\le N_\ep\\ \tilde L\in\LKk}} W(\sigma_{m_K} J_{i,L},\indic_{\tilde L - x_{\tilde L}}) \le |K|\( \sum_{i=0}^{N_\ep} \frac{n_{i,K}}{{q_\ep}^2} W(\sigma_{m_K} J_i) +C \ep\).\end{equation}

We estimate the integral of $|\j_n|^2$ on $\mr^2\sm \cup_{p\in\Lambda_n} B(p,\eta)$. From \eqref{defjn}, this integral involves on  the one hand the square terms
\begin{equation}\label{sqterms} \sum_{l=1}^{k_\ep} \int_{(B_i)_\eta}|\j_l - \nab H_l|^2+ \sum_{K\in\cK_n}\int_{K_\eta}|\j_K -\nab H_K|^2 ,\end{equation}
where $K_\eta = K\sm \cup_{p\in \Lambda_n} B(p,\eta)$ and similarly for $(B_l)_\eta$, and on the other hand the rectangle terms
$$ \sum_{\substack{K,K'\in\cK_n\\ K\neq K'}}\int_{K_\eta\cap K'_\eta} (\j_K -\nab H_K)\cdot(\j_{K'} - \nab  H_{K'}) +\sum_{1\le l\neq i\le k_\ep}\dots + \sum_{\substack{K\in\cK_n\\ 1\le l\le k_\ep}}\dots $$
We estimate the latter as follows: Since the rectangles in $\cK_n$ do not overlap, the first sum is equal to zero. A nonzero rectangle term must involve some $B_l$, and moreover a given $B_l$ can only be present in a  number of terms bounded independently of $n,\ep$ because the overlap of the balls $B_l$ and the rectangles $K$ is bounded. Thus from \eqref{keps} we have at most $C \sqrt n / \Rb_\ep$ nonzero rectangle terms. Moreover, since the $K_l$'s are disjoint, and disjoint from the $K$'s, in a rectangle term involving $B_l\cap K$ the integral can be taken over $K\sm K_l$, and in a term involving $B_l\cap B_i$ it can be taken over $(B_l\cap B_i\sm K_i) \cup (B_i\cap B_l\sm K_l)$.

In any case we use H\"older's inequality and the bound $\|\j_l-\nab H_l\|_{L^q(B_l\sm K_l)} \le C_{\ep,q}$ for some $q>2$, which follows from \eqref{wi}, \eqref{hik},  together with the bound
$$\|\j_l-\nab H_l\|_{L^{q'}(B_l)}, \ \|\j_K-\nab H_K\|_{L^{q'}(K)} \le C_{\ep,q},$$
which follows from \eqref{wkk}, \eqref{wi} using \ed{Lemma~\ref{lemnew}},  to conclude that each rectangle term is bounded by $C_{\ep}$  and then that their sum is $O(\sqrt n)$, meaning a quantity bounded by a constant depending on $\ep$  times $\sqrt n$.

The  limit as $\eta\to 0$ of the terms in \eqref{sqterms} is estimated as above  by expanding the squares and using H\"older's inequality with  \eqref{hik}, \eqref{wi}, \eqref{wkk}, together with the bound \eqref{keps} to show that
\begin{multline*}
\lim_{\eta\to 0} \frac12 \( \sum_{l=1}^{k_\ep} \int_{(B_l)_\eta}|\j_l - \nab H_l|^2+ \sum_{K\in\cK_n}\int_{K_\eta}|\j_K -\nab H_K|^2+\pi \#\Lambda_n \log \eta\) \\ \le  \sum_{K\in\cK_n} W(\j_K,\indic_K) + O(\sqrt n). \end{multline*}
In view of the bound $O(\sqrt n)$ for the rectangle terms and \eqref{wkk} we find using \eqref{sumnkk} that
\begin{equation}\label{caco} W(\j_n,\indic_{\mr^2}) \le \sum_{\substack{K\in\cK_n\\ 0\le i\le N_\ep}} |K|\frac{n_{i,K}}{{q_\ep}^2} W(\sigma_{m_K} J_i)  +  C n \ep + O(\sqrt n).\end{equation}

\noindent
{\it Step 2: We proceed to estimating $W(\j_n,\indic_{\mr^2})$.} We have, using \eqref{nkk}, \eqref{pkk}, \eqref{qep}, then  the fact that $\bm' - m_K\le C\Rb_\ep/\sqrt n$ on $K$,  then \eqref{wjk} with \eqref{minalai1}, then \eqref{wjk} and finally \eqref{proba}, that
\begin{align}
\sum_{i=1}^{N_\ep} \frac{|K| n_{i,K}}{{q_\ep}^2} & W(\sigma_{m_K} J_i)   \nonumber
\le  |\E'| \sum_{i=1}^{N_\ep} \tP\(\frac{K}{\sqrt n}\times H_\ep^i\) W(\sigma_{m_K} J_i)+ |K|\ep \nonumber\\
&\le   |\E'|  \sum_{i=1}^{N_\ep} \int_{\frac{K}{\sqrt n}\times H_\ep^i} W(\sigma_{\bm'(x)}J_i)\,d\tP(x,\j) + |K|\(\frac{C}{\sqrt n}+\ep \) \nonumber\\
&\le |\E'| \sum_{i=1}^{N_\ep} \int_{\frac{K}{\sqrt n}\times H_\ep^i} W(\sigma_{\bm'(x)}\j)\,d\tP(x,\j) + |K|\(\frac{C}{\sqrt n}+C\ep\) \nonumber\\
& \label{ee} =|\E'| \int_{\frac K{\sqrt n}\times H_\ep} W(\j)\,d P(x,\j) + |K|\( \frac{C}{\sqrt n}+C\ep\).
\end{align}
Here we have used the fact that $W$ is bounded below by some (negative) constant, a fact proved in \cite{ss1} that we use below several times.

We proceed by estimating $n_{0,K}$. From \eqref{nkk} we deduce that
$$ \sum_{i=1}^{N_\ep} (n_{i,K} +1) \ge \frac{{q_\ep}^2 |\E'|}{|K|} \tP\(\frac K{\sqrt n}\times H_\ep\) \ge \frac{{q_\ep}^2 |\E'|}{|K|} \(\frac{|K|}{|\E'|}-\tP\(\frac K{\sqrt n}\times {H_\ep}^c\)\),$$
and then it follows from \eqref{sumnkk} that
$$ n_{0,K}= {q_\ep}^2- \sum_{i=1}^{N_\ep} n_{i, K}   \le N_\ep +  \frac{{q_\ep}^2 |\E'|}{|K|} \tP\(\frac K{\sqrt n}\times {H_\ep}^c\).$$
Summing over $K\in\cK_n$, using the fact that
\begin{equation}\label{bord}|\E'\sm \cup_{\cK_n} K|< C_\ep \sqrt n\end{equation}
and then \eqref{qep}, \eqref{pgep}, we find that
$$ \sum_{K\in\cK}\frac{|K|}{{q_\ep}^2} n_{0,K}    W(\sigma_{m_K} J_0)
\le C |\E'|\(\tilde P(\E\times {H_\ep}^c) +\frac1{\sqrt n}+ \ep\) \le C n \(\frac{C_\ep}{\sqrt n}+\ep\).$$

Summing \eqref{ee} with respect to $K\in\cK_n$ and adding the above estimate we find, in view of \eqref{bord}, \eqref{caco}  and \eqref{pgep}, that
\begin{equation}\label{finup} W(\j_n,\indic_{\mr^2}) \le n |\E| \int_{\E\times L^p_\loc} W (\j)\,dP(x,\j) + Cn \(\ep + \frac{C_\ep}{\sqrt n}\).\end{equation}

\ed{Note that at this point if we had chosen $J_i$ such that $W(J_i)<\inf_{H^i_\ep} \mathbb{W} + \ep$, we obtain
$$W(\j_n, \indic_{\mr^2}) \le n |\E| \int_{\E \times \mathcal{M}_+
} \mathbb{W}(\nu) \, dQ(x, \nu) + C n \(\frac{C_\ep}{\sqrt n}+\ep\).$$
}
\noindent
{\it Step 3: Energy bound for $(x_1,\dots,x_n)$.} From \eqref{finup},  the constructed fields $\{\j_n\}$ and points $\{\Lambda_n\}_n$ satisfy $\div \j_n = 2\pi\(\sum_{p\in\Lambda_n}\delta_p - \bm'\)$ in $\mr^2$ with $\#\Lambda_n = n$ (cf. item (iii) in Proposition \ref{tronque}) and
\begin{equation}\label{wje}\limsup_n \frac{W(\j_n,\indic_{\mr^2})}{n} \le |\E|\int W(\j)\,dP(x,\j) +C\ep.\end{equation}

Now let $\{x_i\}_i = \{p/\sqrt n\}_{p\in\Lambda_n}$ be the points in $\Lambda_n$ in the initial scale,  and let still  $\nu_n = \sum_i\delta_{x_i}$.
The next step is to show that modifying $\j_n$ to make it curl-free can only decrease its energy. To see that, defining $H_n'$ by  \eqref{Hp}, we have that  $-\Delta H_n' = \div \j_n$ and we  may thus write  $\j_n = - \nab H_n'+ \np f_n$ for some function $f_n$.
 But $\j_n=0$ outside of $\E$, by construction, while $H_n'$ decays fast at infinity by its definition \eqref{Hp} and the fact that the right-hand side of \eqref{Hp} has integral $0$.
Letting $U_\eta=  \cup_{p \in \Lambda_n}  B(p, \eta) $,  we first have
\begin{multline*}\int_{B_R \backslash U_\eta } |\np f_n- \nab H_n'|^2 - \int_{B_R \backslash U_\eta}
 |\nab H_n'|^2  = - 2\int_{B_R \backslash U_\eta }
 \np f_n \cdot \nab H_n' + \int_{B_R \backslash U_\eta}
 |\nab f_n|^2\\ \ge -2\int_{B_R \backslash U_\eta }
 \np f_n \cdot \nab H_n' .\end{multline*}
Since $\j_n\in L^q_{loc}$ for any $q<2$ and since $f_n \in W^{1,q}_{loc} (\mr^2)$ for all $q$, the last  term on the right-hand side converges as $\eta\to 0$ to the integral over $B_R$. Also integrating by parts, using the Jacobian structure and the decay of $f_n$ and $H_n'$, we have  $\int_{B_R} \np f_n \cdot \nab H_n' \to 0$ as $R \to +\infty$.  Therefore, letting  $\eta\to 0$  then $R\to + \infty$ in the above yields
   $$ W(\j_n,\indic_{\mr^2}) - W( \nab H_n',\indic_{\mr^2})   \ge 0.$$

Since $\Lambda_n\subset \E'$ by construction, we have $\supp(\nu_n)\subset \E$ and thus $\int \zeta\,d\nu_n = 0$.
Together with \eqref{wje},  we deduce in view of \eqref{idwn}  that
\begin{equation}\label{wjef}\limsup_{n \to \infty}\frac{1}{n} \(\w(x_1, \dots, x_n) - n^2 \I (\mo)   + \frac{n}{2}\log n \)  \le \frac{|\E|}{\pi}\int W(\j) \, dP(x,\j)+C\ep.\end{equation}\medskip
(\ed{Respectively $\le  \frac{|\E|}{\pi}\int \mathbb{W}(\nu) \, dQ(x,\nu)+C\ep$.)}

\noindent{\it Step 4: Existence of  $A_n$.}
We claim  that if $n$ is large enough and if $\j\in\Lp$ is such that
\begin{equation}\label{djnint}d_p(\j(\sqrt n x+\cdot), \jnint(\sqrt n x+\cdot))<\eta_1/2\end{equation}
for any $x\in \E\sm \Xi$ for some set $\Xi$  satisfying $|\Xi|<\eta_0|\E|$, then
\begin{equation}\label{distmes} d_{\B}\(\dashint_{\E}\delta_{x}\otimes\delta_{\theta_{\sqrt n x} \j}\,dx, \dashint_{\E}\delta_{x}\otimes\delta_{\theta_{\sqrt n x} \jnint}\,dx\) < C\ep(\lep +1).\end{equation}
This would follow immediately from Lemmas~\ref{etavar},~\ref{etadir} and~\ref{etaconv} if $\theta_{\sqrt n x}\jnint$ belonged to some compact set independent of $x\notin \Xi$ and $n$. In our case we note that if $x$ belongs to some $\tilde L\in \LKk$, where $K\in\cK_n$ and $0\le i\le N_\ep$, then
$$\jnint(x+\cdot) = \sigma_{m_K} J_{i,L}(\cdot + x - x_{\tilde L}). $$
Moreover,  since $J_i\in H_\ep$, from \eqref{defhep} it follows that
if $x-x_{\tilde L}\notin\Gamma(J_i)/\sqrt{m_K}$ then $\j':=\sigma_{m_K} J_{i}(\cdot + x - x_{\tilde L})\in G_\ep$. If in addition, $\dist(x,\p \tilde L)> \eta_0 R_\ep$, then  we deduce from \eqref{proxitronque} that
$d_p\(\jnint(x+\cdot),\j'   \)<\eta_1/2$ and
$d_p\(\j(x+\cdot), \j'  \)<\eta_1$.
   Lemma~\ref{etadir}  then yields $d_{\B}(\delta_{\jnint(x+\cdot)}, \delta_{\j'}  ) <\ep$
  and $d_{\B}(\delta_{\j(x+\cdot)},\delta_{\j'}  )     <\ep$
  thus
  $$d_{\B}(\delta_{\jnint(x+\cdot)},
  \delta_{\j(x+\cdot)})<2\ep.$$
  In view of Lemma~\ref{etaconv} we find
$$ d_{\B}\(\frac1{|\E|}\int_{\E\sm \tilde \Xi}\delta_{x}\otimes\delta_{\theta_{\sqrt n x} \j}\,dx, \frac1{|\E|}\int_{\E\sm \tilde \Xi}\delta_{x}\otimes\delta_{\theta_{\sqrt n x} \jnint}\,dx\) < C\ep(\lep +1),$$
where $\tilde \Xi$ is the union of $\Xi$ and of the union with respect to $0\le i\le N_\ep$, $K\in\cK_n$ and $\tilde L\in \LKk$  of $\frac{1}{\sqrt{n}}\(x_{\tilde L}+\Gamma(J_i)/\sqrt{m_K}\)$,  of $\frac{1}{\sqrt{n}}\{x\in\tilde L: \dist(x,\p \tilde L)\le  \eta_0 R_\ep\}$, and of $\E\sm\cup_{\cK_n }\frac{K}{\sqrt{n}}$. It turns out that $|\tilde \Xi|< C \eta_0$ if $n$ is large enough, $C$ being of course independent of $\ep$, and thus using Lemma~\ref{etavar} we deduce \eqref{distmes}. The claim is proved.

To prove the existence of the set $A_n$, we note that the currents $J_{i}$ used in constructing $\jnint$ depend on $\ep$ but are independent on $n$. Then they are truncated to obtain $J_{i,K}$ where the sidelengths of $L$ are in $[R_\ep/C,C R_\ep]$, i.e. in an interval independent of $n$. It follows at once that there exists $\delta>0$ such that  the points in $L$ may be perturbed by an amount $\delta$ so that for every $i$, $K$ and $\tilde L\in\LKk$ the perturbed $J_{i,L}^{pert}$ is at a distance  at most $\eta_1/4$ of $J_{i,K}$, for every $n$. Then in view of \eqref{bord} and \eqref{hik} it follows that for $n$ large enough the resulting $\j_n^{pert}$ will satisfy \eqref{djnint} for $x$ far enough from $\p \E'$, i.e. outside a set of proportion  relative to $|\E'|$ tending to $0$ as $n\to \infty$. We deduce that $\j_n^{pert}$ satisfies \eqref{distmes}, hence if $n$ is large enough
$$d_{\B}(P_{\j_n^{pert}}, P)< C\ep(\lep +1).$$
The same reasoning implies that if we let $\{x_1,\dots,x_n\}$ be the points in $\Lambda_n$ in original coordinates, then perturbing the points in $\Lambda_n$  by an amount $\delta>0$ small enough, i.e. perturbing the $x_i$'s by an amount $\delta/\sqrt n$ at most we obtain points $y_i$ such that $w_n(y_i)\le w_n(x_i) + \ep$. Since the ordering of the points is irrelevant, we let $S_n$ denote the set of permutations of $1\dots n$ and define
$$A_n = \{(y_1,\dots,y_n): \exists\, \sigma\in S_n,\,  |x_i - y_{\sigma(i)}|<\delta.\}$$
Then, given $\eta>0$,  from the previous discussion and choosing $\ep>0$ small enough we have for any $n$ and any $(y_1,\dots,y_n)\in A_n$ that \eqref{bsw} is satisfied and  the existence of $\j_n$ such that $\div \j_n = 2\pi \(\sum_i \delta_{{y_i}'} - \bm'\)$ and such that $\{P_{\j_n}\}_n$ satisfies \eqref{convpn}.

This concludes the proof of Proposition~\ref{construct}.

\noindent
{\sc Etienne Sandier}\\
Universit\'e Paris-Est,\\
LAMA -- CNRS UMR 8050,\\
61, Avenue du Général de Gaulle, 94010 Créteil. France\\
\& Institut Universitaire de France\\
{\tt sandier@u-pec.fr}\\ \\
{\sc Sylvia Serfaty}\\
UPMC Univ  Paris 06, UMR 7598 Laboratoire Jacques-Louis Lions,\\
 Paris, F-75005 France ;\\
 CNRS, UMR 7598 LJLL, Paris, F-75005 France \\
 \&  Courant Institute, New York University\\
251 Mercer st, NY NY 10012, USA\\
{\tt serfaty@ann.jussieu.fr}

\end{document}